\newcommand{\N}{\mathbb{N}} % natuerliche Zahlen
\theoremstyle{plain}
\newtheorem{theorem}{Theorem}[section]
\newtheorem{proposition}[theorem]{Proposition}
\newtheorem{corollary}[theorem]{Corollary}
\newtheorem{lemma}[theorem]{Lemma}
\theoremstyle{definition}
\newtheorem{remark}[theorem]{Remark}
\newtheorem{example}[theorem]{Example}
\def\eps{\varepsilon}
\def\N{\bbN}
\numberwithin{equation}{section}
\newcommand{\nn}{\nonumber}
\DeclareMathSymbol{\leqslant}{\mathalpha}{AMSa}{"36}
\DeclareMathSymbol{\geqslant}{\mathalpha}{AMSa}{"3E}
\DeclareMathSymbol{\doteqdot}{\mathalpha}{AMSa}{"2B}
\DeclareMathSymbol{\circlearrowright}{\mathalpha}{AMSa}{"08}
\DeclareMathSymbol{\subsetneq}{\mathalpha}{AMSb}{"28}
\DeclareMathSymbol{\supsetneq}{\mathalpha}{AMSb}{"29}
\renewcommand{\leq}{\;\leqslant\;}
\renewcommand{\geq}{\;\geqslant\;}
\newcommand{\dd}{{\rm d}}
\newcommand{\e}[1]{\,{\rm e}^{#1}\,}
\DeclareMathOperator*{\supp}{\text{supp}}
\newcommand{\upchi}{\raise 2pt \hbox{$\chi$}}
\newcommand{\caE}{{\mathcal E}}
\newcommand{\caF}{{\mathcal F}}
\newcommand{\caH}{{\mathcal H}}
\newcommand{\caM}{{\mathcal M}}
\newcommand{\caN}{{\mathcal N}}
\def\bbone{{\mathchoice {\rm 1\mskip-4mu l} {\rm 1\mskip-4mu l} {\rm 1\mskip-4.5mu l} {\rm 1\mskip-5mu l}}}
\newcommand{\bbE}{{\mathbb E}}
\newcommand{\bbN}{{\mathbb N}}
\newcommand{\bbP}{{\mathbb P}}
\newcommand{\bbR}{{\mathbb R}}
\newcommand{\de}{{\mathrm{d}}}
\newcommand{\one}{{\mathbbm{1}}}
\newcommand{\bbPh}{\hat{\bbP}}
\newcommand{\bbPw}{\Tilde{\bbP}}
\def\lt{\left}
\def\rt{\right}
\def\la{\langle}
\def\ra{\rangle}
\title{Enhanced Binding for a quantum particle coupled to scalar quantized field}
\author{Volker Betz}
\author{Tobias Schmidt}
\author{Mark Sellke}
\date{\today}
\begin{document}

\maketitle

\begin{abstract}
    Enhanced binding of a quantum particle coupled to a quantized field 
    means that the Hamiltonian of the particle alone 
    does not have a bound state, 
    while the particle-field Hamiltonian does. For the 
    Pauli--Fierz model, this is usually shown via the binding condition, 
    which works less well in the case of a linear coupling to a scalar 
    field. In particular, the case of a single particle linearly coupled to 
    a scalar field has been open so far. 
    Using a method relying on functional 
    integrals and the Gaussian correlation inequality, we obtain enhanced 
    binding for this case. From a statistical mechanics point of view, our 
    result describes a localization phase transition (in the strength of the pair potential) for a Brownian motion 
    subject to an external and an attractive pair potential.   
\end{abstract}

\section{Introduction and Results}
When a quantum particle is coupled to a quantized field, it
distorts the field around itself. When the particle is
accelerated, this distortion needs to move around with
the particle. As a consequence, more force is needed for a given amount of
acceleration. This effect is paraphrased by saying that the
effective mass $m_\text{eff}$ of the particle increases due to the 
coupling with the field. 

A  way to detect the increased effective mass experimentally would be to 
put the particle into an external potential with compact support that is too weak 
to bind the particle alone, and observe whether a steady state exists when the 
particle is coupled to the field. Mathematically, this amounts to a question 
about the existence of eigenfunctions. Consider a quantum particle of mass $m>0$ 
(not coupled to a field) with 
Hamiltonian $H_{\rm p} = -\frac{1}{2m} \Delta + \delta U$. Here, $U<0$ is a bounded potential with compact support and 
$\delta > 0$. In this case it is well known \cite{LaLi77}
that for $\delta$ small enough, 
no $L^2$-eigenvector (bound state) of $H_{\rm p}$ exists, while for sufficiently large $\delta$, 
bound states can be found. The absence of bound states can also be seen directly by Lieb-Thierring inequalities, see \cite{Sim79}. For the circular finite depth
well of radius $a>0$ the threshold where a bound state 
appears can be calculated and turns out to be equal to 
$\delta_{\rm c} = \frac{\pi^2}{8 m a^2}$ in units where $\hbar = 1$. 
As expected, $\delta_{\rm c}$ is decreasing in the mass $m$ of the particle. 
The task is now to show that for some $\delta < \delta_{\rm c}$, a bound 
exists when the particle is coupled to a quantum field. 
When this happens, one speaks of enhanced binding.

Enhanced binding has received considerable interest in the last 20 years. 
Many works treating enhanced binding rely on the 
{\em binding condition} introduced in the seminal work \cite{GrLiLo01}. 
Of those, the vast 
majority treat the Pauli-Fierz model describing an electron interacting 
with its electro-magnetic field. For other models such as the Fr\"ohlich 
polaron or the Nelson model, the use of the binding condition has been less successful.  
Relevant cases remained open where enhanced 
binding has not been proven so far. 
One of them is the one-particle Nelson 
model. In this paper, we prove enhanced binding for this model, using an alternative approach to the binding 
condition that is based on functional 
integrals and the Gaussian correlation inequality. We postpone 
a comparison of our results with the existing literature, and in particular with the methods using the binding condition, to the next section. Here, we  introduce the model and our results. 

The Nelson model describes a particle linearly coupled to a 
scalar quantum field, such as a charged particle in a polar crystal. Its Hamiltonian acts on $\caH = L^2(\bbR^d) \otimes \caF$, where $\caF$ is the bosonic Fock space, and is given by 
\begin{equation}
    \label{hamiltonian}
 H = H_{\rm p} + H_{\rm f} +  \alpha H_{\rm I}.
\end{equation}
As stated earlier,
\[
H_{\rm p} = -\frac12 \Delta + \delta U
\]
is the Hamiltonian of the quantum particle, where the 
potential $U$ is assumed to be Kato-decomposable
\cite{BeLoHi19}. We will 
mostly be concerned with the case of bounded $U$. The 
operators
\[
H_{\rm f} = \int \de k \omega(k) a^\dagger _k a_k,
\]
and
\[
H_{\rm I} = \int_{\bbR^d} \de k \frac{1}{2\sqrt{\omega(k)}} \lt( \hat{\varphi}(k) \e{-ik\cdot x}a^\dagger (k) + \hat{\varphi}(-k) \e{ik\cdot x} a(k) \rt).
\]
are the field energy and the particle-field interaction energy, respectively. The parameters $\omega:\bbR^d \to \bbR$
and $\varphi: \bbR^d \to \bbR$ are chosen such that $\omega$ 
and $\hat \varphi$ are radially symmetric, with 
 $\hat{\varphi}/ \omega \in L^2(\bbR^d)$ and
 $\hat{\varphi}/ \sqrt{\omega} \in L^2(\bbR^d)$. The 
 symbols $a^\dagger(k)$ and $a(k)$ represent the usual operator-valued distributions describing creation and annihilation operators defined on Fock space, satisfying the canonical commutation relations $[a_k,a^\dagger _{k'}] = \delta (k-k')$. Since we will immediately switch to the functional integral representation, we give no further information on these objects, but rather refer to \cite[Chapter 2]{HiLo20} or 
 \cite{HiSaSpSu12}.
 
%  To have a well defined model we require the external potential $U$ to be of Kato class 
% Denoting by $\caF$ the Fock space, the Hamiltonian for the corresponding model is given by 
% $$ H = H_p + H_f +  \alpha H_I$$
% acting on $\caH = L^2(\bbR^d) \otimes \caF$ where $H_p = -\Delta/2 + \delta U$ comprises the first two terms in \eqref{effective_energy}.
% The Hamiltonian for the field itself is
% \[
% H_f = \int \de k \omega(k) a^\dagger _k a_k,
% \]
% while that of the interaction between the field and the particle is
% \[
% H_I = \int_{\bbR^d} \de k \frac{1}{2\sqrt{\omega(k)}} \lt( \hat{\varphi}(k) \e{-ik\cdot x}a^\dagger (k) + \hat{\varphi}(-k) \e{ik\cdot x} a(k) \rt).
% \]
% Here, $a^\dagger _k$ and $a_k$ are the usual operator-valued distributions describing creation and annihilation operators defined on Fock space, satisfying the canonical commutation relations $[a_k,a^\dagger _{k'}] = \delta (k-k')$. For a thorough introduction we refer to \cite[Chapter 2]{HiLo20}.

The path integral formulation of the particle-field model emerges through a Feynman-Kac formula. We again refer to 
\cite{HiLo20} for a derivation and only state the correspondence here. We denote by $\bbE^\nu$ an expectation with respect to the probability measure $\nu$. In case no measure is given, we have $\bbE = \bbE^\bbP$, where $\bbP$ is the path measure of standard Brownian motion (i.e.\ Wiener measure) in $d$ dimensions. Moreover, $\bbE^x$ is an expectation w.r.t. Wiener measure started at $x \in \bbR^d$. With this notation at hand, take $f,g \in L^2(\bbR^d)$ and let $\Omega$ denote the Fock vacuum. With
\begin{equation}
    \label{field_integrated_pair_potential}
    \overline{W}(x,t) = -\frac{1}{2} \int_{\bbR^d} \de k \frac{| \hat{\varphi}(k)|^2}{2 \omega (k)} \cos (k \cdot x) \e{-|t|\omega(k)}
\end{equation}
for $x \in \bbR^d$, $t \in \bbR$ it holds that
\begin{equation}
    \label{fkn_formula}
    \la f \otimes \Omega , \e{-tH} g \otimes \Omega   \ra = \int_{\bbR^d} \de x f(x) \bbE^x \lt[ \e{ - \alpha \int_0 ^t \de r \int_0 ^t \de s \overline{W}(x_s -x_r, s-r) - \delta \int_0 ^t \de s U(x_s)}  g(x_t) \rt].
\end{equation}

The main technical result of our paper will actually be about 
the path measure $\bbPh_{\delta,\alpha,T}$ defined in \eqref{nelson_path_measure}, which corresponds to the right hand side of 
\eqref{fkn_formula}. In Theorem \ref{recurrence_half_time_theorem} below, we  prove that 
the distribution of $x_{T/2}$ under $\bbPh_{\delta,\alpha,T}$ does not converge vaguely to zero as $T \to \infty$. From this, we can rather easily infer the existence of a bound state. 

Theorem \ref{recurrence_half_time_theorem} is a statement about 
localization: for bounded $U$ and small enough $\delta > 0$, the distribution of $x_{T/2}$ under $\bbPh_{\delta,0,T}$ converges vaguely to zero. In 
this case, localization by self-interaction of the paths is achieved.  
This bears some resemblance to results about 
self-interacting random walks, see e.g.\ \cite{BoSc97} or 
\cite{BrSl95}. In these works, no single site 
potential is present; additionally, the pair interaction
is of mean field type, which implies that long-range interactions are as important as short-range ones. Then, large deviations techniques 
become available, which can not be used in 
our case. In fact, localization for $\delta = 0$ does not occur at any $\alpha$ in the type of models 
that we treat, see e.g.\ \cite{BeSp05}.

Let us now introduce our precise assumptions and results. Below, we will 
often speak of ground states, which are bound states whose eigenvalue is
equal to $\inf {\rm spec} H$. All our methods actually show existence of ground states and thus, a fortiori, of bound states.

Since $\overline{W}$ is radial and $\overline W(x,t) = \overline W(x,-t)$, there exists $W:\bbR_{\geq 0} \times \bbR_{\geq 0} \to \bbR$ so that $\overline W(x,t) = W(\|x\|,|t|)$. We will mostly use $W$ below. We make the following assumptions.

    \begin{enumerate}[label=\textbf{(A.\arabic*)},ref=(A.\arabic*)]
\item 
\label{potential_assumption}
$U: \bbR^d \rightarrow \bbR_-$ is radial and quasi-convex (i.e.\ $\|x\| \mapsto U(x)$ is increasing).  Further, it is non-vanishing in a $\epsilon$-neighborhood around $0$. Finally, $\lim_{\|x\| \rightarrow \infty} U(x) \in (U(y),0]$ with $\Vert y \Vert = \epsilon$.
\item 
\label{pair_potential_assumption}
$\overline W$ is jointly continuous and $x \mapsto \overline{W}(x,t)$ is quasi-concave for all $t \geq 0$; this means that 
$W$ is jointly continuous and
    $$x \mapsto W(x,t)$$
    is decreasing for all $t \geq 0$. 
Moreover, we require the time decay to be sufficiently fast such that 
\begin{equation}
\label{eq:time-decay-integrable}
\sup\limits_{T > 0} \Big| \int_0 ^T \de t\int_T ^\infty \de s ~ W \lt(\Vert x_t - x_s \Vert,|t-s| \rt) \Big| \le C_I < \infty \:\:\:\:\:\:\: \bbP-\text{a.s..}
\end{equation}
\end{enumerate}

For our enhanced binding result some additional decay properties of $W$ near $t=0$ are required.
    \begin{enumerate}[label=\textbf{(B)},ref=(B)]
\item There exists $l^\ast>0$, $R>0$ and $\xi >0$  so that for all $t \in [0, l^*]$, 
\label{local_decrease}
    \begin{equation}
    \label{decreasing_property}
        [0,R] \ni x \mapsto W(x,t) + \frac{1}{\xi} x^2
    \end{equation}
    is decreasing.
\end{enumerate}
Regarding Assumption~\ref{potential_assumption}, the condition $U<0$ together with quasi-convexity implies that $\lim_{\|x\| \rightarrow \infty} U(x) = u \in (U(0),0]$ exists. By a shift of energy we can assume without loss of generality that $u=0$.

\begin{example}
    Let us construct some examples that satisfy our assumptions. In terms of $U$, the prototypical example is $U = - \delta \one_{[0,r]}(\Vert \cdot \Vert)$ for some radius  $r>0$ and depth $\delta > 0$. For the field we present two examples:
    \begin{enumerate}
        \item Take $d \in \bbN$ and set  $\omega \equiv 1$. We are left with evaluating 
        \begin{equation}
            \nn
            \frac{1}{2} \e{-|t|} \int_{\bbR^d} \de k | \hat{\varphi}(k)|^2 \cos (k \cdot x) = \frac{1}{2} \e{-|t|} \int_{\bbR^d} \de k \widehat{\big( \varphi *\varphi\big)} (k)  \e{-i k \cdot x} = \frac{1}{2} \e{-|t|} \big( \varphi * \varphi \big) (x),
        \end{equation}
    since 
    $$| \hat{\varphi}(k)|^2  =  \hat{\varphi}(k)^2 = \widehat{ (\varphi * \varphi)  }(k)$$
    for $\varphi$ sufficiently nice. Now, the convolution of two log-concave functions remains a log-concave function \cite{BrLi74}. This implies that $\varphi * \varphi$ is quasi-concave, and so choosing $\varphi$ log-concave and symmetric with sufficiently fast decay at $\infty$ (take for example $\varphi (x) = \exp(-x^2/2)$) defines a model that we can treat. 
    %Note, however, that we cannot take $\varphi$ with compact support, as such models could never fulfill (\ref{decreasing_property}) for all $S>0$ (see however Theorem \ref{simple_gs}).
    \item The classical Nelson model in three dimensions is characterized by $d=3$ and $\omega(k) = |k |$. In order to get a pair potential fulfilling \ref{pair_potential_assumption}, we need to choose the cutoff function $\varphi$ carefully. Let us take $\hat{\varphi}(k) = \hat{\Gamma}(k) \sqrt{|k|}$ for $\Gamma$ decaying sufficiently fast. We find that
    \begin{equation}
        \label{nelson_pair_interaction}
        \overline{W}(x,t) = \frac{4 \pi}{|t|^3} \lt( \frac{1}{( \: \cdot \:^2  / t^2 + 1)^2}  * \Gamma * \Gamma  \rt) (x).
    \end{equation}
     By a similar argument as before, this function is quasi-concave in $x$ for each fixed $t$ whenever $\Gamma$ is. 
    Moreover, the infrared regularity property
    is fulfilled, which implies the bound (see equation (2.5.2) in \cite{HiLo20})
    $$\int_0 ^T \de t \int_T ^\infty \de s \big| W(\Vert x_t - x_s \Vert, |t-s|) \big| \le \int_{\bbR^3} \de k \frac{\hat{\varphi}(k)^2}{|k|^3} < \infty.$$
   Taking again $\Gamma(x) = \exp(-x^2 /2)$ as an example, it is seen from (\ref{nelson_pair_interaction}) or (\ref{field_integrated_pair_potential}) that \ref{local_decrease} is fulfilled. 
    \end{enumerate}
\end{example}

\begin{theorem}
    \label{main_thm}
    Assume \ref{potential_assumption}, \ref{pair_potential_assumption} and \ref{local_decrease}.
    Fix $\delta >0$ and let
    $$H(\alpha) = - \frac{\Delta}{2} + \delta U +  H_f + \alpha H_I.$$
     Then, there exists $\alpha^* > 0$ such that $H(\alpha)$ has a ground state for all $\alpha \geq \alpha^*$. 
\end{theorem}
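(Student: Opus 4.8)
\medskip
\noindent\textbf{Proof strategy.}
The plan is to reduce the statement to Theorem~\ref{recurrence_half_time_theorem} together with the Feynman--Kac correspondence \eqref{fkn_formula}, via the standard passage from non-degeneracy of the path measure at large times to the existence of a ground state. Fix $\alpha \geq \alpha^*$, abbreviate $H = H(\alpha)$ and $E = \inf{\rm spec}\, H$, pick a strictly positive Gaussian $g \in H^1(\bbR^d)$ (concretely, the function entering the definition \eqref{nelson_path_measure}), set $\Psi_0 = g \otimes \Omega \in \caH$, and consider the imaginary-time vectors
\[
    \Phi_T \;=\; \frac{\e{-TH/2}\Psi_0}{\Vert \e{-TH/2}\Psi_0 \Vert}, \qquad T > 0.
\]

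First I would record two a priori estimates. Writing $F(T) = \Vert \e{-TH/2}\Psi_0\Vert^2 = \langle \Psi_0, \e{-TH}\Psi_0 \rangle$, the function $\log F$ is convex, and since $\e{-tH}$ is positivity improving and $g>0$ one has $\tfrac1T\log F(T) \to -E$; consequently $\langle \Phi_T, H \Phi_T \rangle = -(\log F)'(T)$ is non-increasing in $T$ with limit $E$, so in particular $\sup_{T>0}\langle \Phi_T, H\Phi_T\rangle < \infty$. Combining this with the standard relative bound $|\alpha\langle \psi, H_I\psi\rangle| \leq \tfrac12 \langle \psi, H_f\psi\rangle + c_\alpha \Vert\psi\Vert^2$, which is where the hypotheses $\hat\varphi/\sqrt\omega \in L^2(\bbR^d)$ and $\hat\varphi/\omega \in L^2(\bbR^d)$ enter, and with the boundedness of $U$, one gets the uniform bound $\sup_{T>0}\langle \Phi_T, (-\tfrac12\Delta + H_f)\Phi_T\rangle < \infty$. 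Hence $\{\Phi_T\}_{T>0}$ is bounded in the form domain of $-\Delta + H_f$, and along a subsequence $T_n \to \infty$ we have $\Phi_{T_n} \rightharpoonup \Phi$ weakly in $\caH$.

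Next I would feed in Theorem~\ref{recurrence_half_time_theorem}: it provides $R, c > 0$ and, after refining $(T_n)$, times $T_n \to \infty$ with $\bbPh_{\delta,\alpha,T_n}(\Vert x_{T_n/2}\Vert \leq R) \geq c$. Applying \eqref{fkn_formula} to $\langle \Psi_0, \e{-TH/2}(\one_{B_R}\otimes\id_\caF)\e{-TH/2}\Psi_0 \rangle$ --- the Fock vacuum occurs only at the two time-endpoints, so the field integrates out to $\overline W$ over all of $[0,T]^2$ while $\one_{B_R}$ is evaluated at the midpoint $x_{T/2}$ --- and recalling the definition \eqref{nelson_path_measure}, one obtains the identity $\langle \Phi_T, (\one_{B_R}\otimes\id_\caF)\Phi_T\rangle = \bbPh_{\delta,\alpha,T}(\Vert x_{T/2}\Vert \leq R)$, hence $\langle \Phi_{T_n}, (\one_{B_R}\otimes\id_\caF)\Phi_{T_n}\rangle \geq c$ for all $n$. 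Since $\one_{B_R}(-\Delta+1)^{-1/2}$ is compact on $L^2(\bbR^d)$ by Rellich, and --- see below --- the field component of $\{\Phi_{T_n}\}$ is precompact as well, one may pass to the limit to conclude $\langle \Phi, (\one_{B_R}\otimes\id_\caF)\Phi\rangle \geq c$, so in particular $\Phi \neq 0$. Finally, $\psi \mapsto \langle \psi, (H-E)\psi\rangle$ is a non-negative, weakly sequentially lower semicontinuous quadratic form on $\caH$, so $\langle \Phi, (H-E)\Phi\rangle \leq \liminf_n \langle \Phi_{T_n}, (H-E)\Phi_{T_n}\rangle = 0$; hence $(H-E)^{1/2}\Phi = 0$, i.e.\ $H\Phi = E\Phi$ with $\Phi \neq 0$, and $\Phi$ is a ground state of $H(\alpha)$.

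The step I expect to be the main obstacle is the non-vanishing of the weak limit $\Phi$ --- that is, ruling out that the mass of $\Phi_{T_n}$ escapes, not in the particle coordinate (which is exactly what Theorem~\ref{recurrence_half_time_theorem} forbids), but into the field as an unbounded number of arbitrarily soft bosons: the infrared problem. Because $\one_{B_R}\otimes\id_\caF$ is not compact, the energy bound of the second paragraph has to be supplemented by a $T$-independent bound on the boson number together with some control of their spatial localization; in the functional-integral picture this is precisely what the integrability \eqref{eq:time-decay-integrable} of $\overline W$ (infrared regularity, $\hat\varphi/\omega \in L^2$) supplies, and given such a bound $\one_{B_R}\otimes\id_\caF$ can be replaced, up to an error uniform in $T$, by a compact operator, after which the argument closes. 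Alternatively, one could invoke the known HVZ-type theorem for the Nelson model --- by which $H(\alpha)$ has a ground state as soon as $\inf{\rm spec}\, H(\alpha) < \inf{\rm spec}(-\tfrac12\Delta + H_f + \alpha H_I)$ --- and derive this strict inequality from the non-vanishing of the midpoint law by a localization argument; or one could cite a criterion of the form ``$H(\alpha)$ has a ground state if and only if the Gibbs measures $\bbPh_{\delta,\alpha,T}$ are tight'' and upgrade tightness of the midpoint marginals to tightness of the full path measures by means of \eqref{eq:time-decay-integrable}.
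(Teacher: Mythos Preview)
Your first step --- using Theorem~\ref{recurrence_half_time_theorem} and Feynman--Kac to get $\langle \Phi_T,(\one_K\otimes\id_\caF)\Phi_T\rangle\geq c$ --- coincides with the paper's, but from there your route is genuinely different, and the gap you yourself flag in the last paragraph is real and not closed. Passing from $\langle\Phi_{T_n},(\one_{B_R}\otimes\id_\caF)\Phi_{T_n}\rangle\geq c$ to $\Phi\neq 0$ for the weak limit requires ruling out that mass escapes into the field, and none of your three proposed fixes is carried out: uniform number-operator and soft-boson localization bounds are precisely the hard analytic input of G\'erard or Griesemer--Lieb--Loss, and the binding-condition alternative is exactly what Section~\ref{sec: binding condition} explains is difficult for the one-particle Nelson model.

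The paper avoids the infrared problem entirely by never taking weak limits of vectors in $\caH$. It proves and invokes a purely scalar criterion (Theorem~\ref{ground_state_options}): for a positivity-improving semigroup, $H$ has a ground state if and only if $\liminf_T\gamma_\psi(T)>0$ for some nonnegative $\psi$, where $\gamma_\psi(T)=\langle\psi,e^{-TH}\psi\rangle/\langle\psi,e^{-2TH}\psi\rangle^{1/2}$. The proof is a short spectral-measure argument (Proposition~\ref{atom}): the condition says exactly that $\mu_\psi$ has an atom at $\inf\supp\mu_\psi$, and positivity-improvement then forces this atom to sit at $\inf\mathrm{spec}\,H$. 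Taking $\psi=\one_K\otimes\Omega$, the paper uses Feynman--Kac together with the starting-point comparisons of Propositions~\ref{zero_domination} and~\ref{prop_zero_max} to bound $\langle e^{-TH}\psi,\psi\,e^{-TH}\psi\rangle/\Vert e^{-TH}\psi\Vert^2$ below by a constant times $\bbPh_{\delta,\alpha,2T}(x_T\in K)$, and one further pointwise estimate upgrades this to $\liminf\gamma_\psi>0$.

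The conceptual point is that $\gamma_\psi$ involves only vacuum-to-vacuum matrix elements of $e^{-tH}$, which is exactly what \eqref{fkn_formula} computes; no control of the Fock-space content of $e^{-TH/2}\Psi_0$ is ever needed. Your third suggested alternative --- a criterion of the form ``ground state iff the Gibbs measures do not vanish'' --- is essentially Theorem~\ref{ground_state_options}; following that thread would have led you to the paper's argument.
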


\begin{remark}
\hfill
    \begin{enumerate}
        \item One would naively expect that $U(x)= \delta \one_{[0,\epsilon]}(\Vert x \Vert)$ (for any $\epsilon,\delta >0$) is the only external potential that one needs to treat. Indeed, for any $U$ satisfying \ref{potential_assumption}, there exists $\Tilde{U} = \delta \one_{[0,\epsilon]}(\Vert x \Vert)$ so that $U \ge \Tilde{U}$. While the implication
    $$ - \frac{\Delta}{2}+ \Tilde{U} \text{ has a ground state } \implies - \frac{\Delta}{2}+ U \text{ has a ground state }$$
    follows from classical perturbation theory, we do not know how to similarly argue in case the particle is coupled to the field. The difficulty here is that a spectral gap need not exist. 
        \item Our proof gives explicit asymptotic behaviour for $\alpha$ as $\delta \downarrow 0$. It is shown in Lemma \ref{recurrence_lemma} that $\alpha \ge O(\delta^{-2})$ suffices in case $\delta$ is small.
    \end{enumerate}
    %\mscomment{Is there a reason to start switching between $U$ and $V$? I feel like it adds unnecessary confusion, especially since the next section says $V=-U$ at the start (it wasn't clear to me how permanent this choice was intended to be).} ts: this change is supposed to be permanent once we only work in the probabilistic representation (there is an additional - in the FKN formula, which is annoying to permanently carry around). So there should just be one point where we replace U -> V. I forgot to change $V$ to $U$ here from previous notation...
\end{remark}
Condition (\ref{eq:time-decay-integrable}) is related to the  infrared regularity of the quantum model. There are physically relevant examples where $H_p$ has a ground state which vanishes as soon as $\alpha >0$ in case of infrared singularity (e.g. Theorem 2.63 in \cite{HiLo20} and \cite{HiMa22} for a detailed discussion). Moreover condition \ref{local_decrease} is not very restrictive, and can be relaxed further in exchange for additional assumptions on $H_p$.
   
\begin{proposition}
\label{simple_gs}
    Fix $\delta >0$, assume \ref{potential_assumption}, \ref{pair_potential_assumption}and instead of \ref{local_decrease} assume that $H_p$ has a spectral gap. Let
    $$H(\alpha) = - \frac{\Delta}{2} + \delta U +  H_f + \alpha H_I.$$
     Then, $H(\alpha)$ has a ground state for all $\alpha \geq 0$.
\end{proposition}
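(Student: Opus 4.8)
The plan is to follow the scheme used for Theorem~\ref{main_thm}. By the argument that deduces Theorem~\ref{main_thm} from Theorem~\ref{recurrence_half_time_theorem} --- which uses only \ref{potential_assumption}, \ref{pair_potential_assumption} and the Feynman--Kac--Nelson formula \eqref{fkn_formula} (in particular the decay \eqref{eq:time-decay-integrable}), not \ref{local_decrease} --- it suffices to show that, for each fixed $\alpha\geq 0$ and $\delta>0$, the law of $x_{T/2}$ under the path measure $\bbPh_{\delta,\alpha,T}$ does not converge vaguely to zero as $T\to\infty$. Observe first that \ref{potential_assumption} forces $\inf\mathrm{spec}_{\mathrm{ess}}(H_{\rm p})=0$, so the spectral gap hypothesis is precisely that $E_0:=\inf\mathrm{spec}(H_{\rm p})<0$ is a simple, isolated eigenvalue, with strictly positive ground state $\psi_0$ and gap $\gamma>0$ to the rest of the spectrum.

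The case $\alpha=0$ is classical. By the Feynman--Kac formula for $H_{\rm p}$, the law of $x_{T/2}$ under $\bbPh_{\delta,0,T}$ has a density proportional to $y\mapsto\bigl(\e{-(T/2)H_{\rm p}}f\bigr)(y)\,\bigl(\e{-(T/2)H_{\rm p}}g\bigr)(y)$ for suitable positive boundary data $f,g$; by the spectral decomposition and the gap $\gamma$ this converges, after normalisation, to $\psi_0(y)^2\,\dd y$ at rate $\e{-\gamma T/2}$. Hence there are $K_0,c_0>0$ and $T_0<\infty$ with $\bbPh_{\delta,0,T}(\Vert x_{T/2}\Vert\leq K_0)\geq c_0$ for all $T\geq T_0$: the $\alpha=0$ marginal is tight and does not vaguely vanish.

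For $\alpha>0$ the idea is to use the Gaussian correlation inequality to show that switching on the attractive self-interaction only reinforces this localisation. Here $\bbPh_{\delta,\alpha,T}$ arises from $\bbPh_{\delta,0,T}$ by the reweighting $\e{\alpha\Phi_T}$, $\Phi_T:=-\int_0^T\!\int_0^T\overline W(x_s-x_r,s-r)\,\dd r\,\dd s$. Because $-\overline W$ is a positive-definite kernel on $\bbR^d\times\bbR$ --- which is exactly why this term arises by integrating out the Euclidean free field in \eqref{fkn_formula} --- one has $0\leq\Phi_T<\infty$ together with the representation $\e{\alpha\Phi_T}=\bbE^{\caG}\bigl[\exp\bigl(\sqrt{2\alpha}\!\int_0^T\phi(x_s,s)\,\dd s\bigr)\bigr]$, where $\bbE^{\caG}$ averages over a centred Gaussian field $\phi$ independent of the Brownian motion. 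Working in the joint Gaussian space of $(x_{\cdot},\phi)$, passing to a fine time-skeleton $0<t_1<\dots<t_n<T$ and using the quasi-concavity in \ref{pair_potential_assumption}, one argues --- as in the proof of Theorem~\ref{recurrence_half_time_theorem}, but here only needing monotonicity in $\alpha$ rather than a quantitative localisation estimate --- that reweighting by the even, quasi-concave factors involved cannot shrink the mass of the symmetric convex event $\{\Vert x_{T/2}\Vert\leq K_0\}$; that is, $\bbPh_{\delta,\alpha,T}(\Vert x_{T/2}\Vert\leq K_0)\geq c\,\bbPh_{\delta,0,T}(\Vert x_{T/2}\Vert\leq K_0)$ with $c>0$ independent of $T$. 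Together with the previous paragraph this yields $\limsup_{T\to\infty}\bbPh_{\delta,\alpha,T}(\Vert x_{T/2}\Vert\leq K_0)\geq c\,c_0>0$, so the $x_{T/2}$-marginal does not vaguely vanish and $H(\alpha)$ has a ground state.

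The crux, and the main obstacle, is this last correlation estimate. The two weights defining $\bbPh_{\delta,\alpha,T}$ --- the occupation-type factor $\e{-\delta\int_0^T U(x_s)\,\dd s}$ coming from $\delta U$ and the self-interaction factor $\e{\alpha\Phi_T}$ --- are \emph{not} individually quasi-concave functions of the Brownian path, so the Gaussian correlation inequality does not apply to them directly; the real work is to pass to the Gaussian coordinates (via the auxiliary-field representation above, conditioning on finitely many path values, and a limiting argument) in which the relevant events and weights do become even and quasi-concave. This is the technical heart shared with Theorem~\ref{recurrence_half_time_theorem}; it is lighter here because the spectral gap already supplies a localised baseline at $\alpha=0$, so that only the monotone comparison in $\alpha$ is needed --- which is also why \ref{local_decrease} can be dropped.
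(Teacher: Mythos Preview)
Your proposal takes a route that differs from the paper's, and the key step has a real gap. The paper does \emph{not} compare $\bbPh_{\delta,\alpha,T}$ to $\bbPh_{\delta,0,T}$. Instead it uses the spectral gap in exactly one place, Lemma~\ref{b_exp_growth}: by a single application of GCI with the centred Gaussian measure $\bbP$,
\[
\bbE^{\bbPh_{0,\alpha,T}}\!\big[\e{\delta V_0^T}\big]\;\geq\;\bbE^{\bbP}\!\big[\e{\delta V_0^T}\big]\;\geq\;\e{cT},
\]
the last inequality coming from the spectral gap of $H_{\rm p}$. From there the paper runs \emph{the entire machinery of Sections~\ref{first_confinement}--\ref{section_proof_recurrence_half_time_theorem}} (Theorem~\ref{first_recurrence_theorem}, Proposition~\ref{approx_markov}, Theorem~\ref{recurrence_half_time_theorem}) at the fixed value of $\alpha$, using only \ref{potential_assumption} and \ref{pair_potential_assumption}. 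No monotone comparison in $\alpha$ is ever established or needed.

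Your Step~2, the inequality $\bbPh_{\delta,\alpha,T}(\|x_{T/2}\|\le K_0)\ge c\,\bbPh_{\delta,0,T}(\|x_{T/2}\|\le K_0)$, is \emph{not} a consequence of GCI. GCI says that reweighting a \emph{centred Gaussian} measure by symmetric quasi-concave factors increases the mass of symmetric convex sets; it says nothing about comparing two different reweightings of the same Gaussian. Concretely, $\bbPh_{\delta,0,T}$ is not Gaussian, and the desired bound would amount to a ``conditional'' correlation inequality $\mu(fgh)\,\mu(f)\ge\mu(fg)\,\mu(fh)$ for symmetric quasi-concave $f,g,h$, which is not what GCI provides. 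What GCI \emph{does} give, $\bbPh_{\delta,\alpha,T}(\|x_{T/2}\|\le K_0)\ge\bbP(\|x_{T/2}\|\le K_0)$, is useless here since the right side tends to $0$. Your auxiliary-field workaround does not rescue this: in the enlarged variables $(x,\phi)$ the density $\exp\big(\sqrt{2\alpha}\int_0^T\phi(x_s,s)\,\dd s\big)$ is neither symmetric under $(x,\phi)\mapsto(-x,-\phi)$ nor quasi-concave, so GCI does not apply in the joint space either. Finally, the proof of Theorem~\ref{recurrence_half_time_theorem} contains no ``monotonicity in $\alpha$'' argument to borrow; it works at a fixed $\alpha$ starting from the exponential growth input.
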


Let us end this first section by giving a brief proof summary and by discussing the relations to statistical mechanics. 
Our assumptions guarantee that we can apply the method of  \textit{Gaussian domination} introduced recently by one of the authors in \cite{Se22} (see also \cite{BeSchSe23}). This technique allows us to replace the path measure induced by the Nelson model with a Gaussian one for the purpose of estimating fluctuations. This makes explicit calculations possible. A key observation here is that this simpler measure lets us trade the field coupling strength $\alpha$ for a larger multiplicative constant $\Tilde{\delta}(\alpha) > \delta$ on the external potential (see the proof of Lemma \ref{recurrence_lemma}). We then deduce from this that the particle stays close to the origin with positive probability at time $T/2$. This is the content of the following Theorem. For the definition of $\bbPh_{\delta,\alpha,T}$ we refer to equation (\ref{nelson_path_measure}).

\begin{theorem}
\label{recurrence_half_time_theorem}
    Fix $\delta >0$ and
    assume \ref{potential_assumption}, \ref{pair_potential_assumption}. In addition, assume  \ref{local_decrease} or that $H_p$ has a spectral gap.
    Then, there exists $\alpha^* > 0$, $c>0$ and a symmetric and compact set $K$ such that for all $\alpha \ge \alpha^*$
    $$\liminf\limits_{T \rightarrow \infty} \bbPh_{\delta,\alpha,T} \big( x_{T/2} \in K \big)  > c.$$
\end{theorem}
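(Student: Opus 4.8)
The plan is to deduce the theorem from a single quantitative estimate — the content of Lemma~\ref{recurrence_lemma} — under Assumption~\ref{local_decrease}, together with a softer argument under the spectral gap hypothesis. By construction (see \eqref{nelson_path_measure} and \eqref{fkn_formula}), $\bbPh_{\delta,\alpha,T}$ is the law of Brownian motion reweighted by $\exp\big(-\alpha\int_0^T\!\int_0^T \overline{W}(x_s-x_r,s-r)\,\dd r\,\dd s - \delta\int_0^T U(x_s)\,\dd s\big)$; since $\overline{W}$ is (near the time diagonal) negative and $U\le 0$, both factors reward, respectively, clumping of the path and sitting inside the well of $U$. The goal is a lower bound on $\bbPh_{\delta,\alpha,T}(x_{T/2}\in K)$ that is uniform in $T$, and the two effects must be made to cooperate: a strong pair interaction forces the path to be a tight clump, and a tight clump is then held near the origin by $U$ even though $\delta$ is fixed and small.

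The heart of the matter (Lemma~\ref{recurrence_lemma}, Assumption~\ref{local_decrease}) is to \emph{trade} the coupling $\alpha$ for a boosted single-site strength. First, Assumption~\ref{local_decrease} lets us bound the attractive pair interaction from below, near the time diagonal, by a quadratic Ornstein--Uhlenbeck-type interaction of strength proportional to $\alpha$, namely $\tfrac{\alpha}{\xi}\|x_s-x_r\|^2$ on the region $|s-r|\le l^*$, $\|x_s-x_r\|\le R$; the contribution of the region $|s-r|>l^*$ is controlled uniformly in $T$ by \eqref{eq:time-decay-integrable}, and the region $\|x_s-x_r\|>R$ only helps. Second — and this is where \emph{Gaussian domination} from \cite{Se22} (see also \cite{BeSchSe23}) enters — one represents the resulting Gaussian-type pair term by an auxiliary centered Gaussian field, writing $\bbPh_{\delta,\alpha,T}$ as a mixture over the field of measures carrying only a \emph{linear} tilt, so that the relevant comparison is against a genuinely Gaussian reference measure on path space. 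The Gaussian correlation inequality, applied to the convex symmetric events $\{\sup_{[0,T]}\|x_s\|\le R\}$ and their restrictions to $[0,T/2]$ and $[T/2,T]$, then converts the quadratic confinement of strength $\alpha$ into the statement that, for the purpose of lower bounding localization, $\delta U$ may be replaced by $\tilde\delta(\alpha)U$ with $\tilde\delta(\alpha)\to\infty$ as $\alpha\to\infty$. Keeping track of the constants gives $\tilde\delta(\alpha)\gtrsim\alpha$ up to $T$-independent factors, which is what yields the scaling $\alpha\ge O(\delta^{-2})$ quoted in the introduction once the recurrence threshold below is known to behave like $\delta^{-2}$.

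It then remains to show that Brownian motion reweighted by $\exp(-\tilde\delta\int_0^T U(x_s)\,\dd s)$ satisfies $\liminf_{T\to\infty}$ of its probability that $x_{T/2}\in K$ is positive, for $\tilde\delta$ large and a fixed symmetric compact set $K$. Since $\tilde\delta U$ still satisfies Assumption~\ref{potential_assumption}, for $\tilde\delta$ large enough $-\tfrac12\Delta+\tilde\delta U$ has a ground state $\psi_{\tilde\delta}$ (classical, cf.\ \cite{LaLi77}); the time-$T/2$ marginal of the reweighted measure converges as $T\to\infty$ to the probability density proportional to $\psi_{\tilde\delta}^2$, which charges some fixed compact set with mass bounded below, and the claim follows by Fatou. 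Under the alternative hypothesis that $H_{\rm p}=-\tfrac12\Delta+\delta U$ already has a spectral gap, $H_{\rm p}$ itself has a ground state and the same reasoning applies directly to $\bbPh_{\delta,0,T}$; a monotonicity/domination argument — the $\alpha$-term being attractive — shows that the localization probability cannot decrease when the pair interaction is switched on, which gives the conclusion for all $\alpha\ge0$ (this is exactly Proposition~\ref{simple_gs}) and in particular for $\alpha\ge\alpha^*$.

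The main obstacle is the middle step: making the Gaussian-domination comparison run in the direction that \emph{lower} bounds $\bbPh_{\delta,\alpha,T}(x_{T/2}\in K)$, verifying the positive-definiteness needed both to represent the pair term by a Gaussian field and to invoke the Gaussian correlation inequality, and — most delicately — controlling all the time-tails (the off-diagonal part of $\overline{W}$ via \eqref{eq:time-decay-integrable}, and the part of the quadratic bound from Assumption~\ref{local_decrease} that fails for large spatial separations) \emph{uniformly in $T$}, since the conclusion is anchored at the single interior time $T/2$ rather than at the endpoints of the interval.
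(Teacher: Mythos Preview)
There is a genuine gap. The Gaussian-domination step (Lemma~\ref{recurrence_lemma}) does \emph{not} produce a comparison of the form $\bbPh_{\delta,\alpha,T}(x_{T/2}\in K)\geq \bbPh_{\tilde\delta,0,T}(x_{T/2}\in K)$; it only yields exponential growth of the \emph{partition function}, $\bbE^{\bbPh_{0,\alpha,T}}[\e{\delta V_0^T}]\ge \e{cT}$. The GCI comparison is against the block--OU measure $\bbPw_{\beta,T,l}$ of \eqref{dom_measure}, and under that measure $x_{T/2}$ is a sum of $\sim T/l$ independent Gaussian increments, hence delocalizes as $T\to\infty$. The single-site potential $\tilde\delta V$ enters only after discretizing and rescaling at the level of the normalization constant (see the proof of Lemma~\ref{recurrence_lemma}); it never gives a pathwise comparison of the event $\{x_{T/2}\in K\}$. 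Likewise, your ``monotonicity'' claim in the spectral-gap case is not a consequence of GCI: the inequality goes the same way for numerator and denominator in the ratio defining $\bbPh_{\delta,\alpha,T}(x_{T/2}\in K)$, so there is no reason the ratio should increase with $\alpha$.

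What you are missing is the entire passage from ``the path spends a positive fraction of time in $K$'' to ``$x_{T/2}\in K$ with positive probability''. In the paper this is the bulk of the work: from the partition-function growth one first deduces (Theorem~\ref{first_recurrence_theorem}) that $\bbPh_{\delta,\alpha,T}(\int_0^T\one_K(x_s)\,\dd s>cT)\ge 1-\e{-cT}$. One then performs an excursion decomposition around $T/2$: letting $i$ be the last integer before $T/2$ with $x$ visiting $K$ on $[i,i+1]$ and $j$ the first such integer after $T/2$, an approximate Markov property (Proposition~\ref{approx_markov}, proved via \eqref{eq:time-decay-integrable} and the starting-point comparisons of Propositions~\ref{zero_domination}--\ref{prop_zero_max}) gives $\bbPh_{\delta,\alpha,T}(D_i\cap K_{[i+1,j]}\cap D_j)\le C\e{-c(j-i)}$, so $j-i$ is tight uniformly in $T$. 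Only then, conditioning on the bounded window $[i,j]$ and using the same starting-point comparisons once more, does one obtain a $T$-uniform lower bound on $\bbPh_{\delta,\alpha,T}(x_{T/2}\in K)$. Your proposal stops before any of this and treats the reduction to the single-site problem as accomplished when it has not been.
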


Knowing the statement of Theorem \ref{recurrence_half_time_theorem}, it is a consequence of spectral theory and the quasi-concavity of $U,W$ that a unique strictly positive ground state for the corresponding model exists; see Theorem \ref{ground_state_options}. While this result is known in the literature in a variety of flavours, we provide a proof for the reader's convenience. 

Knowing the existence (and uniqueness) of a ground state $\Psi$, 
we can derive the limiting distribution of $x_{T/2}$ as $T \to \infty$.
Spectral theory guarantees the convergence 
\[
\lim_{t \to \infty} \e{-t H} g \otimes \Omega
= \langle g \otimes \Omega, \Psi \rangle \Psi
\]
for all $g \in L^2(\bbR^d)$ . 
Smoothing properties of the semigroup 
also allow the choice of $g = \delta_0$, but if one wants to 
avoid technical difficulties, the easiest way is to replace 
$\bbPh_{\delta,\alpha,T}$ by a measure where also the 
starting point of the Brownian motion is smeared out. All 
arguments will still work, indeed see Propositions \ref{zero_domination} and \ref{prop_zero_max}. In any case, we get 
for any bounded $f \in L^2(\bbR^d)$ that

% \tscomment{is the double tensor product here correct?}
\[
\begin{split}
& \lim_{T \to \infty} \langle \delta_0 \otimes \Omega, \e{-TH/2} (f \otimes  \Omega) \e{-TH/2} (g \otimes \Omega) \rangle \\
= \,\, &  
\lim_{T \to \infty}
\langle \e{-TH/2} (\delta_0  \otimes \Omega), (f \otimes \Omega) \e{-TH/2} (g \otimes \Omega) \rangle 
\\ 
= \,\, & 
\langle \delta_0 \otimes \Omega, \Psi \rangle \langle 
g \otimes \Omega, \Psi \rangle \langle \Psi, (f \otimes \Omega) \Psi \rangle.
\end{split}
\]
By the Feynman-Kac formula, we have 
\[
\bbPh_{\delta, \alpha,T}(f(x_{T/2})) = 
\frac{\langle \delta_0 \otimes \Omega, \e{-TH/2} (f\otimes \Omega) \e{-TH/2} (\bbone_B \otimes \Omega) \rangle}{\langle \delta_0 \otimes \Omega,  \e{-TH} (\bbone_B \otimes \Omega) \rangle}, 
\]
and we see that the distribution of $x_{T/2}$ under 
$\bbPh_{\delta, \alpha,T}$ converges to $\psi^2(x) \dd x$, 
where $\psi^2$ is $\Psi^2$ with the Fock space component integrated out, defined by the equation 
$\langle \psi, f \psi \rangle_{\bbR^3} = 
\langle \Psi, (f \otimes \Omega) \Psi \rangle_{L^2(\bbR^3) \otimes \caF}$.

From similar considerations, one can see that in cases where $H_p$ has no bound state, we have 
 $$
 \limsup\limits_{T \rightarrow \infty} \bbPh_{\delta,0,T} \big( x_{T/2} \in K \big)  = 0.
 $$

Let us also mention that the proofs of Proposition \ref{simple_gs} and Theorem \ref{main_thm} are identical, up to the first Lemma. These Lemmata (Lemma \ref{recurrence_lemma} and Lemma \ref{b_exp_growth}) allow us to (essentially) deduce exponential growth of (\ref{fkn_formula}), which suffices to show that the particle paths perturbed by field and external potential spend at least some fraction of time close to the origin. Having this property, assumptions \ref{potential_assumption} and \ref{pair_potential_assumption} suffice to extend this confinement result to the path at time $T/2$, which yields the existence of a ground state.

The remainder of this article is organized as follows: in Section 
\ref{sec: binding condition}  we review some of the literature on the enhanced binding problem, and place our contribution into this context. In Section \ref{gci_intro} we introduce the necessary notation and the main tools for our proofs, mainly the Gaussian correlation inequality (GCI) and its consequences. In Section \ref{first_confinement} we use either set of assumptions to derive a proof for Theorem \ref{first_recurrence_theorem}, which is a preliminary confinement result. We stress once again that \textit{Lemma \ref{recurrence_lemma} is the only place where \ref{local_decrease} is needed additionally instead of just \ref{pair_potential_assumption}}. We continue in Section \ref{section_proof_recurrence_half_time_theorem} with the proof of Theorem \ref{recurrence_half_time_theorem}, which is then used in the next section to prove Theorem \ref{main_thm} (and Proposition \ref{simple_gs}). In Section \ref{approx_markov_section} we provide a proof for Proposition \ref{approx_markov}.

\section{Other approaches for showing the existence of bound states} \label{sec: binding condition}
\subsection{The binding condition}
The binding condition was first discussed in \cite{GrLiLo01}. It 
reduces the question about the existence of ground states of particle-field-systems 
to a comparison between the infima of the spectrum of different operators.  
In our context it says that checking the inequality  
\[
\inf {\rm spec} H < \inf {\rm spec} H_{\rm f} + \inf {\rm spec} H_p 
\]
is sufficient for proving the existence of a ground state of $H$. 
The original paper \cite{GrLiLo01} only deals with the 
Pauli-Fierz model of quantum electrodynamics, which differs from 
the Nelson model in that the quantum field is a vector field and 
couples to the momentum of the particle, not its position. 
In \cite{HiMa22}, Hiroshima 
and Matte show that the binding condition is sufficient for the
existence of ground states in the Nelson model as well. While the 
original purpose of the binding condition was to prove that ground 
    states {\em persist} when the particle is coupled to a field, it 
was soon noticed that this technique can also be used to prove 
enhanced binding. 

The first paper \cite{HiSp01} on enhanced binding still does not use the 
binding condition, but also does not treat the full Pauli-Fierz model, instead 
using the dipole approximation. 
Then, it is possible to find a unitary 
transformation such that $H_{\text{PF}} \approx -\frac{1}{2 m_\text{eff} } \Delta + 
U$ is essentially a rigorous statement. With correct assumptions on $U$ 
and an explicit formula for $m_\text{eff}$ increasing in the coupling 
strength, the existence of a ground state is then seen directly. 

In the context of enhanced binding, the binding condition was first used in  
\cite{HaVoVu03}, where the result is proved in the limit of small enough coupling 
constant. In our notation, the statement is that for sufficiently small $\alpha$, 
one can find $\delta > 0$ such that for this $\delta$, $H(\alpha)$ 
has a ground state but $H(0)$ does not. This result was improved in various ways
in the years after, until K\"onenberg and Matte 
\cite{KoMa13} used a variant of an argument from \cite{ChVoVu03} and proved  
enhanced binding in the Pauli-Fierz model for all coupling 
constants, also in the semi-relativistic case. This settled the 
problem of enhanced binding in the Pauli-Fierz model. For  
references on the above mentioned intermediate results, we refer to the introduction 
of \cite{KoMa13}. 

For the Nelson model, on the other hand, verifying the binding 
condition is more difficult, as noted by Hiroshima and Sasaki 
\cite{HiSa07}. In this paper, the authors consider the Nelson model with 
$N \geq 2$ particles. A dressing transform is used in order to extract 
effective attractive potentials between the particles that originate in 
their interaction with the field. Enhanced binding in this case is a 
consequence of this attractive interaction: the particles behave like a 
single, heavier particle, which can be trapped by the external potential. 
The case $N=1$ can not be treated by this argument, and the result describes a different effect from the one 
that we started with: instead of a phonon cloud moving with the 
particle and making it heavier, the particles move in synchronization and thereby 
increase their mass. 

Another paper where the Nelson model is treated using the binding condition 
is the recent work \cite{FaOlRo23}. Here the authors use a version of the 
binding condition in the one-particle Nelson model. To make this approach work, the authors need to 
actually perform the strong 
coupling limit $\alpha \rightarrow \infty$ in order to deduce the existence 
of a ground state. No statement is made for finite $\alpha$. 
It is therefore unclear at the moment whether it is possible to 
adapt the purely spectral methods of the binding condition to cover the one-
particle
Nelson model in general. However, an approach based on ideas from 
\cite{LiSe13} combined with recent results on the effective mass seems to  give at least  
partial results. We sketch this approach after recalling the concept of 
effective mass. 

\subsection{The effective mass}

The concept of effective mass is much older than enhanced binding. It goes back 
at least to the work of Landau and Pekar \cite{LP48} in the context of the 
Fr\"ohlich polaron \cite{Fr37,Fr54}. The approach is a bit less direct than 
the one described above. Also, in this Section only, we will use the convention of the functional analytic community and define $H(\alpha)$ with a bare mass of $1/2$, i.e.\ we set
\[
H(\alpha) = -\Delta + \delta U + H_{\rm f} + \alpha H_I.
\]
Similarly, we will denote the effective mass $m_{\text{eff}}(\alpha)$ that we introduced earlier by $M(\alpha)$ in accordance with \cite{LiSe13}.
The starting point is then the observation that in the case
$\delta = 0$, the Hamiltonian $H(\alpha)$
commutes with the total momentum operator $P = - i \nabla_x + P_f$ with $P_f =  \int \dd k\,  k a^\dagger(k) a(k)$. This allows a fiber decomposition  $H(\alpha) = \int_{\bbR^d}^\oplus \dd P \, H_P (\alpha)$ with  
\[
H_P(\alpha) = (P - P_f)^2 + H_{f} + \alpha \int_{\bbR^d} \de k \frac{1}{2\sqrt{\omega(k)}} \lt( \hat{\varphi}(k) a^\dagger (k) + \hat{\varphi}(-k)  a(k) \rt).
\]
See \cite{LiSe13} for more details. The operator $H_P(\alpha)$ now acts on Fock space 
only, and is bounded below. The map $P \mapsto E_\alpha(P) = \inf {\rm spec} 
H_P(\alpha)$ is radial in $P$, and by definition, the effective mass $M(\alpha)$ of the 
particle-field system is the inverse of the second derivative at $|P|=0$ of the map 
$|P| \mapsto E_\alpha(P)$. The investigation of this quantity has been an important 
open problem since Landau and Pekar \cite{LP48}
conjectured that for the Fr\"ohlich polaron, to leading order   
$M(\alpha) \sim C_{\text{LP}} \alpha^4$ as $\alpha \to \infty$, 
with $C_{\text{LP}}$ given by an explicit variational formula. This conjecture
has been completely open until rather recently, despite significant interest. This changed dramatically in the last few years, when a series of
works from different groups eventually led to a full proof of the Landau-
Pekar conjecture. See \cite{LiSe19,BP22b,Se22,BaMuSeVa24,Br24} for the 
relevant progress of the lower bound in chronological order, and 
\cite{BrSe24, Po23} for the upper bound. The results of 
\cite{Br24} and \cite{BrSe24}
now completely settle the effective mass conjecture of Landau and Pekar.

Over the years, alternative, and equivalent, definitions of the effective 
mass 
have been proposed. In his seminal paper, Spohn \cite{Sp87} showed that the 
effective mass as defined above is equivalent to two other quantities, one of 
them being the inverse of the diffusion constant of a 
(diffusively rescaled) self-interacting Brownian motion. The relevant 
probability measure is given by the infinite volume limit (on translation 
invariant events) of \eqref{nelson_path_measure} with $\delta = 0$. This 
created the bridge to stochastic processes, which has been used in 
\cite{BP22b,Se22} to investigate the effective mass. Another equivalent 
definition, which is closer to enhanced binding, is presented by Lieb and 
Seiringer in \cite{LiSe13}. They first observe that for 
a bounded, nonpositive, continuous potential $U$, the function 
\[
\caE: [1/2,\infty) \to (- \infty, 0], \quad m \mapsto \caE(m) = \inf {\rm spec}
\big( -\frac{1}{2m} \Delta + \delta U \big)
\]
is decreasing and converges to $\inf U$ as $m \to \infty$. If $-\Delta + \delta U$ has a bound state, then the function is even strictly decreasing, and one can find a unique solution $\widetilde M(\alpha)$ of the 
equation 
\[
\caE(\widetilde M(\alpha)) =  \lim_{\lambda \to 0} \frac{1}{\lambda^2} 
\inf {\rm spec} \big( - \Delta + \delta U_{\lambda} + H_{\rm f} + \alpha H_{\rm I} - E_0(\alpha)\big),
\]
where $E_0(\alpha) = \inf {\rm spec} H_{0}(\alpha)$, and where $U_\lambda(x) = \lambda^2 U(\lambda x)$. This scaling serves to make 
$U_\lambda$ locally almost constant for small $\lambda$, which intuitively 
gives a bridge to the effective mass $M(\alpha)$ defined above. The prefactor 
$\lambda^{-2}$ is necessary for a nontrivial limit, which can be seen from the case $\alpha = 0$ and  the unitary equivalence of 
$-\frac{1}{2m} \Delta + U_\lambda$ and $\lambda^2 (-\frac{1}{2m} \Delta + U)$.

The main result of \cite{LiSe13} is that $M(\alpha) = \widetilde M(\alpha)$ 
under the condition that $- \Delta + \delta U$ has a bound state, and some 
further mild conditions. In their outlook section, the authors state that 
their techniques can be useful for enhanced binding, i.e.\ for the situation 
where $- \Delta + \delta U$ does not have a bound state. No details are 
given, so here is our guess what they might have had in mind. Equation (8) of \cite{LiSe13} 
states that 
\[
\lim_{\lambda \to 0} \frac{1}{\lambda^2} \big \langle \Psi_f,  \big( - \Delta + \delta U_{\lambda} + H_{\rm f} + \alpha H_{\rm I} - E_0(\alpha) \big) \Psi_f \big \rangle = \langle f, (-\tfrac{1}{2M(\alpha)}\Delta + \delta U) f \rangle
\]
for a class of trial states $\Psi_f$ depending on functions $f$, where $f$ can be picked from a dense subset of $L^2(\bbR^d)$. By minimizing over $f$ and using that 
$M(\alpha)$ diverges as $\alpha \to \infty$ (which was yet unknown at the 
time when \cite{LiSe13} was written), one can make the right hand side 
strictly negative for given $\delta$ by taking $\alpha$ large enough. This 
means that there is some finite $\lambda$ so that the left hand side is 
strictly negative, which by the Rayleigh-Ritz principle implies that 
$\inf{\rm spec} H < 0 = \inf{\rm spec} H_{\rm f} + \inf {\rm spec} H_{\rm p}$.
This allows the use of the binding condition. In other words, this method 
shows enhanced binding for sufficiently large coupling $\alpha$ and 
sufficiently scaled potential $U_\lambda$. What it does not show, to our 
knowledge, is enhanced binding for a fixed, given potential $U$, even when 
taking $\alpha$ very large. Thus for this case, our method seems to be the only one available at this moment. 

\subsection{Persistence of binding}
The original use of the binding condition was to prove that ground states persist under coupling to the field, i.e.\ that existence of a ground state for $\alpha  = 0$ implies existence of a ground state for $\alpha > 0$. 
In \cite{GrLiLo01}, this was shown for the Pauli-Fierz model, and in 
\cite{HiMa22}, for the Nelson model, under suitable conditions, and for all $\alpha > 0$. 
Proofs using functional integrals have in the past always suffered from a 
restriction to sufficiently small $\alpha$. This leads to various conditions on the coupling strength, such as e.g.\ in \cite{Sp98}, or in 
\cite[Theorem 2.28]{HiLo20}. The reason for this restriction is that previous approaches were unable to exploit the attractive nature of the pair interaction potential $W$, which is what the Gaussian correlation inequality enables us to do. 

Analytic approaches for showing persistence of binding 
applicable to the Nelson model include \cite{Ge00}, where it is required that $H_p$ has purely discrete spectrum and that the model is infrared regular. For a nice exposition of this approach applied to the Nelson model we refer to \cite{Hi19}.
For related work  discussing the existence of ground states in non-Fock representation which we do not focus on in this work, we refer the reader to \cite{Sasa05} and references therein.

\section{Proof of Main Results}

\subsection{Notation and the Gaussian correlation inequality}
\label{gci_intro}
In all what follows we assume \ref{potential_assumption} and \ref{pair_potential_assumption}. We start by listing all relevant path measures that will occur in 
this work. Let us recall that $\bbP$ is $d$-dimensional Wiener measure, i.e.\ the path measure of Brownian motion starting at $0$. 
Motivated by (\ref{fkn_formula}) we define the probability measures
\begin{equation}
    \label{nelson_path_measure}
    \bbPh_{\delta,\alpha,T}(\de x) \propto \e{I_0 ^T(x,\delta,\alpha)} \one_B (x_T) \bbP(\de x),
\end{equation}
where $B$ is a convex symmetric set
and
$$I_0 ^T (x,\delta,\alpha) := \delta V_0 ^T(x) +  \alpha W_0 ^T (x) :=  \delta \int_0 ^T V(x_s) \de s + \alpha \int_0 ^T \int_0 ^T W(\Vert x_t - x_s \Vert, |t-s|)\de t \de s.$$
In the display above we used $V := -U$, which gives the connection to (\ref{fkn_formula}).
Observe that, by definition, we then have 
$$\bbPh_{0,\alpha,T}(\de x) \propto \e{\alpha W_0 ^T(x)} \one_B (x_T) \bbP(\de x).$$
In the following we set $B = B(0,M)$, the closed ball centred at $0$ with radius $M$. 
Here, $M \approx \supp(V)$ will be chosen accordingly in the proof of Theorem \ref{first_recurrence_theorem}. We stress that both, Lemma \ref{recurrence_lemma} and Lemma \ref{b_exp_growth}, are not affected by the choice of $B$, as long as the set is symmetric and convex. 

In the notation just introduced we recall that order of reweighting does not matter; i.e.,
$$\bbPh_{\delta,\alpha,T}(\de x) = \frac{ 1 }{\bbE^{\bbPh_{0,\alpha,T}} \lt[ \e{\delta V_0 ^T} \rt]} \e{ \delta V_0 ^T(x)} \bbPh_{0,\alpha,T} (\de x).$$
Let us also define for $l>0$ fixed, with $T/l \in \bbN$ and $\alpha \geq 0$ 
\begin{equation}
    \label{dom_measure}
    \bbPw_{\alpha,T,l} (\de x) \propto \e{-\alpha \Tilde{W}_0 ^T (x)} \bbP(\de x), \:\:\:\:\:\:\: \Tilde{W}_0 ^T (x) :=  \sum\limits_{i=0} ^{T/l -1} \int_{li} ^{l(i+1)} \int_{li} ^{l(i+1)} \Vert x_t - x_s \Vert^2   \de t \de s.
\end{equation}
Finally, we define the restriction of a measure. Let $\mu$ be a measure on $C( [0,T] ; \bbR^d)$ and take $A \subseteq C( [0,T] ; \bbR^d)$. Then,
$$\Pi_A : \caM_1 \rightarrow \caM_1, \:\:\:\:\:\:\: \Pi_A ( \mu)(\de x) \propto \one_A (x) \mu(\de x).$$

We now recall relevant notions related to the Gaussian correlation inequality, first proven in \cite{Ro14}. In the following we fix a real, separable Banach space $X$. A symmetric function $f: X \rightarrow \bbR$ is said to be quasi-concave whenever its super-level sets are convex. (GCI) directly implies that, for $\mu$ being a centred Gaussian measure and $f_1,...,f_n$ non-negative, symmetric and quasi-concave,
\begin{equation}
    \label{gci}
    \mu \lt( \prod\limits_{i=1} ^m f_i \rt) \mu \lt( \prod\limits_{j=m+1} ^n f_j \rt) \leq \mu \lt( \prod\limits_{i=1} ^{m+n} f_i \rt).
\end{equation}
We will use the following version of (GCI) frequently.
\begin{theorem}
    Let $\mu$ be a centred Gaussian measure on $X$. Define
    $$\mu^{(f)}(\de x) \propto \e{f(x)} \mu(\de x).$$
    If $f + g$ is symmetric and quasi-concave (or the uniformly bounded product of such functions) and such that $\mu^{(-g)}$ remains a centred Gaussian measure, then
    $$\mu^{(f)}(A) \ge \mu^{(-g)}(A)$$
    for all $A$ symmetric and convex.
\end{theorem}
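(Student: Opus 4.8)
The plan is to deduce the statement from a single application of the inequality \eqref{gci}, after replacing the reference measure $\mu$ by $\nu := \mu^{(-g)}$. By hypothesis $\nu$ is again a centred Gaussian measure, and since $\de\nu \propto \e{-g}\,\de\mu$ we have $\de\mu \propto \e{g}\,\de\nu$. Consequently, writing $h := f+g$, for every Borel set $A$ the normalising constants cancel and
\[
\mu^{(f)}(A) \;=\; \frac{\int_A \e{f}\,\de\mu}{\int \e{f}\,\de\mu} \;=\; \frac{\int_A \e{f+g}\,\de\nu}{\int \e{f+g}\,\de\nu} \;=\; \nu^{(h)}(A).
\]
Thus it suffices to prove that $\nu^{(h)}(A) \ge \nu(A)$ for every symmetric convex $A$; equivalently, that $\int_A \e{h}\,\de\nu \ge \nu(A)\,\int \e{h}\,\de\nu$, that is, that $\one_A$ and $\e{h}$ are nonnegatively correlated under the centred Gaussian measure $\nu$.

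Next I would pin down the functions to feed into \eqref{gci}. The indicator $\one_A$ is nonnegative, symmetric and quasi-concave, since its super-level sets are $X$, $A$ or $\emptyset$, all convex, and $A$ is symmetric. For the second function, if $h$ is symmetric and quasi-concave then so is $\e{h}$, because exponentiation is increasing, so $\{\e{h}\ge\lambda\} = \{h\ge\log\lambda\}$ for $\lambda>0$ (and $\e{h}>0$ everywhere). The parenthetical alternative in the hypothesis covers the case where the exponential weight $\e{f+g}$ is a uniformly bounded product $\prod_{j=1}^N \psi_j$ of nonnegative, symmetric, quasi-concave functions — the form in which this hypothesis actually enters the applications, the product running over short time-blocks — and in that case one simply keeps the product.

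I would then apply \eqref{gci} with respect to the centred Gaussian measure $\nu$, with first block $\one_A$ and second block $\e{h}$ (respectively $\psi_1,\dots,\psi_N$). This yields
\[
\nu(\one_A)\,\nu(\e{h}) \;\le\; \nu(\one_A\,\e{h}),
\]
which is precisely $\nu^{(h)}(A) \ge \nu(A)$ after dividing by $\nu(\e{h})$. That quantity lies in $(0,\infty)$: in the bounded-product case since $0<\nu(\e{h})\le\sup\e{h}<\infty$, and in the quasi-concave case since $\int\e{f}\,\de\mu = \big(\int\e{-g}\,\de\mu\big)\,\nu(\e{h})$ with both $\int\e{f}\,\de\mu$ and $\int\e{-g}\,\de\mu$ in $(0,\infty)$, because $\mu^{(f)}$ and $\mu^{(-g)}$ are well-defined probability measures by assumption. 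This completes the proof.

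Once the change of reference measure is in place the argument is essentially immediate, so I do not expect a serious obstacle; the points that need care are (i) checking, in the ``uniformly bounded product'' case, that $\e{f+g}$ really does decompose into nonnegative symmetric quasi-concave factors so that \eqref{gci} applies block by block — and if one only knows that $f+g$ itself, rather than $\e{f+g}$, is such a product, reducing to this by expanding $\e{h}=\sum_k h^k/k!$ and applying \eqref{gci} to each term, using that $h$ is bounded — and (ii) bookkeeping the normalising constants so that every conditional measure in sight is well defined. It is worth noting that the hypothesis that $\mu^{(-g)}$ remains a centred Gaussian measure is used exactly at the step where \eqref{gci}, a statement about centred Gaussian measures, is applied to $\nu$ rather than to $\mu$.
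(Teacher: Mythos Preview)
Your proof is correct and follows essentially the same route as the paper: change reference measure to $\nu=\mu^{(-g)}$, observe that $\de\mu^{(f)}/\de\nu\propto\e{f+g}$ is symmetric and quasi-concave (or a product of such), and apply \eqref{gci} to $\one_A$ and this Radon--Nikodym derivative. Your write-up is more careful about the normalising constants and the ``bounded product'' alternative, but the underlying idea is identical.
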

\begin{proof}
    Use (GCI) on $\mu^{(-g)}$. Specifically, it holds that
    $\de \mu^{(f)}/\de \mu^{(-g)}$
    is symmetric and quasi-concave, so 
    $$\mu^{(-g)}(A) = \int \mu^{(-g)}(\de x) \one_A \int \mu^{(-g)}(\de x) \frac{\de \mu^{(f)}}{\de \mu^{(-g)}}(x) \le \int \mu^{(f)} \one_A = \mu^{(f)}(A).\qedhere$$
\end{proof} 
For the remainder of this section we set $X = C([0,T]; \bbR^d)$.

\begin{proposition}
    \label{conditioning_prop}
    Let $f \ge 0$ be symmetric and quasi-concave. Moreover, let $A \subseteq X$ be a convex and symmetric set. Then
    $$\int \Pi_A(\bbPw_{\alpha,T,l}) (\de x) f(x) \geq \int \bbPw_{\alpha,T,l} (\de x) f(x).$$
\end{proposition}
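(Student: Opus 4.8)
The plan is to recognize this as a direct instance of the Gaussian correlation inequality \eqref{gci} applied to the pair consisting of the indicator $\one_A$ and the function $f$. First I would record that $\bbPw_{\alpha,T,l}$ is itself a centred Gaussian measure on $X = C([0,T];\bbR^d)$: it is obtained from Wiener measure $\bbP$ (a centred Gaussian measure) by reweighting with $\e{-\alpha \Tilde W_0^T(x)}$, and $\Tilde W_0^T$ is a non-negative quadratic form in $x$ which is even under $x \mapsto -x$; hence the reweighted measure has a Gaussian density with respect to $\bbP$ and remains centred. This is precisely the setting ``$\mu^{(-g)}$ remains a centred Gaussian measure'' of the version of (GCI) recalled above.

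Next I would unpack the definition of the restriction, $\Pi_A(\bbPw_{\alpha,T,l})(\de x) \propto \one_A(x)\,\bbPw_{\alpha,T,l}(\de x)$, so that
$$\int \Pi_A(\bbPw_{\alpha,T,l})(\de x)\, f(x) = \frac{\bbPw_{\alpha,T,l}(\one_A f)}{\bbPw_{\alpha,T,l}(A)},$$
where we may assume $\bbPw_{\alpha,T,l}(A) > 0$, as otherwise the left-hand side is not defined. Thus the claim is equivalent to the correlation inequality $\bbPw_{\alpha,T,l}(\one_A f) \ge \bbPw_{\alpha,T,l}(A)\,\bbPw_{\alpha,T,l}(f)$. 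Now $\one_A$ is a non-negative, symmetric, quasi-concave function, since $A$ is symmetric and convex (its superlevel sets are $\emptyset$, $A$, or $X$, all convex), and $f$ is non-negative, symmetric and quasi-concave by hypothesis. Applying \eqref{gci} to the centred Gaussian measure $\bbPw_{\alpha,T,l}$ with $n=2$, $m=1$, $f_1 = \one_A$ and $f_2 = f$ yields exactly $\bbPw_{\alpha,T,l}(\one_A)\,\bbPw_{\alpha,T,l}(f) \le \bbPw_{\alpha,T,l}(\one_A f)$, which is what is needed.

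The only point requiring mild care is integrability, since $f$ is only assumed non-negative and not necessarily bounded. To be safe I would run the argument with $f$ replaced by the truncation $f \wedge N$, which is again non-negative, symmetric and quasi-concave (its superlevel set at level $t$ equals $\{f \ge t\}$ for $t \le N$ and is empty otherwise), obtain the inequality for every $N$, and then pass to the limit $N \to \infty$ by monotone convergence on both sides; if $\bbPw_{\alpha,T,l}(f) = \infty$ the inequality forces the left-hand side to be $+\infty$ as well and the statement still holds. I do not anticipate any genuine obstacle: the content is entirely contained in (GCI), and the remaining work is just checking that $\one_A$ and the truncations fall within its scope and that the reference measure is centred Gaussian.
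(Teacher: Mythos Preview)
Your proof is correct and follows exactly the paper's approach: the paper's proof consists of the single sentence ``This is just (GCI) as introduced above,'' and you have simply unpacked that by exhibiting $\bbPw_{\alpha,T,l}$ as a centred Gaussian and applying \eqref{gci} with $f_1=\one_A$, $f_2=f$. The truncation argument for unbounded $f$ is a harmless extra care that the paper omits.
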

\begin{proof}
This is just (GCI) as introduced above.
\end{proof}

Let us for now assume \ref{local_decrease}. We define
$$A_i (R,l):= \lt\{ x \in C( [0,T]; \bbR^d) : \sup\limits_{il \leq s,t \leq l(i+1)} \Vert x_t  - x_{s} \Vert \leq R \rt\}$$
and write
$$A(R,l) = \bigcap\limits_{i=0} ^{T/l -1} A_i(R,l),$$ 
assuming $T/l \in \N$.  By \ref{local_decrease} for all $l^* \ge l >0$ and some $R > 0$ a corresponding  $\beta = \alpha / \xi > 0$ exists such that
$$\bbR^d \ni x_t - x_s \mapsto \alpha W(\Vert x_t - x_s \Vert, |t-s|) + \beta  \Vert x_t - x_s \Vert^2$$
is decreasing on $A(R,l)$ with $|t-s|\le l$ being in the same block (meaning $t,s \in [il,(i+1)l]$ for some $i \in T/l -1$).

\begin{proposition}
    \label{conditioning_prop_2}
    Take $R,l$ and $\beta$ as just introduced. Then, for every non-negative (or bounded) symmetric and quasi-concave function $f$
    $$\int \Pi_{A(R,l)}(\bbPh_{\delta,\alpha,T}) (\de x) f(x) \geq \int \bbPw_{\beta,T,l} (\de x) f(x).$$
\end{proposition}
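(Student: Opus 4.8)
The plan is to recognise $\Pi_{A(R,l)}(\bbPh_{\delta,\alpha,T})$ as the centred Gaussian measure $\bbPw_{\beta,T,l}$ reweighted by a density that is a uniformly bounded (limit of) product(s) of non-negative, symmetric, quasi-concave functions, and then to invoke \eqref{gci} with $f$ as one extra factor, exactly as in the proof of Proposition~\ref{conditioning_prop}. Recall first that $\bbPw_{\beta,T,l}$ is a centred Gaussian measure on $X=C([0,T];\bbR^d)$, since $\tilde W_0^T$ is a non-negative continuous quadratic functional of the path. Using $\bbP(\de x)\propto\e{\beta\tilde W_0^T(x)}\bbPw_{\beta,T,l}(\de x)$ together with \eqref{nelson_path_measure}, one has
\[
\Pi_{A(R,l)}(\bbPh_{\delta,\alpha,T})(\de x)\;\propto\;\one_{A(R,l)}(x)\,\one_B(x_T)\,\e{\delta V_0^T(x)+\alpha W_0^T(x)+\beta\tilde W_0^T(x)}\,\bbPw_{\beta,T,l}(\de x),
\]
so the whole task reduces to proving that this density is, up to the normalising constant, of the stated form; then \eqref{gci} (with $m=1$, $f_1=f$ and the remaining factors taken from the density) gives $\int\Pi_{A(R,l)}(\bbPh_{\delta,\alpha,T})(\de x)\,f(x)\ge\int\bbPw_{\beta,T,l}(\de x)\,f(x)$ after a routine passage to the limit in a Riemann-sum approximation of the integrals.

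To verify the form of the density I would split the time square of $W_0^T$ over the blocks $[li,l(i+1)]\times[lj,l(j+1)]$ and absorb $\beta\tilde W_0^T$ into the diagonal blocks $i=j$, writing the exponent as
\begin{align*}
&\delta\int_0^T V(x_s)\,\de s\;+\;\alpha\sum_{i\neq j}\int_{li}^{l(i+1)}\!\!\int_{lj}^{l(j+1)} W(\|x_t-x_s\|,|t-s|)\,\de t\,\de s\\
&\qquad+\;\sum_i\int_{li}^{l(i+1)}\!\!\int_{li}^{l(i+1)}\bigl[\alpha W(\|x_t-x_s\|,|t-s|)+\beta\|x_t-x_s\|^2\bigr]\de t\,\de s .
\end{align*}
Approximating each integral by Riemann sums turns the exponential of each of the three pieces into a finite product of elementary factors, which I would treat separately. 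The single-site factors $\e{\delta V(x_s)\,\de s}$ are symmetric (as $V=-U$ is radial) and quasi-concave (as $\|\cdot\|\mapsto V$ is decreasing by \ref{potential_assumption} and $x\mapsto x_s$ is linear), and for bounded $U$ are bounded uniformly in the partition by $\e{\delta\|V\|_\infty T}$; the off-diagonal factors $\e{\alpha W(\|x_t-x_s\|,|t-s|)\,\de t\,\de s}$ are symmetric and quasi-concave since $r\mapsto W(r,|t-s|)$ is decreasing by \ref{pair_potential_assumption}, and are bounded by $1$ because $W\le0$; and $\one_B(x_T)$ is the indicator of a convex symmetric set, hence symmetric and quasi-concave.

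The delicate piece, and the one I expect to be the main obstacle, is the diagonal-block sum, which is where \ref{local_decrease} and the restriction to $A(R,l)$ must be used simultaneously. The idea is to attach to each diagonal elementary factor a copy of $\one_{\{\|x_t-x_s\|\le R\}}$: since $A(R,l)=\bigcap_i\bigcap_{s,t\in[li,l(i+1)]}\{\|x_t-x_s\|\le R\}$, the product of all these indicators is exactly $\one_{A(R,l)}$, so nothing has been added or lost. For $s,t$ in a common block one has $|t-s|\le l\le l^*$, so by \ref{local_decrease} (with $\beta=\alpha/\xi$) the map $r\mapsto\alpha W(r,|t-s|)+\beta r^2$ is decreasing on $[0,R]$; hence $r\mapsto\one_{\{r\le R\}}\,\e{[\alpha W(r,|t-s|)+\beta r^2]\,\de t\,\de s}$ is positive and decreasing on $[0,R]$ and vanishes on $(R,\infty)$, i.e.\ non-increasing on all of $[0,\infty)$, so that $x\mapsto\one_{\{\|x_t-x_s\|\le R\}}(x)\,\e{[\alpha W(\|x_t-x_s\|,|t-s|)+\beta\|x_t-x_s\|^2]\,\de t\,\de s}$ is symmetric and quasi-concave. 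These factors are also bounded uniformly in the partition, because $\alpha W\le0$ forces $\alpha W(r,|t-s|)+\beta r^2\le\beta R^2$ on $\{r\le R\}$, so their product is bounded by $\e{\beta R^2 l T}$. With every factor shown to be non-negative, symmetric and quasi-concave and the total product uniformly bounded, \eqref{gci} applies (first for a fixed finite partition, then in the limit by dominated convergence via the uniform bounds), which yields the asserted inequality.
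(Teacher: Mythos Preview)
Your proof is correct and follows the same approach as the paper's: compute the Radon--Nikodym derivative of $\Pi_{A(R,l)}(\bbPh_{\delta,\alpha,T})$ with respect to the centred Gaussian measure $\bbPw_{\beta,T,l}$, recognise it as a (limit of a) product of non-negative symmetric quasi-concave factors, and apply \eqref{gci}. The paper's proof is much terser---it does not spell out the off-diagonal blocks, the indicator $\one_B(x_T)$, or how $\one_{A(R,l)}$ is distributed across the diagonal factors---whereas you make all of this explicit, which is helpful.

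One small remark: the justification ``bounded by $1$ because $W\le 0$'' for the off-diagonal factors is not needed and leans on a sign convention that is somewhat ambiguous in the paper. The safer (and equally easy) route is to use that $r\mapsto W(r,\tau)$ is decreasing by \ref{pair_potential_assumption}, so $W(\|x_t-x_s\|,|t-s|)\le W(0,|t-s|)$, and $W(0,\cdot)$ is continuous on $[0,T]$ hence bounded; this gives a uniform bound $\e{\alpha T^2\sup_{u\le T}W(0,u)}$ on the product of off-diagonal factors, which suffices for the dominated-convergence step.
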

\begin{proof}
    Note that
    $$\frac{\de \Pi_{A(R,l)}(\bbPh_{\delta,\alpha,T}) }{ \de \bbPw_{\beta,T,l}} = \one_{A(R,l)} \e{ \delta V_0 ^T + \alpha W_0 ^T + \beta \Tilde{W}_0 ^T}.$$
By construction we find that, for $s,t$ being in the same block
$$x_t - x_s \mapsto \alpha W(\Vert x_t - x_s \Vert, |t-s|) + \beta  \Vert x_t - x_s \Vert^2$$
is a quasi-concave function on $A(R,l)$. This implies via approximation by Riemann sums that
$\e{ \alpha W_0 ^T + \beta \Tilde{W}_0 ^T}$
is the limit of products of quasi-concave functions. 
The result follows via (GCI), since Gaussian measures reweighted by quadratic forms remain Gaussian. 
%\mscomment{Only a product of quasi-concave functions, right?} ts: yes, thats correct. 
\end{proof}

\subsection{A preliminary confinement result}
\label{first_confinement}
Assumption \ref{potential_assumption} allows us to write
\begin{equation}
    \label{particle_potential_decomp}
     V =  V_w + V_r ,
\end{equation}
where $V_w(x) = \one_{[0,r]}(\Vert x \Vert)$ with $r>0$ chosen correspondingly and $V_r$ is a non-positive remainder term vanishing at $\infty$. 
\begin{proposition}
    Fix $R>0$. Then,
    $$\bbP \lt( \sup\limits_{t \le l} \Vert x_t \Vert \le R \rt)^{1/l} \stackrel{l \rightarrow 0}{\longrightarrow } 1.$$
\end{proposition}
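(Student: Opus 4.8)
The plan is to take logarithms and reduce the statement to a small-time estimate on the probability that Brownian motion \emph{leaves} the ball of radius $R$. Set $p_l := \bbP\bigl(\sup_{t\le l}\Vert x_t\Vert \le R\bigr)$, so that $p_l^{1/l} = \exp\bigl(l^{-1}\log p_l\bigr)$. Since $0 < p_l \le 1$ and, by path continuity, $p_l \to 1$ as $l \to 0$, it suffices to show $l^{-1}\log p_l \to 0$; and because $-\log p_l \le 2(1-p_l)$ once $p_l \ge 1/2$, this is equivalent to
\[
\frac{1}{l}\,\bbP\Bigl(\sup_{t\le l}\Vert x_t\Vert > R\Bigr) \;\longrightarrow\; 0, \qquad l \to 0 .
\]

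To prove this I would use Brownian scaling: $(x_{lt})_{t\in[0,1]}$ has the law of $\sqrt l$ times a standard $d$-dimensional Brownian motion on $[0,1]$, hence
\[
\bbP\Bigl(\sup_{t\le l}\Vert x_t\Vert > R\Bigr) = \bbP\Bigl(\sup_{t\le 1}\Vert x_t\Vert > R/\sqrt l\Bigr).
\]
Now a standard tail bound for the running maximum finishes the job: covering $\{\sup_{t\le1}\Vert x_t\Vert > a\}$ by the $d$ coordinate events $\{\sup_{t\le1}|x^{(i)}_t| > a/\sqrt d\}$ and applying the reflection principle together with the Gaussian tail estimate yields $\bbP\bigl(\sup_{t\le1}\Vert x_t\Vert > a\bigr) \le 4d\,\e{-a^2/(2d)}$. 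Taking $a = R/\sqrt l$ gives $1 - p_l \le 4d\,\e{-R^2/(2dl)}$, so that $l^{-1}(1-p_l) \le 4d\,l^{-1}\e{-R^2/(2dl)} \to 0$ as $l \to 0$, and therefore $p_l^{1/l} \to 1$.

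I do not expect a genuine obstacle here; the only point that requires a little care is that the exit probability must decay \emph{faster than linearly} in $l$ — a merely polynomial bound would be useless after dividing by $l$ — which is exactly what the exponential small-time bound $\e{-cR^2/l}$ provides, and which is the reason the crude Gaussian/reflection estimate is the right tool rather than, say, a Chebyshev-type argument. An alternative route, avoiding the reflection principle, would be to compare $p_l$ with the survival probability of Brownian motion killed upon leaving $B(0,R)$ and invoke the small-time behaviour of that quantity; but the scaling argument above is shorter and entirely self-contained.
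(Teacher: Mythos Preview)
Your proof is correct and follows essentially the same strategy as the paper: bound the exit probability $1-p_l$ by an exponential $C\e{-c/l}$ and then use an elementary inequality (you use $-\log p_l \le 2(1-p_l)$, the paper uses the Bernoulli-type bound $(1-c)^{1/l} \ge 1-c/l$) to conclude that $p_l^{1/l} \to 1$. Your treatment is in fact more explicit than the paper's, which records only a tail bound for $\Vert x_l\Vert$ rather than for the running supremum and omits the reflection-principle step that you spell out.
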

\begin{proof}
    % We show that
    % $$\log \lt( \bbP \lt( \sup\limits_{t \le l} \Vert x_t \Vert \le R \rt)^{1/l} \rt) = \frac 1 l \log  \bbP \lt( \sup\limits_{t \le l} \Vert x_t \Vert \le R \rt)  \stackrel{l \rightarrow 0}{\longrightarrow } 0.$$
    % To do this, note that
    % $$\bbP \lt( \sup\limits_{t \le l} \Vert x_t \Vert \le R \rt) \ge 1- 2 \bbP \lt(  \Vert x_l \Vert \ge R \rt)$$
    % via Lévy's maximal inequality.  A second-order Taylor expansion of $\log(1-x)$ yields
    % $$\log  \bbP \lt( \sup\limits_{t \le l} \Vert x_t \Vert \le R \rt) \ge  - 2\bbP \lt(  \Vert x_l \Vert \ge R \rt) + O(\bbP \lt(  \Vert x_l \Vert \ge R \rt)^2).$$
    Recall that
    \[
    \bbP \lt(  \Vert x_l \Vert \ge R \rt) \le d \e{- \frac{R^2}{2 d^2l^2}}.
    \]
    Note that for $c\in (0,1)$ and $1/l\geq 1$, concavity of the logarithm implies 
    $(1-c)^{1/l}\geq 1-\frac{c}{l}$ since $(1/l)\log(1-c)\geq \log(1-\frac{c}{l})$. 
    Combining finishes the proof.
\end{proof}
In the following we denote by 
$\delta^*$ the critical value from Theorem \ref{potential_well_parameter_existence} for the potential well $V_w$ (as defined in (\ref{particle_potential_decomp})). Moreover, take $R >0$ suitable for $W$ in the sense of assumption \ref{local_decrease}.
\begin{lemma}
    \label{recurrence_lemma}
    Assume  \ref{local_decrease}.
    Fix $\delta > 0$. Take $0 < l \le l^*$ small enough such that
\begin{equation}
    \label{l_requirement}
    C_{\text{BM}}(R) :=\bbP(A_0 (R,l)) > 0.9, \:\:\:\:\:\:\:\:\: C_{BM}(R)^{1/l} \geq \exp(-\delta/8).
\end{equation}
    Then, with  $\alpha^*(\delta) = O \lt(  \delta^{-2} \rt)$ we have for all $\alpha \ge \alpha^*$ and $T$ sufficiently large
$$\bbE^{\bbPh_{0,\alpha,T}} \lt[ \exp\lt( \delta V_0 ^T \rt) \rt] \geq \exp \lt(\frac{\delta}{4}  T \rt).$$
\end{lemma}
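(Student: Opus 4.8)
The plan is to run the Gaussian-domination scheme of Propositions~\ref{conditioning_prop} and~\ref{conditioning_prop_2} in order to replace the field-weighted measure $\bbPh_{0,\alpha,T}$ by the explicit Gaussian measure $\bbPw_{\beta,T,l}$ (with $\beta=\alpha/\xi$) at the cost of only two cheap corrections, and then to read off exponential growth of $\bbE^{\bbPw_{\beta,T,l}}[\e{\delta V_0^T}]$ from the fact that $\bbPw_{\beta,T,l}$ is a \emph{slowed} Brownian motion, whose effective mass is large once $\beta$ (equivalently $\alpha$) is large. First: $\e{\delta V_0^T}$ is a uniformly bounded limit of products of non-negative, symmetric, quasi-concave functions (each factor in its Riemann-sum approximation has the form $x\mapsto\e{c\,V(x_s)}$, which is quasi-concave since $V=-U$ is radially decreasing by~\ref{potential_assumption}), so Proposition~\ref{conditioning_prop_2} applied with external strength $0$ in place of $\delta$ gives $\bbE^{\Pi_{A(R,l)}(\bbPh_{0,\alpha,T})}[\e{\delta V_0^T}]\ge\bbE^{\bbPw_{\beta,T,l}}[\e{\delta V_0^T}]$; reinstating the indicator $\one_{A(R,l)}$ costs exactly the factor $\bbPh_{0,\alpha,T}(A(R,l))$, whence
\[
\bbE^{\bbPh_{0,\alpha,T}}\big[\e{\delta V_0^T}\big]\;\ge\;\bbPh_{0,\alpha,T}\big(A(R,l)\big)\cdot\bbE^{\bbPw_{\beta,T,l}}\big[\e{\delta V_0^T}\big].
\]

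For the combinatorial factor I would use (GCI) once more: $\bbPh_{0,\alpha,T}$ is Wiener measure reweighted by the uniformly bounded product of non-negative symmetric quasi-concave functions $\e{\alpha W_0^T}\one_B$ (boundedness of $\e{\alpha W_0^T}$ uses $\hat\varphi/\omega\in L^2$), and $A(R,l)=\bigcap_i A_i(R,l)$ is convex and symmetric, so $\bbPh_{0,\alpha,T}(A(R,l))\ge\bbP(A(R,l))$; since the $A_i(R,l)$ depend on Brownian increments over disjoint blocks, $\bbP(A(R,l))=C_{\mathrm{BM}}(R)^{T/l}\ge\e{-\delta T/8}$ by~\eqref{l_requirement}.

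The heart of the argument is bounding $\bbE^{\bbPw_{\beta,T,l}}[\e{\delta V_0^T}]$ from below. Conditionally on the block-endpoints $(x_{il})_{i=0}^{T/l}$, under $\bbPw_{\beta,T,l}$ the path decouples into independent Brownian bridges, each reweighted by $\e{-\beta\int\!\!\int\|x_t-x_s\|^2}$ over its block, while $(x_{il})_i$ is itself a centred random walk whose step law is the free Gaussian step tilted by $\sim\e{-c\beta l^2\|x_{(i+1)l}-x_{il}\|^2}$, hence of variance of order $(\beta l^2)^{-1}$; equivalently, up to the within-block bridge corrections, $\bbPw_{\beta,T,l}$ is a Brownian motion of diffusion constant $D_{\mathrm{eff}}\asymp(\beta l^3)^{-1}$. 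Using $V\ge V_w$ from~\eqref{particle_potential_decomp} and restricting to the event $E$ that all $x_{il}$ lie in $B(0,r/2)$ and all block-bridges stay in $B(0,r)$ --- on which $V_0^T\ge T$ --- one gets $\bbE^{\bbPw_{\beta,T,l}}[\e{\delta V_0^T}]\ge\e{\delta T}\,\bbPw_{\beta,T,l}(E)$, and $\bbPw_{\beta,T,l}(E)$ factorises into the small-deviation probability of the slowed random walk (of order $\e{-cD_{\mathrm{eff}}T/r^2}$) times a product over blocks of bridge-confinement probabilities (at least $p_0(l)^{T/l}$ with $p_0(l)\to1$ as $l\to0$). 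Taking $\beta l^3\gtrsim\delta^{-1}$ and $l$ small enough makes both exponential rates smaller than $\delta/8$, so $\bbPw_{\beta,T,l}(E)\ge\e{-\delta T/4}$ and therefore $\bbE^{\bbPw_{\beta,T,l}}[\e{\delta V_0^T}]\ge\e{3\delta T/4}$.

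Combining the three steps, $\bbE^{\bbPh_{0,\alpha,T}}[\e{\delta V_0^T}]\ge\e{-\delta T/8}\cdot\e{3\delta T/4}\ge\e{\delta T/4}$ for $T$ large. For the scaling of $\alpha^*$: \eqref{l_requirement} together with the bridge-confinement estimate (which is of the same Gaussian-tail flavour) forces $l\le l_0(\delta)$ with $l_0(\delta)$ of order $(\log\delta^{-1})^{-1/2}$, and then $\beta\gtrsim\delta^{-1}l_0(\delta)^{-3}$ suffices, giving $\alpha^*=\xi\beta=O(\delta^{-2})$. I expect the main obstacle to be the third step: turning the slogan ``$\bbPw_{\beta,T,l}$ is a slowed Brownian motion'' into bounds uniform in $\beta$ and $l$ --- identifying the law of $(x_{il})_i$ as a genuine mean-zero random walk and estimating its one-step covariance, and controlling, uniformly in $\beta$, the probability that a quadratically reweighted Brownian bridge over $[0,l]$ leaves $B(0,r)$ --- which is essentially the content behind Proposition~\ref{approx_markov}. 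Everything else is bookkeeping with (GCI) and elementary Brownian estimates.
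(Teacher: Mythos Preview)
Your first two steps match the paper exactly: Proposition~\ref{conditioning_prop_2} (with external strength $0$) together with restoring the indicator give $\bbE^{\bbPh_{0,\alpha,T}}[\e{\delta V_0^T}]\ge C_{\mathrm{BM}}(R)^{T/l}\,\bbE^{\bbPw_{\beta,T,l}}[\e{\delta V_0^T}]$, and the prefactor is at least $\e{-\delta T/8}$ by~\eqref{l_requirement}.

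The third step is where you diverge from the paper, and where the gap lies. The paper does \emph{not} estimate a confinement probability directly. Instead it (i) conditions $\bbPw_{\beta,T,l}$ on the finer event $A(r/2,l)$ --- at cost $(1-(l^{-3}\beta)^{-100})^{T/l}$, via \cite[Lemma~4.4]{Se22} --- so that on this event $V_0^T\ge\sum_i l\,\one_{r/2}(\|x_{il}\|)$; (ii) notes that $(x_{il})_i$ is a centred Gaussian random walk under $\bbPw_{\beta,T,l}$ with step variance $v^2$, and rescales it to standard Brownian motion; (iii) undoes the discretization via a further (GCI) conditioning to reach $\bbE^{\bbP}[\exp(\tilde\delta\int_0^{16v^2T/l}V)]$ with $\tilde\delta=l\delta/(16v^2)$, and then invokes Corollary~\ref{ground_state_exponential_growth}. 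This produces the rate $\e{\delta T/2}$ as soon as $\tilde\delta\ge\delta^*$, and the only quantitative input needed on $v^2$ is the upper bound $v^2=O(\alpha^{-1/2})$ imported from~\cite[Section~4]{Se22}; the paper never needs to estimate $\bbPw_{\beta,T,l}(E)$ for any concrete confinement event.

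Your direct route --- restrict to a confinement event $E$ and bound $\bbPw_{\beta,T,l}(E)$ --- could in principle be made to work, but the computation you give for the step variance is incomplete: you retain only the linear-interpolant contribution $\tfrac{l^2}{6}\|x_l\|^2$ to $\int_0^l\!\int_0^l\|x_t-x_s\|^2\,\de t\,\de s$ and do not integrate out the within-block bridge, whose tilted law carries a cross term linear in $x_l$ that modifies the effective stiffness on the endpoint after completing the square. Without this, neither the scaling $D_{\mathrm{eff}}\asymp(\beta l^3)^{-1}$ nor the subsequent $\alpha^*$ arithmetic is justified (and the arithmetic you do give, $\beta\gtrsim\delta^{-1}l_0^{-3}$ with $l_0\sim(\log\delta^{-1})^{-1/2}$, does not actually produce $\delta^{-2}$). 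Finally, your closing reference to Proposition~\ref{approx_markov} is misplaced: that proposition is the approximate Markov decomposition used later for Theorem~\ref{recurrence_half_time_theorem} and has nothing to do with the step law of $\bbPw_{\beta,T,l}$; the relevant external input here is~\cite[Section~4 and Lemma~4.4]{Se22}.
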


\begin{proof}%[Proof of Lemma~\ref{recurrence_lemma}]
Before going into the technicalities we want to briefly explain the idea of the proof. By our assumptions on $W$ it holds that for some $C>0$ 
$$\bbE^{\bbPh_{0,\alpha,T}} \lt[ \exp\lt( \delta V_0 ^T \rt) \rt] \geq C ^{T} \bbE^{\bbPw_{\beta,T,l}} \lt[ \exp\lt( \delta V_0 ^T \rt)\rt].$$
Since the expectation on the R.H.S. is w.r.t. a Gaussian measure which diffuses slower than Brownian motion (on fixed time scales), it is reasonable to expect that
\begin{equation}
    \nn
    \begin{split}
        \bbE^{\bbPw_{\beta,T,l}} \lt[ \exp\lt( \delta V_0 ^T \rt)\rt] = \bbE^{\bbPw_{\beta,T,l}} \lt[ \exp\lt( \delta \int_0 ^T \de s V(x_s) \rt)\rt]  &\ge \bbE^{\bbP} \lt[ \exp\lt( \delta \int_0 ^T \de s V(x_s / \Tilde{v}) \rt)\rt] \\
        &= \bbE \lt[ \exp\lt( \delta \Tilde{v}^2 \int_0 ^{T/\Tilde{v}^2} \de s V(x_s) \rt)\rt],
    \end{split}
\end{equation}
where $\Tilde{v} = \Tilde{v}(\alpha) \stackrel{\alpha \rightarrow \infty}{\longrightarrow}\infty$. The expectation on the R.H.S. now corresponds to the Schrödinger operator $ - \frac{\Delta}{2} + \delta \Tilde{v}^2 V,$ for which the existence of a ground state (and so exponential growth of the expectation) is given by assumption for $\Tilde{v}$ sufficiently large.

To make this argument rigorous we abbreviate $\one_{[0,r]}(\Vert z \Vert) =\one_{r}(\Vert z \Vert)$ and start by showing
$$\bbE^{\bbPh_{0,\alpha,T}} \lt[ \exp\lt( \delta V_0 ^T \rt) \rt] \geq C ^{T} \bbE^{\bbPw_{\beta,T,l}} \lt[ \prod\limits_{i=0} ^{T/l -1} \exp \lt( l\delta \one_{ r/2}(\Vert x_{li} \Vert) \rt)\rt]$$
    with
    \begin{equation}
\label{weak_recurrence_constant}
C = C(\alpha,R) = C_{\text{BM}}(R)^{1/l} (1- (l^{-3}\beta)^{-100})^{1/l}.
\end{equation}

    W.l.o.g. we can assume that $V = \one_{r}(\Vert \cdot \Vert )$  since we can decompose $V$ as shown in (\ref{particle_potential_decomp}).
    As before, we take $\beta := \alpha/\xi$ as given in \ref{local_decrease}.
    It follows from Proposition \ref{conditioning_prop_2} that
    \begin{equation}
        \nn
        \begin{split}
            \bbE^{\bbPh_{0,\alpha,T}} \lt[ \exp\lt( \delta V_0 ^T \rt) \rt] &\geq \bbPh_{0,\alpha,T}(A(R,l)) \bbE^{\Pi_{A(R,l)}(\bbPh_{0,\alpha,T})} \lt[ \exp\lt( \delta V_0 ^T \rt) \rt] \\
            &\geq C_{\text{BM}}(R)^{T/l} \bbE^{\bbPw_{\beta,T,l}} \lt[ \exp\lt( \delta V_0 ^T \rt) \rt].
        \end{split}
    \end{equation}
    Note that this estimate is valid for any $B$ in the definition of $\bbPh_{0,\alpha,T}$, as long as the set is symmetric and convex.
    Proposition \ref{conditioning_prop} then already suffices to deduce the first estimate. Indeed, recalling that  $r$ denotes the width of the potential well, we find with 
    $C_1 = \bbPw_{\beta,T,l} \lt( A_0(r/2,l)  \rt)$ that
    \begin{equation}
        \nonumber
        \begin{split}
        \bbE^{\bbPw_{\beta,T,l}} \lt[ \exp\lt( \delta V_0 ^T \rt) \rt] &\geq C_1 ^{T/l} \bbE^{\Pi_{A(r/2,l)}(\bbPw_{\beta,T,l})} \lt[ \exp \lt( \delta V_0 ^T \rt)\rt] \\ 
        &\geq C_1 ^{T/l}  \bbE^{\Pi_{A(r/2,l)}(\bbPw_{\beta,T,l})} \lt[ \prod\limits_{i=0} ^{T/l -1} \exp \lt( l\delta \one_{r/2}(\Vert x_{li} \Vert) \rt)\rt] \\
            &\geq C_1 ^{T/l}  \bbE^{\bbPw_{\beta,T,l}} \lt[ \prod\limits_{i=0} ^{T/l -1} \exp \lt( l\delta \one_{r/2} (\Vert x_{li} \Vert)\rt)\rt].
        \end{split}
    \end{equation}
The estimate 
$C_1 \ge (1 -  (l^{-3} \beta)^{-100})$
follows from a change of variables, Brownian rescaling and \cite[Lemma 4.4]{Se22} for $\alpha$ large enough.

The discretization that was introduced in the exponent allows us to extract the effective binding of $\bbPw_{\beta,T,l}$. 
Due to the underlying structure of this measure we can interpret $(x_{ln})_{n \in \N}$ as a Gaussian random walk where the one-step increments have variance $v^2$. Let us now show
\begin{equation}
    \nonumber
    \bbE^{\bbPw_{\beta,T,l}} \lt[ \prod\limits_{i=0} ^{T/l - 1} \exp \lt( l\delta \one_{r/2}(\Vert x_{il} \Vert)\rt)\rt] \geq C_P ^{T/l} \bbE^{\bbP} \lt[ \exp \lt( \Tilde{\delta} \int_0 ^{16v^2T/l } \de s V(x_s) \rt)\rt],
\end{equation}
where $\Tilde{\delta} = \frac{l\delta }{16 v^2}$ and $C_P = C_P (\beta) = \bbP ( A_0 ( r/4v,1 )).$ We will see shortly that $C_P \uparrow 1$ as $\beta \uparrow \infty$. 
Define now
$$\Tilde{A} := \bigcap\limits_{i=0}^{T/l-1} A_i (r/4v,1)$$
to find
\begin{equation}
    \nonumber
    \begin{split}
        \bbE^{\bbPw_{\beta,T,l}} \lt[ \prod\limits_{i=0} ^{T/l-1} \exp \lt( l\delta \one_{r/2}(\Vert x_{li} \Vert) \rt)\rt] &= \bbE^{\bbP} \lt[ \prod\limits_{i=0} ^{T/l-1} \exp \lt( l\delta \one_{ r/2v}(\Vert x_i \Vert) \rt)\rt] \\
        &\ge C_P ^{T/l} \bbE^{\Pi_{\Tilde{A}}(\bbP)} \lt[ \prod\limits_{i=0} ^{T/l - 1} \exp \lt( l\delta \one_{ r/2v}(\Vert x_i \Vert) \rt)\rt].
    \end{split}
\end{equation}
On $\Tilde{A}$ we have that $ \one_{r/2v }(\Vert x_i \Vert) \geq \int_{i} ^{i+1} \de s \one_{r/4v}(\Vert x_s \Vert)$ for every $i \in \bbN$ and so, using that $\frac{\de \Pi_{\Tilde{A}}(\bbP) }{ \de \bbP}$ is symmetric and quasi-concave,
\begin{equation}
    \nonumber
    \begin{split}
        C_P ^{T/l} \bbE^{\Pi_{\Tilde{A}}(\bbP)} \lt[ \prod\limits_{i=0} ^{T/l -1} \exp \lt( l \delta \one_{r/2v}(\Vert x_i \Vert) \rt)\rt] &\geq C_P ^{T/l} \bbE^{\Pi_{\Tilde{A}}(\bbP)} \lt[ \exp \lt( l\delta \int_0 ^{T/l} \de s \one_{r/4v}(\Vert x_s \Vert) \rt)\rt] \\
        &\geq C_P ^{T/l} \bbE^{\bbP} \lt[ \exp \lt( l\delta \int_0 ^{T/l} \de s \one_{r/4v}(\Vert x_s \Vert) \rt)\rt] \\
        &= C_P ^{T/l} \bbE^{\bbP} \lt[ \exp \lt( \Tilde{\delta} \int_0 ^{16v^2 T/l} \de s V(x_s) \rt)\rt].
    \end{split}
\end{equation}
We recall from \cite[Section 4]{Se22} that $v^2$ is $O(\alpha^{-1/2})$ under $\bbPw_{\beta,T,l}$.
It is therefore possible to choose $\alpha$ large enough such that $\Tilde{\delta}$ is bigger than the constant $\delta^*$ obtained from Theorem \ref{potential_well_parameter_existence} for the potential well $V$. 
%Notice that if we would know that $v^2 = \alpha^{-2}$, then
%$$\alpha = 4 \lt( \frac{\delta^*}{l\delta} \rt)^{1/2}$$
%would suffice. Because we only know $v^2 \le O(\log(\alpha)^3 / \alpha^2)$, we need an additional $\epsilon > 0$ to correct for the logarithmic error term and $l$. 
Moreover, this shows that the required order of coupling strength required for $\delta >0$ is $\delta^{-2}$.
It then follows from Corollary \ref{ground_state_exponential_growth} that
$$\bbE^{\bbP} \lt[ \exp \lt( \Tilde{\delta} \int_0 ^{16v^2 T/l} \de s V(x_s) \rt)\rt] \geq \exp \lt(  8 \Tilde{\delta} v^2 T/l \rt) = \exp \lt(  \frac{\delta}{2} T\rt) $$
for $T$ sufficiently big.
Now, if necessary, increase $\alpha$ so that \footnote{Note that there is a $v$ in the definition of $C_P$, allowing us to make the compact set as large as desired.} the entire pre-factor satisfies $C C_P \geq \exp(-\delta/4)$ (recall our choice for $R$), which shows the claim.
\end{proof}
The additional binding of the field is not necessary in case $H_p$ has a spectral gap. Indeed, the field only enhances binding by an unknown quantity that we can ignore thanks to (GCI).
\begin{lemma}
    \label{b_exp_growth}
    Assume $H_p$ has a spectral gap. Then, for all $\alpha \ge 0$ there exists $c>0$ such that for all $T$ sufficiently large
    $$\bbE^{\bbPh_{0,\alpha,T}} \lt[ \exp\lt( \delta V_0 ^T \rt) \rt] \geq \exp(cT).$$
\end{lemma}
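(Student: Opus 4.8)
The plan is to show that turning on the field coupling only helps: since the pair interaction is attractive --- $z\mapsto W(\|z\|,\tau)$ is non-increasing by Assumption \ref{pair_potential_assumption} --- the Gaussian correlation inequality guarantees that the reweighting by $\e{\alpha W_0^T}\one_B(x_T)$ cannot decrease $\bbE[\e{\delta V_0^T}]$. Once that monotonicity in $\alpha$ is established, the statement reduces to the $\alpha=0$ case, which is a standard Feynman--Kac fact: the spectral gap of $H_p$ provides a genuine bound state, hence exponential growth of the relevant functional.

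Concretely, I would first write
$$\bbE^{\bbPh_{0,\alpha,T}}\big[\e{\delta V_0^T}\big]=\frac{\bbE^\bbP\big[\e{\alpha W_0^T}\,\one_B(x_T)\,\e{\delta V_0^T}\big]}{\bbE^\bbP\big[\e{\alpha W_0^T}\,\one_B(x_T)\big]}.$$
By Assumption \ref{potential_assumption}, $V=-U$ is radial and non-increasing in $\|x\|$, hence quasi-concave, so $\e{\delta V_0^T}$ is a limit of products of non-negative symmetric quasi-concave functions on $C([0,T];\bbR^d)$ (approximate $\delta\int_0^T V(x_s)\,\de s$ by Riemann sums, exactly as in the proof of Proposition \ref{conditioning_prop_2}); likewise, since $z\mapsto W(\|z\|,\tau)$ is non-increasing, $\e{\alpha W_0^T}$ is a limit of products of such functions; and $\one_B(x_T)$ is the indicator of a convex symmetric cylinder set. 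As $\bbP$ is a centred Gaussian measure, (GCI) in the form \eqref{gci}, extended to these limits, applied with one group comprising the factors of $\e{\alpha W_0^T}$ together with $\one_B(x_T)$ and the other group the factors of $\e{\delta V_0^T}$, gives
$$\bbE^\bbP\big[\e{\alpha W_0^T}\,\one_B(x_T)\,\e{\delta V_0^T}\big]\;\ge\;\bbE^\bbP\big[\e{\alpha W_0^T}\,\one_B(x_T)\big]\,\bbE^\bbP\big[\e{\delta V_0^T}\big].$$
Dividing, $\bbE^{\bbPh_{0,\alpha,T}}[\e{\delta V_0^T}]\ge\bbE^\bbP[\e{\delta V_0^T}]$ for every $\alpha\ge0$.

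Next I would bound $\bbE^\bbP[\e{\delta V_0^T}]$ from below. By Feynman--Kac it equals $(\e{-TH_p}\mathbf{1})(0)$ with $H_p=-\tfrac12\Delta+\delta U$, and this quantity is finite by the Kato-decomposability of $U$. Since $H_p$ has a spectral gap, $E_0:=\inf{\rm spec}\,H_p$ is an isolated eigenvalue; as the essential spectrum of $H_p$ is $[0,\infty)$, this forces $E_0<0$, and the associated ground state $\psi_0$ is continuous, bounded and strictly positive. Using $\mathbf{1}\ge\|\psi_0\|_\infty^{-1}\psi_0$ and positivity preservation of $\e{-TH_p}$ yields $(\e{-TH_p}\mathbf{1})(0)\ge\|\psi_0\|_\infty^{-1}\psi_0(0)\,\e{-TE_0}$, i.e.\ exponential growth at rate $-E_0>0$ (alternatively, invoke Corollary \ref{ground_state_exponential_growth}). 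Picking any $c\in(0,-E_0)$ then gives $\bbE^{\bbPh_{0,\alpha,T}}[\e{\delta V_0^T}]\ge\bbE^\bbP[\e{\delta V_0^T}]\ge\e{cT}$ for all $T$ large, which is the claim. There is no serious obstacle here: the only points needing (routine) care are the quasi-concavity bookkeeping legitimizing (GCI) --- which is where Assumptions \ref{potential_assumption} and \ref{pair_potential_assumption} enter --- and the finiteness of $\bbE^\bbP[\e{\alpha W_0^T}\one_B(x_T)]$, guaranteed by the standing hypotheses on $\hat\varphi$ and $\omega$ together with \eqref{eq:time-decay-integrable}.
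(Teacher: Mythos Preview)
Your proof is correct and follows exactly the same two-step strategy as the paper: use (GCI) to obtain $\bbE^{\bbPh_{0,\alpha,T}}[\e{\delta V_0^T}]\ge\bbE^{\bbP}[\e{\delta V_0^T}]$, and then invoke the spectral gap of $H_p$ (via Feynman--Kac, as in Corollary~\ref{ground_state_exponential_growth}) for exponential growth of the right-hand side. The paper's proof is just a terse version of what you wrote; your additional justification of the quasi-concavity bookkeeping and the remark that $E_0<0$ follows from $\mathrm{ess\,spec}\,H_p=[0,\infty)$ are the details the paper leaves implicit.
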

\begin{proof}
    Use (GCI) to estimate
    $$\bbE^{\bbPh_{0,\alpha,T}} \lt[ \exp\lt( \delta V_0 ^T \rt) \rt] \ge \bbE^{\bbP} \lt[ \exp\lt( \delta V_0 ^T \rt) \rt].$$
    By assumption $H_p$ has a ground state with negative ground state energy, which implies the result (follow the proof of Theorem \ref{ground_state_exponential_growth} to deduce this).
\end{proof}

In the next theorem, we provide sufficient conditions to imply that
\begin{equation}
    \label{p_recurrence_property}
    \bbPh_{\delta,\alpha,T} \lt( \int_0 ^T \one_K (x_s) \de s > cT \rt) \geq 1- \exp(-cT).
\end{equation}

\begin{theorem}
\label{first_recurrence_theorem}
    Fix $\delta > 0$.
    \begin{enumerate}
        \item 
        \label{p_recurrence_first_item}
        Assume \ref{local_decrease}. There exists $\alpha^* \geq 2$, $c> 0$ and a compact set $K \subseteq \bbR^d$ such that for all $\alpha \ge \alpha^*$ and $T$ sufficiently large (\ref{p_recurrence_property}) holds. 
        \item 
        \label{p_recurrence_first_item_2}
        Assume $H_p$ has a spectral gap. There exists $c>0$ and a compact set $K \subseteq \bbR^d$ such that for all $\alpha \ge 0$ and $T$ sufficiently large (\ref{p_recurrence_property}) holds. 
    \end{enumerate}
\end{theorem}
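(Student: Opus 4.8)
The plan is to reduce both items to one and the same reweighting estimate, plugging in the exponential lower bound on the partition function that has already been obtained: Lemma~\ref{recurrence_lemma} provides it under \ref{local_decrease} (with $\alpha^* = \max\{2,\alpha^*(\delta)\}$, where $\alpha^*(\delta) = O(\delta^{-2})$), and Lemma~\ref{b_exp_growth} provides it under the spectral gap hypothesis (for every $\alpha \ge 0$). In either case there is a constant $c_0 > 0$ (namely $c_0 = \delta/4$ in the first situation, and $c_0 = c$ from Lemma~\ref{b_exp_growth} in the second) such that
\[
Z := \bbE^{\bbPh_{0,\alpha,T}}\!\big[\e{\delta V_0^T}\big] \;\ge\; \e{c_0 T}
\]
for all sufficiently large $T$. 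Recall also the reweighting identity $\bbPh_{\delta,\alpha,T}(\de x) = Z^{-1}\e{\delta V_0^T(x)}\,\bbPh_{0,\alpha,T}(\de x)$.

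Next I would record an elementary deterministic bound on $\delta V_0^T$ outside a large ball. After the energy shift $V = -U$ is non-negative, bounded, radially non-increasing and vanishes at infinity; hence, for $K = B(0,M)$, the quantity $\eta_M := \sup_{\Vert x\Vert>M} V(x)$ decreases to $0$ as $M \to \infty$. Splitting the occupation integral according to whether $x_s$ lies in $K$ gives
\[
\delta V_0^T(x) = \delta\int_0^T V(x_s)\,\de s \;\le\; \delta\,\Vert V\Vert_\infty \int_0^T \one_K(x_s)\,\de s \;+\; \delta\,\eta_M\,T ,
\]
so on the event $E_c := \big\{\,\int_0^T \one_K(x_s)\,\de s \le cT\,\big\}$ one has $\delta V_0^T \le \big(\delta\Vert V\Vert_\infty c + \delta\eta_M\big)T$.

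Combining the two ingredients, for $T$ large,
\[
\bbPh_{\delta,\alpha,T}(E_c) \;=\; Z^{-1}\,\bbE^{\bbPh_{0,\alpha,T}}\!\big[\e{\delta V_0^T}\one_{E_c}\big]
\;\le\; Z^{-1}\,\e{(\delta\Vert V\Vert_\infty c + \delta\eta_M)T}
\;\le\; \e{(-c_0 + \delta\Vert V\Vert_\infty c + \delta\eta_M)T}.
\]
I would then first fix $M$ so large that $\delta\eta_M \le c_0/3$, and only afterwards set $c := \min\{\,c_0/(3\delta\Vert V\Vert_\infty),\, c_0/3\,\}$; the exponent above is then at most $-c_0 T/3 \le -cT$, that is, $\bbPh_{\delta,\alpha,T}(E_c) \le \e{-cT}$, which is precisely \eqref{p_recurrence_property} with this $c$ and $K = B(0,M)$. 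For the first assertion this is obtained for all $\alpha \ge \alpha^*$, for the second for every $\alpha \ge 0$.

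Once Lemmas~\ref{recurrence_lemma} and~\ref{b_exp_growth} are available there is essentially no remaining obstacle: all the substance sits in those two results (and, ultimately, in the Gaussian domination machinery behind them). The only point that needs care is the order in which the constants are chosen: the compact set $K$ must be enlarged first, so that the mass $\eta_M$ that $V$ carries outside $K$ cannot compete with the exponential rate $c_0$, and only then may $c$ be shrunk, so that a single constant can simultaneously play the role of occupation fraction and of large-deviation rate. (If one wished to permit $V$ a Kato-class singularity rather than assuming $U$ bounded, the term $\delta\Vert V\Vert_\infty\int_0^T\one_K(x_s)\,\de s$ would be replaced by an estimate via the Kato norm of $V\one_K$ together with the occupation time; this is conceptually the same argument.)
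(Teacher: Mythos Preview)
Your proposal is correct and follows essentially the same route as the paper. The paper also reduces both items to the exponential lower bound on $Z$ from Lemmas~\ref{recurrence_lemma} and~\ref{b_exp_growth}, decomposes $V=\tilde V+\overline V$ with $\tilde V=V\one_{[0,M]}$, and uses the reweighting identity together with $\overline V\le\epsilon$ to bound $\bbPh_{\delta,\alpha,T}(\delta\tilde V_0^T\le c^*T)\le e^{(c^*+\epsilon\delta-c)T}$, then translates $\tilde V_0^T\ge C_1 T$ into the occupation-time statement via $\tilde V\le |V(0)|\one_K$; your version simply collapses these two steps by bounding $\delta V_0^T$ directly on the low-occupation event, which is the same computation with a different parametrization of the constants.
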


\begin{proof}
    We provide a proof for (\ref{p_recurrence_first_item}) and indicate the statements that have to be changed for (\ref{p_recurrence_first_item_2}) in parentheses. Let us write
    $$V = V \one_{[0,M]} + V \one_{[0,M]^c} =: \Tilde{V}+ \overline{V},$$
    where $M$ is chosen such that $V(x) \leq \epsilon$ whenever $\Vert x \Vert \geq M$ (recall that our assumptions justify $\lim\limits_{\Vert x \Vert \rightarrow \infty}V(x) = 0$). Lemma \ref{recurrence_lemma} (or Lemma \ref{b_exp_growth}) yields a constant $c>0$ such that
    $$\bbE^{\bbPh_{0,\alpha,T}} \lt[ \e{ \delta V_0 ^T} \rt] \geq \exp(cT)$$ 
    for $\alpha$ sufficiently large ($\alpha \ge 0$), as well as $T$ large.
    For $c^* >0$ we can estimate 
    \begin{equation}
        \nonumber
        \begin{split}
            \bbPh_{\delta,\alpha,T} \lt( \delta \Tilde{V}_0 ^T \leq c^*T \rt) &= \frac{ 1 }{\bbE^{\bbPh_{0,\alpha,T}} \lt[ \e{\delta V_0 ^T} \rt]} \int \bbPh_{0,\alpha,T} (\de x) \e{ \delta \Tilde{V}_0 ^T + \delta \overline{V}_0 ^T} \one_{ \{ \delta \Tilde{V}_0 ^T \leq c^* T \}} \\
            &\leq e^{ (c^*-c)T}  \int \bbPh_{0,\alpha,T} (\de x) \e{ \delta \overline{V}_0 ^T} \one_{ \{ \delta \Tilde{V}_0 ^T \leq c^* T \} } \\
            &\leq e^{(c^* + \epsilon \delta - c)T}. 
        \end{split}
    \end{equation}
    Choosing $c^*,\epsilon >0$ small enough implies that $C_2 = c^* + \epsilon \delta - c$ is negative, and so, abbreviating $C_1 = c^* / \delta$
    $$\bbPh_{\delta,\alpha,T} \lt( \Tilde{V}_0 ^T (x) \geq C_1 T \rt) \geq 1-\e{C_2 T},$$
    which implies 
    $$\bbPh_{\delta,\alpha,T} \lt( \int_0 ^T \one_{[0,M]}(\Vert x_s \Vert) \de s \geq \frac{C_1}{|V(0)|} T \rt) \geq 1- \e{-|C_2| T}.$$
    Choosing the compact set $K=B(0,M)$ in (\ref{p_recurrence_property}) shows the claim.
\end{proof}

\subsection{Proof of Theorem \ref{recurrence_half_time_theorem}}
\label{section_proof_recurrence_half_time_theorem}
From now on we will assume either \ref{local_decrease} or that $H_p$ has a spectral gap. Moreover, $B = B(0,M)$ is the ball that was found in Theorem \ref{first_recurrence_theorem} and will be used in the definition of $\bbPh_{\delta,\alpha,T}$ for the indicator at the end-point.
In the following we will translate the result of Theorem \ref{first_recurrence_theorem} to the escape probabilities of the path at time $T/2$. The main goal for the remainder of this section is to derive a proof for Theorem \ref{recurrence_half_time_theorem}.
In order to do this, some estimates concerning different starting points are required. Let us recall that by the definition of a Brownian bridge it holds that, for $f_i: \bbR^d \rightarrow \bbR_+$ non-negative and $1 < t_1 \leq ... \leq t_n \leq T$
\begin{equation}
    \label{BB_representation}
    \bbE^x \Big[ \e{V_0 ^T} \!\!\! \e{W_0 ^T} \prod\limits_{i=1}^n f_i(x_{t_i}) \Big] \leq \!\!C_I \!\! \int_{\bbR^d} \de y \Pi (x-y) \bbE^{x,y} _{[0,1]} \lt[ \e{V_0 ^1}\!\!\! \e{W_0 ^1} \rt] \bbE^y  \Big[ \e{V_0 ^{T-1}} \e{W_0 ^{T-1}} \prod\limits_{i=1}^n f_i(x_{t_i - 1}) \Big].
\end{equation}
Here, $\bbE^{x,y} _{[0,1]}$ denotes the expectation w.r.t. a Brownian bridge starting in $x$, ending in $y$ of unit length and $\Pi$ is the Gaussian kernel with variance $1$. The constant $C_I > 0$ comes from estimating all $W_0 ^T$ interactions between $[0,1]$ and $[1,T]$. This is possible due to assumption \ref{pair_potential_assumption}. Similarly,
$$\int_{\bbR^d} \de y \Pi (x-y) \bbE^{x,y} _{[0,1]} \Big[ \e{V_0 ^1} \!\!\! \e{W_0 ^1} \Big] \bbE^y  \Big[ \e{V_0 ^{T-1}} \!\!\! \e{W_0 ^{T-1}} \prod\limits_{i=1}^n f_i(x_{t_i -1}) \Big] \leq C_{I} \bbE^x \Big[ \e{V_0 ^T} \e{W_0 ^T} \prod\limits_{i=1}^n f_i(x_{t_i}) \Big].$$

\begin{proposition}
\label{zero_domination}
Fix a compact set $K\subseteq \bbR^d$ and let $f_i: \bbR^d \rightarrow \bbR_+$ be radial, non-negative and $1 < t_1 \leq ... \leq t_n \leq T$. Then, there exists a constant $\overline{c} = \overline{c}(K) >0$ independent of $T$ such that
    \begin{equation}
        \nonumber
        \bbE \lt[ \e{V_0 ^T} \e{W_0 ^T} \prod\limits_{i=1}^n f_i(x_{t_i}) \rt] \leq \overline{c} \inf_{x \in K} \bbE^{x} \lt[ \e{V_0 ^T} \e{W_0 ^T} \prod\limits_{i=1}^n f_i(x_{t_i}) \rt].
    \end{equation}
\end{proposition}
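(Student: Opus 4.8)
The plan is to compare, for each fixed $x\in K$, the quantities $\bbE^0[\cdots]$ and $\bbE^x[\cdots]$ (with the integrand $\e{V_0^T}\e{W_0^T}\prod_i f_i(x_{t_i})$ as in the statement) by splitting the time interval as $[0,T]=[0,1]\cup[1,T]$ and invoking the inequality \eqref{BB_representation} together with the reverse inequality that follows it. Write $R_K:=\sup_{x\in K}\|x\|$, let $\Pi$ be the unit Gaussian density, and put
\[
G(y):=\bbE^y\Big[\e{V_0^{T-1}}\e{W_0^{T-1}}\prod_{i=1}^n f_i(x_{t_i-1})\Big],
\]
the ``tail factor'' common to both inequalities. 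Two elementary observations drive the proof. First, $G\ge 0$ and $G$ is \emph{radial}: $V$ and the $f_i$ are radial, $W$ depends on the path only through the increments $\|x_t-x_s\|$, and Wiener measure is rotation invariant, so rotating the path shows $G(Ry)=G(y)$ for every rotation $R$. Second, the bridge weight $\bbE^{x,y}_{[0,1]}[\e{V_0^1}\e{W_0^1}]$ lies in a fixed interval $[c_1,c_2]\subseteq(0,\infty)$ independent of $x,y,T$: the upper bound $c_2:=\e{\|V\|_\infty}$ follows from $\e{W_0^1}\le 1$ (since $W\le 0$) together with boundedness of $U$ (in the merely Kato-decomposable case one uses the uniform Khasminskii bound instead), and the lower bound $c_1:=\e{-\alpha C_W}$ follows from $\e{V_0^1}\ge 1$ (since $V=-U\ge 0$) and $\sup|W|=:C_W\le\tfrac14\|\hat\varphi/\sqrt{\omega}\|_{L^2}^2<\infty$, whence $\e{W_0^1}\ge\e{-\alpha C_W}$.

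I would then apply \eqref{BB_representation} at starting point $0$ and bound the bridge weight by $c_2$, to get $\bbE^0[\cdots]\le C_Ic_2\int_{\bbR^d}\Pi(y)G(y)\,\de y$; and apply the reverse inequality at starting point $x$ and bound the bridge weight below by $c_1$, to get $\bbE^x[\cdots]\ge C_I^{-1}c_1\int_{\bbR^d}\Pi(x-y)G(y)\,\de y$. The remaining ingredient is the Gaussian shift estimate
\[
\int_{\bbR^d}\Pi(x-y)G(y)\,\de y\ \ge\ \e{-\|x\|^2/2}\int_{\bbR^d}\Pi(y)G(y)\,\de y ,
\]
valid for any radial $G\ge 0$. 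To prove it, write $\Pi(x-y)=\Pi(y)\,\e{x\cdot y-\|x\|^2/2}$; since $y\mapsto G(y)\Pi(y)$ is radial, hence even, symmetrizing $y\mapsto-y$ replaces $\int G(y)\Pi(y)\e{x\cdot y}\,\de y$ by $\int G(y)\Pi(y)\cosh(x\cdot y)\,\de y$, which is $\ge\int G(y)\Pi(y)\,\de y$ because $\cosh\ge 1$ and $G\Pi\ge 0$.

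I expect this symmetrization step to be the crux of the matter: a naive pointwise comparison of the two transition densities $\Pi(x-\cdot)$ and $\Pi(0-\cdot)$ is hopeless, since their ratio is unbounded, and it is precisely the radiality of $G$ — inherited from the hypothesis that the $f_i$ are radial — that rescues the estimate. Combining the three displays and using $\|x\|\le R_K$ yields $\bbE^0[\cdots]\le C_I^2\,c_2c_1^{-1}\,\e{R_K^2/2}\,\bbE^x[\cdots]$ for every $x\in K$; taking the infimum over $x\in K$ proves the proposition with $\overline c=\overline c(K)=C_I^2\,\e{\|V\|_\infty+\alpha C_W+R_K^2/2}$, which depends only on $K$ and the fixed model data, and not on $T$. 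Since every inequality above holds in $[0,\infty]$, no separate finiteness argument is needed.
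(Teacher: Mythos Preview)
Your proof is correct and takes a genuinely different, cleaner route than the paper's. Both arguments start identically: apply the bridge decomposition \eqref{BB_representation} and its reverse to reduce matters to comparing $\int \Pi(y)\,[\text{bridge weight}]\,G(y)\,\de y$ with $\int \Pi(x-y)\,[\text{bridge weight}]\,G(y)\,\de y$, where $G(y)=\bbE^y[\e{V_0^{T-1}}\e{W_0^{T-1}}\prod f_i(x_{t_i-1})]$ is the radial tail factor. The paper then chooses a finite $\tfrac12$-net $M$ on the unit sphere, splits $\bbR^d$ into $B(0,R)$ and its complement, and on the complement uses the pointwise inequality $\Pi(y)\le\sum_{x_i\in rM}\Pi(x_i-y)$ together with rotational invariance to turn the sum into $N$ copies of the $x$-started integral; on $B(0,R)$ a crude density ratio bound suffices. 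You instead bound the bridge weight uniformly by constants $c_1,c_2$ and then exploit radiality of $G$ directly via the identity $\Pi(x-y)=\Pi(y)\e{x\cdot y-\|x\|^2/2}$ and symmetrization, obtaining $\int\Pi(x-y)G(y)\,\de y\ge \e{-\|x\|^2/2}\int\Pi(y)G(y)\,\de y$ from $\cosh\ge 1$. This avoids the net construction and the case split entirely, and yields an explicit constant $\overline c=C_I^2\,c_2 c_1^{-1}\e{R_K^2/2}$.

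One small correction: your claim ``$\e{W_0^1}\le 1$ since $W\le 0$'' is off --- in the paper's setup the pair interaction appearing in the path measure is non-negative and quasi-concave (this is what makes the density symmetric quasi-concave for GCI). However you already observe $|W|\le C_W:=\tfrac14\|\hat\varphi/\sqrt{\omega}\|_{L^2}^2<\infty$, so the bridge weight is still uniformly bounded above by $c_2=\e{\|V\|_\infty+C_W}$; the argument is unaffected, only the value of $c_2$ changes.
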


\begin{proof}
    Pick a set of $N$ points $\{z_1,...,z_N\} =: M \subseteq \{ z \in \bbR^d : \Vert z \Vert = 1 \}$ so that for every $y$ with $\Vert y \Vert = 1$ there exists $z_i \in M$ with 
    $$\Vert y - z_i \Vert \leq \frac{1}{2}.$$  
    Take $R>0$ as some radius so that $K \subseteq B(0,R)$.
    Note that the map
    $$x \mapsto \bbE^x \lt[ \e{V_0 ^T} \e{W_0 ^T} \prod\limits_{i=1}^n f_i(x_{t_i}) \rt]$$
    is invariant under rotations around the origin as the law of Brownian motion is rotationally invariant and all functions that are integrated are radial.
    Furthermore,
    \[
    \sup\limits_{x,y \in K}\frac{ \Pi (y)}{\Pi(x-y)} \leq \e{2R^2}.
    \]
    %\mscomment{Should it be $\e{2R^2}$?} ts: correct
    Pick any $x \in K$ and set $r := \Vert x \Vert$. 
    By construction we find for $y \notin B(0,R)$ that $\Vert y \Vert \geq \min\limits_{x_i \in rM} \Vert y -  x_i \Vert$ which leads to
    \begin{equation}
        \label{sum_kernel}
        \Pi(y) \leq \max\limits_{x_i \in rM} \Pi(x_i-y) \leq \sum\limits_{x_i \in r M} \Pi ( x_i - y).
    \end{equation}
    We use the lower bound via Brownian bridges as given in (\ref{BB_representation}) and estimate the outer integral on $B(0,R)$ and its complement separately. On the first set we find, with 
    $$c := \sup\limits_{y \in K} \frac{\Pi (y) \bbE^{0,y} _{[0,1]} \lt[ \e{V_0 ^1} \e{W_0 ^1}\rt]}{\sum\limits_{x_i \in rM} \Pi (x_i-y) \bbE^{x_i,y} _{[0,1]} \lt[ \e{V_0 ^1} \e{W_0 ^1}\rt]} \le  \e{2R^2 +1 + \delta |V(0)|},$$
    \begin{equation}
        \nonumber
        \begin{split}
           &\int_{B(0,R)} \de y \Pi (y) \bbE^{0,y} _{[0,1]} \lt[ \e{V_0 ^1} \e{W_0 ^1} \rt] \bbE^y  \lt[ \e{V_0 ^{T-1}} \e{W_0 ^{T-1}} \prod\limits_{i=1}^n f_i(x_{t_i -1}) \rt] \\
           %&= \int_{B(0,R)} \de y \frac{\Pi (y) \bbE^{0,y} _{[0,1]} \lt[ \e{V_0 ^1} \e{W_0 ^1}\rt]}{\sum\limits_{x_i \in rM} \Pi (x_i-y) \bbE^{x_i,y} _{[0,1]} \lt[ \e{V_0 ^1} \e{W_0 ^1}\rt]}\sum\limits_{x_i \in rM} \Pi (x_i - y) \bbE^{x_i,y} _{[0,1]} \lt[ \e{V_0 ^1} \e{W_0 ^1} \rt] \bbE^y  \lt[ \e{V_0 ^{T-1}} \e{W_0 ^{T-1}} \prod\limits_{i=1}^n f_i(x_{t_i-1}) \rt] \\
           &\leq c  \int_{B(0,R)} \de y \sum\limits_{x_i \in rM} \Pi (x_i - y) \bbE^{x_i,y} _{[0,1]} \lt[ \e{V_0 ^1} \e{W_0 ^1} \rt] \bbE^y  \lt[ \e{V_0 ^{T-1}} \e{W_0 ^{T-1}}\prod\limits_{i=1}^n f_i(x_{t_i -1}) \rt] \\
           &= c N \int_{B(0,R)} \de y  \Pi (x - y) \bbE^{x,y} _{[0,1]} \lt[ \e{V_0 ^1} \e{W_0 ^1} \rt] \bbE^y  \lt[ \e{V_0 ^{T-1}} \e{W_0 ^{T-1}} \prod\limits_{i=1}^n f_i(x_{t_i-1}) \rt].
        \end{split}
    \end{equation}
    In the last line we used the radial symmetry and the fact that $|rM| = N$ by definition. On $B(0,R)^c$ we use the same radial symmetry and (\ref{sum_kernel}) to obtain the bound 
    \begin{equation}
        \nonumber
        \begin{split}
            &\int_{B(0,R)^c} \de y \Pi (y) \bbE^{0,y} _{[0,1]} \lt[ \e{V_0 ^1} \e{W_0 ^1} \rt] \bbE^y  \lt[ \e{V_0 ^{T-1}} \e{W_0 ^{T-1}} \prod\limits_{i=1}^n f_i(x_{t_i -1}) \rt] \\
            &\leq N \int_{B(0,R)^c} \de y  \Pi (x - y) \bbE^{x,y} _{[0,1]} \lt[ \e{V_0 ^1} \e{W_0 ^1} \rt] \bbE^y  \lt[ \e{V_0 ^{T-1}} \e{W_0 ^{T-1}} \prod\limits_{i=1}^n f_i(x_{t_i-1}) \rt].
        \end{split}
    \end{equation}
    Putting these two bounds together we find that, for any $x \in K$
    \begin{equation}
        \nn
        \begin{split}
            &\int_{\bbR^d} \de y \Pi (y) \bbE^{0,y} _{[0,1]} \lt[ \e{V_0 ^1} \e{W_0 ^1} \rt] \bbE^y  \lt[ \e{V_0 ^{T-1}} \e{W_0 ^{T-1}} \prod\limits_{i=1}^n f_i(x_{t_i-1}) \rt] \\
            &\leq  c N \int_{\bbR^d} \de y  \Pi (x - y) \bbE^{x,y} _{[0,1]} \lt[ \e{V_0 ^1} \e{W_0 ^1} \rt] \bbE^y  \lt[ \e{V_0 ^{T-1}} \e{W_0 ^{T-1}} \prod\limits_{i=1}^n f_i(x_{t_i-1}) \rt],
        \end{split}
    \end{equation}
    where the constants $c,N$ are independent of $x$ \footnote{And, of course, $T$. All we require is that the function we integrate against is non-negative and radially symmetric.}. Taking the inf on the R.H.S. and recalling \ref{pair_potential_assumption} shows the claim.
\end{proof}

\begin{proposition}
\label{prop_zero_max}
Let $f_i: \bbR^d \rightarrow \bbR_+$ be radial, non-negative and quasi-concave. Then, for $0 \leq t_1 \leq ... \leq t_n < \infty$
    and $\Vert x \Vert \ge \Vert y \Vert$,
    $$\bbE^x \lt[ \e{V_0 ^T} \e{W_0 ^T} \prod\limits_{i=1}^n f_i(x_{t_i}) \rt] \leq \bbE^y \lt[ \e{V_0 ^T} \e{W_0 ^T} \prod\limits_{i=1}^n f_i(x_{t_i}) \rt].$$
    In particular, for all $z \in \bbR^d$
    $$\bbE^z \lt[ \e{V_0 ^T} \e{W_0 ^T} \prod\limits_{i=1}^n f_i(x_{t_i}) \rt] \leq \bbE^0 \lt[ \e{V_0 ^T} \e{W_0 ^T} \prod\limits_{i=1}^n f_i(x_{t_i}) \rt].$$
\end{proposition}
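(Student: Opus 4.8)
\textbf{Proof proposal for Proposition \ref{prop_zero_max}.}

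The plan is to reduce the statement to a single application of (GCI) in the version stated as the theorem after equation \eqref{gci}, exploiting that Brownian motion started at $x$ is a centred Gaussian process shifted by the deterministic constant path $x$. First I would write, for a Brownian path $x=(x_s)_{s\in[0,T]}$ under $\bbP$ (hence started at $0$), that the process started at $z$ is $z+x$. Then
$$
\bbE^z\Big[\e{V_0^T}\e{W_0^T}\prod_{i=1}^n f_i(x_{t_i})\Big]
=\bbE\Big[\e{V_0^T(z+x)}\e{W_0^T(z+x)}\prod_{i=1}^n f_i(z+x_{t_i})\Big].
$$
The crucial observations are: (i) the pair-interaction functional $W_0^T$ is translation invariant, so $W_0^T(z+x)=W_0^T(x)$ and drops out of the comparison; (ii) by Assumption \ref{potential_assumption}, $V=-U$ is radial and quasi-convex, i.e.\ $\|y\|\mapsto V(y)$ is decreasing, and hence $V$ is itself quasi-concave as a function on $\bbR^d$; similarly each $f_i$ is radial, non-negative and quasi-concave by hypothesis. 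Thus $x\mapsto \e{\delta'\, V(y+x_s)}$ and $x\mapsto f_i(y+x_{t_i})$ are, for fixed shift $y$, \emph{shifted} quasi-concave functions of the Gaussian path. The standard consequence of (GCI) is precisely that shifting the argument of a quasi-concave function only decreases the Gaussian expectation of a product of such functions when the shift moves the "centre" further from the origin; concretely, for a centred Gaussian measure $\mu$ and symmetric quasi-concave $g\ge 0$ one has $\mu(g(\cdot-a))$ is maximised at $a=0$ and, more generally, is monotone along rays, which is exactly the inequality we need when passing from shift $x$ to shift $y$ with $\|x\|\ge\|y\|$.

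Concretely, I would proceed as follows. Reduce first to $y=0$: it suffices to show $\bbE^x[\cdots]\le\bbE^0[\cdots]$ for all $x$, since then the general statement follows by applying this with the roles played relative to the smaller vector, or more directly by interpolating along the segment from $x$ to $y$ and using rotational invariance (the law of $x$ is rotationally invariant and all integrands are radial, so only $\|x\|$ matters, and it is enough to treat the one-dimensional monotonicity in $t=\|x\|$). Having reduced to comparing $x$ with $0$, write $g(x):=\e{\delta V_0^T(x)}\prod_{i=1}^n f_i(x_{t_i})$, which is a product of non-negative symmetric quasi-concave functions of the path $x\in X=C([0,T];\bbR^d)$ (symmetry: $V$ and $f_i$ are radial, hence even; quasi-concavity: super-level sets of $y\mapsto V(y)$ and of each $f_i$ are convex and symmetric, and composition with the linear evaluation maps $x\mapsto x_{t_i}$ and integration preserve this, as already used via Riemann-sum approximation in the proof of Proposition \ref{conditioning_prop_2}). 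Then the quantity to bound is $\bbE[g(x-(-x))\cdots]$ — more precisely, after removing the translation-invariant factor $\e{W_0^T}$, it is $\bbE[g(z+x)]$ with $z=x$ versus $z=0$. Now invoke the version of (GCI) in the displayed theorem: taking $\mu=\bbP$ (centred Gaussian on $X$), with $f$ on the "reweighted" side encoding the shifted integrand and $g$ on the "centred" side trivial, one gets that the centred expectation dominates, i.e.\ $\bbE[g(z+x)]\le\bbE[g(x)]=\bbE^0[\cdots]$. For the finer monotonicity $\|x\|\ge\|y\|\Rightarrow \bbE^x\le\bbE^y$, apply the same comparison to the centred Gaussian measure obtained by recentring at $y$: the path started at $x$ equals the path started at $y$ shifted by the constant $x-y$, and since $\|x\|\ge\|y\|$ the shift $x-y$ points (after a rotation, harmless by radial symmetry) "outward", so the quasi-concave product integrates to less.

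\textbf{Main obstacle.} The routine parts are the translation invariance of $W_0^T$ and the quasi-concavity/symmetry bookkeeping for the product $\e{\delta V_0^T}\prod f_i$, which mirrors the Riemann-sum argument already given for Proposition \ref{conditioning_prop_2}. The genuinely delicate point is phrasing the "shift monotonicity" so that it fits the hypotheses of the stated (GCI) theorem: that theorem compares $\mu^{(f)}(A)\ge\mu^{(-g)}(A)$ for symmetric convex $A$, i.e.\ it is about \emph{centred} reweightings, whereas a deterministic shift of the argument is not a centred reweighting of a Gaussian. The clean way around this is to not shift the measure at all but to keep the shift inside the test functions: write $\bbE^x[\cdots]=\int \mu(\de w)\, h_x(w)$ with $h_x(w)=\e{\delta V_0^T(x+w)}\e{W_0^T(w)}\prod_i f_i(x+w_{t_i})$ and observe that $h_x/h_0$, equivalently the ratio of the two integrands after cancelling $\e{W_0^T}$, must be shown to be a symmetric quasi-concave function of $w$ — this is where one uses that for a quasi-concave symmetric $G\ge 0$ the map $w\mapsto G(x+w)/G(w)$ need not itself be nice, so instead one should compare $h_x$ to a \emph{rotated} copy $h_{x'}$ with $\|x'\|=\|x\|$ and average, reducing to the one-dimensional statement that $t\mapsto \bbE[G(te_1+w)]$ is even and quasi-concave in $t$ (a direct corollary of (GCI) applied to $G(\cdot)$ and its reflection), hence maximised at $t=0$ and decreasing in $|t|$. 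Getting this corollary cleanly from the displayed theorem — or citing it from \cite{Ro14,Se22} — is the crux; once it is in place, the proposition follows immediately by applying it coordinate-by-coordinate along the segment joining $y$ to $x$.
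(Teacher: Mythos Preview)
Your reduction is sound and matches the paper: $W_0^T$ is translation invariant and can be absorbed as a symmetric quasi-concave functional of increments; by rotational invariance the expectation depends only on $\|z\|$; so everything comes down to showing that
\[
t\;\longmapsto\;\bbE\Big[\,G(te_1+x)\,\Big]
\]
is non-increasing on $[0,\infty)$ when $G$ is a (limit of) product of non-negative, symmetric, quasi-concave functions of the Gaussian path.

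The gap is in the last step. This monotonicity is \emph{not} a consequence of (GCI); it is Anderson's inequality \cite{An55}, and that is exactly what the paper invokes. (GCI) compares $\mu(A\cap B)$ with $\mu(A)\mu(B)$ for \emph{centred} symmetric quasi-concave weights; it says nothing about shifting the argument. Your own diagnosis is correct that a deterministic shift is not a centred reweighting and that the ratio $h_x/h_0$ is neither symmetric nor quasi-concave, so the displayed (GCI) theorem does not apply. The proposed rescue --- ``a direct corollary of (GCI) applied to $G(\cdot)$ and its reflection'' --- does not work: since $G$ is symmetric, its reflection is $G$ itself, and no new information is produced. More generally, there is no way to manufacture the ray-monotonicity $t\mapsto\mu(K+tv)$ from the correlation inequality alone; the standard route is log-concavity of Gaussian densities together with Brunn--Minkowski/Pr\'ekopa--Leindler, which is precisely Anderson's 1955 argument.

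Concretely, the paper discretises: it suffices to show that for symmetric convex sets $A_1,\dots,A_m$ and linear evaluation/increment maps, the map
\[
t\;\longmapsto\;\bbP^{tz}\big(x_{t_1}\in A_1,\dots,\ (x_{t_j}-x_{t_i})\in A_k,\dots\big)
\]
is decreasing on $[0,\infty)$. The vector $(x_{t_1},\dots,x_{t_n})$ under $\bbP^{tz}$ is Gaussian with mean $(tz,\dots,tz)$ and fixed log-concave density, the event is a symmetric convex set (the increment constraints are translation invariant, hence unaffected by the shift), and Anderson's theorem gives the monotonicity directly. Replace your appeal to (GCI) by this citation and your proof is complete.
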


\begin{proof}
    Take $0 \le t_1 < ... < t_n < \infty$ and $A_1,...,A_n,A_{n+1},...,A_{m}$ symmetric and convex sets. To prove the claim it is sufficient to show that for any $z \in \bbR^d$, the map
    $$t \mapsto \bbP^{t z} \big( x_{t_1} \in A_1,...,x_{t_n} \in A_n, (x_{t_n}-x_{t_1}) \in A_{n+1},... \big)$$
    is decreasing on $[0,\infty)$. Note that the vector $(x_{t_1},...,x_{t_n})$ is centred Gaussian with log-concave Lebesgue density. Moreover, all sets are symmetric and convex, so for any linear function $f: C([0,T]; \bbR^d) \rightarrow \bbR^d$ and $i \le m$, the map 
    $$ x \mapsto \one_{A_i}\big(f(x)\big),$$
    is symmetric and quasi-concave.
    The result then follows from Anderson's Theorem \cite{An55}.
    \end{proof}

We now describe the main idea to prove Theorem \ref{recurrence_half_time_theorem}: define the (non-convex) sets
\[
D_i := \{ \text{there exists $s \in [i,i+1]$ so that $x_s \in K$}\}
\]
%\mscomment{I think it is a bit confusing that this $A_i$ is not convex, since in the preceding proof $A_i$ were taken to be convex. 
%What if the non-convex $A_i$ is denoted by $\tilde A_i$?
%Possibly $A_i$ could even denote $\one_{x_i\in K}$ or something.
%} ts: thats a valid point. I guess to make the distinction even clearer, we could also use $D_i$?! I will do this for now, since there is also some $\Tilde A_i$ in the proof of Lemma \ref{recurrence_lemma}. Let me know what you think!
and for $s<t$
\[
K_{[s,t]} := \lt\{ x \in C([0,T;\bbR^d]) : x_v \notin K \text{ for all } v \in [s,T/2-1] \cup [T/2 +1,t] \rt\}.
\]
We can decompose path-space as a disjoint union
$$C([0,T;\bbR^d]) = \bigcup\limits_{i= 0} ^{T/2 - 2} \bigcup\limits_{j= T/2 +1} ^{T}  \lt( D_i \cap K_{i+1,j} \cap D_j\rt),$$
where the convention $D_T = \{ x_s \notin K  \: \forall \: s \ge T/2 + 1\}$ is used.

One would expect that 
\begin{equation}
\label{eq:j-i-exponential-tail}
\bbPh_{\delta,\alpha,T} (D_i \cap K_{i+1,j} \cap D_j) \approx \bbPh_{\delta,\alpha, j-i} (K_{[1,j-i]}) \le \e{-c(j-i)},
\end{equation}
where the last estimate is due to Theorem~\ref{first_recurrence_theorem}. This is because $D_i$ guarantees the path is ``near $K$'' close to time $i$, so we should be able to re-start the process at $0$ without causing major changes, due to \eqref{eq:time-decay-integrable} and Proposition \ref{zero_domination}. Similarly for $D_j$.
Further, \eqref{eq:j-i-exponential-tail} should imply that $j-i$ has exponential tail and is in particular tight. 
Then \eqref{eq:time-decay-integrable} should allow us to approximately reduce attention to the bounded time-interval $[i,j]$, where both endpoints are at the origin.
At this point, tightness of $x_t$ should clearly hold uniformly on $t\in [i,j]$.

While \eqref{eq:j-i-exponential-tail} turns out to be true, we require some preparation to show it. Because these estimates are of technical nature, we will only record the relevant Proposition that is required to conclude the proof of Theorem \ref{main_thm} (and Proposition \ref{simple_gs}) here. A proof can be found in section \ref{approx_markov_section}.
\begin{proposition}
    \label{approx_markov}
    There exists a constant $C>0$ such that for all $T>5$ and $1 \le i < T/2 - 3$, $T-3 > j > T/2 +3$,
    \begin{equation}
        \nn
        \begin{split}
            & \bbE \lt[  \e{I_0 ^T} \one_{D_i} \one_{K_{[i+1,j]}} \one_{D_j} \one_{x_T \in K}  \rt]\\
            &\le C \bbE \lt[ \e{I_0 ^{i+1}} \one_{x_{i+1 \in K}}\rt] \bbE \lt[ \e{I_0 ^{j-i + 2}} \one_{K_{[3,j-i -1]}} \one_{x_{j-i + 2 \in K}}\rt] \bbE \lt[ \e{I_0 ^{T-j-3}} \one_{x_{T-j-3} \in K}\rt].
        \end{split}
    \end{equation} 
\end{proposition}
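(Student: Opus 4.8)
The plan is to factorize the expectation over the three time-blocks $[0,i+1]$, $[i+1,j+1]$ (roughly), and $[j+1,T]$ by repeatedly inserting Brownian-bridge decompositions at the ``good'' times guaranteed by the events $D_i$ and $D_j$, and controlling all cross-block interactions of $W_0^T$ using Assumption \ref{pair_potential_assumption}, specifically the bound \eqref{eq:time-decay-integrable}. First I would handle the event $D_i$: on $D_i$ there is a (random) time $s\in[i,i+1]$ with $x_s\in K$. I would like to cut the path at such a time and restart from the origin, at the cost of a bounded multiplicative constant. Concretely, condition on the path up to time $i+1$, apply the Markov property there, and use that $D_i$ only constrains $[i,i+1]$; then on the block $[0,i+1]$ one is left with $\bbE[\e{I_0^{i+1}}\one_{D_i}\,g(x_{i+1})]$ for a suitable nonnegative function $g$ encoding the remainder, while a Brownian-bridge estimate like \eqref{BB_representation} lets me replace the starting point of the remainder by $0$ up to the constant $C_I$. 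Replacing $\one_{D_i}$ by $\one_{x_{i+1}\in K}$ (shifting the ``good'' time from inside $[i,i+1]$ to the endpoint $i+1$) is done using Proposition \ref{zero_domination} together with quasi-concavity — this is what shifts the indicator to $\one_{x_{i+1}\in K}$ in the claimed bound and accounts for the first factor. The symmetric argument at the other end deals with $D_j$ and the final factor $\bbE[\e{I_0^{T-j-3}}\one_{x_{T-j-3}\in K}]$; here the index bookkeeping (why $T-j-3$ and not $T-j$) comes precisely from the extra unit-length bridge intervals consumed by each cut, plus an application of Proposition \ref{prop_zero_max} to move the other endpoint to $0$.

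For the middle factor, after the two cuts we are looking at an expectation over roughly $[i,j+2]$ with both endpoints pinned near the origin and the event $K_{[i+1,j]}$ forcing the path to avoid $K$ on $[i+1,T/2-1]\cup[T/2+1,j]$. Shifting time so this block starts at $0$, its length becomes about $j-i+2$, and the avoidance event becomes $K_{[3,j-i-1]}$ after absorbing the two bridge-lengths and the shift; this yields the middle factor $\bbE[\e{I_0^{j-i+2}}\one_{K_{[3,j-i-1]}}\one_{x_{j-i+2}\in K}]$. Throughout, the $W_0^T$-interactions between the three blocks are bounded by a constant: interactions across a time-gap are controlled by \eqref{eq:time-decay-integrable} (the $C_I$ bound), and the short-range interactions within each bridge interval are bounded using joint continuity of $W$ from \ref{pair_potential_assumption}. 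Collecting all these finitely many bounded constants into a single $C>0$, independent of $T,i,j$, gives the stated inequality.

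The main obstacle I anticipate is the first step: cleanly ``restarting at the origin'' from a random good time inside $[i,i+1]$ while the event $D_i$ is genuinely non-convex (it is a union over $s\in[i,i+1]$ of the convex events $\{x_s\in K\}$). One cannot directly apply a GCI-type inequality to $\one_{D_i}$. The way around this is to not restart at the random time $s$ but at the deterministic time $i+1$, using the Markov property there, and to pay for the discrepancy between ``$x_s\in K$ for some $s\in[i,i+1]$'' and ``$x_{i+1}\in K$'' via the Brownian-bridge comparison \eqref{BB_representation} and Proposition \ref{zero_domination}, both of which only require radial symmetry and nonnegativity of the integrand — not convexity of $D_i$ — so they apply with $D_i$ left intact inside the expectation. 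The remaining care is purely combinatorial: tracking how many unit intervals are consumed at each stage so that the time-lengths $i+1$, $j-i+2$, $T-j-3$ and the shifted avoidance window $K_{[3,j-i-1]}$ come out exactly as stated, and checking that the constraints $1\le i<T/2-3$ and $T/2+3<j<T-3$ leave enough room for all the cuts.
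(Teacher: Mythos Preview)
Your three-block factorization and the use of \eqref{eq:time-decay-integrable} to kill cross-block $W$-interactions are exactly right, and the index bookkeeping you outline is essentially correct. The gap is in the step you flag as the main obstacle, and your proposed workaround does not succeed. If you cut at the \emph{deterministic} time $i+1$, the restart point $x_{i+1}$ is uncontrolled: on $D_i$ the path merely visited $K$ somewhere in $[i,i+1]$, and $x_{i+1}$ can be arbitrarily far from $K$. You then need $\bbE^{x_{i+1}}[\ldots]\le C\,\bbE^0[\ldots]$ for the remainder, but the remainder contains the indicator $\one_{K_{[\cdot,\cdot]}}$, which is the indicator of \emph{avoiding} the convex set $K$ over a time window and is therefore not quasi-concave. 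Proposition~\ref{prop_zero_max} (the only tool here giving the upper-bound direction ``arbitrary start $\le$ origin start'') requires quasi-concave integrands and so does not apply; Proposition~\ref{zero_domination} goes the wrong way for an upper bound and anyway only compares the origin to points \emph{in} $K$, not to an arbitrary $x_{i+1}$. Likewise, the conversion $\one_{D_i}\to\one_{x_{i+1}\in K}$ is not a consequence of either proposition.

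The paper instead applies the \emph{strong} Markov property at the stopping time $\sigma_L=\inf\{t\ge i:x_t\in K\}$ (so that $D_i=\{i\le\sigma_L\le i+1\}$), which forces the restart point to lie in $K$. The residual dependence on the random $\sigma_L$ inside the middle and right blocks is removed right-to-left: a Gaussian comparison (Proposition~\ref{pin_move_prop}) absorbs the $\sigma_L$-shift in the final endpoint indicator, and iterating Proposition~\ref{move_pinning_to_left_prop} produces the index $T-j-3$. The conversion of the hitting events $D_i,D_j$ to endpoint indicators is done by the elementary hitting-time trick \eqref{existence_vs_endpoint_pin} (once you are in $K$ at a stopping time, with uniformly positive probability you are still in $K$ two units later). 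Finally, the delicate comparison ``start at $y\in K$ $\le$ start at $0$'' for the middle block, in the presence of the non-quasi-concave $\one_{K_{[\cdot,\cdot]}}$, is the content of Proposition~\ref{technical_estimate_thm1_4}, whose proof uses yet another strong-Markov step (at the first exit time of $B(0,\|y\|)$) together with rotational symmetry --- not GCI. In short: keep your block decomposition, but replace the deterministic cut and GCI-based starting-point comparison by strong Markov at $\sigma_L,\sigma_R$ and the stopping-time arguments of Propositions~\ref{pin_move_prop}--\ref{technical_estimate_thm1_4}.
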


Let us also mention that one can always add (and remove) a constant amount of terms from the interaction term $I_0 ^T$ by multiplying with a finite constant. In other words, we always have for any $f$ for example
$$\bbE^z \lt[ \e{I_0 ^T} f(x) \rt] \le C_1 \bbE^z  \lt[ \e{I_0 ^{T-2}} f(x) \rt] \le C_2 \bbE^z \lt[ \e{I_3 ^{T+3}} f(x) \rt],$$
where the constants $C_1,C_2$ do not depend on $z \in \bbR^d$ (or $f$) and only depend on the amount of time that is added or removed (so in the above example $2$,$3$ and $5$). In the remaining part of this work we will not note this whenever such an estimate is made and only replace e.g. $C_1$ by $C_2$. We will denote by $(x_t)_{t \le T}$, $(\Tilde{x}_t)_{t \le T}$, $(\overline{x}_t)_{t \le T}$, $(x^\dag _t)_{t \le T}$ independent Brownian motions.

\begin{proof}[Proof of Theorem \ref{recurrence_half_time_theorem}]
Our goal is to find an integer $L$ so that
\begin{equation}
    \label{half_time_goal}
    \sup\limits_{T\ge 2L} \sum\limits_{i < T/2 - L, j \ge T/2 + L} \bbPh_{\delta,\alpha,T}(D_i, K_{[i+1,j]}, D_j) < \epsilon
\end{equation}
for $\epsilon > 0$.
Abbreviate
$$ D(L) := \bigcup\limits_{i= T/2 - L} ^{T/2 -2} \bigcup\limits_{j= T/2 + 1} ^{T/2 + L} D_i \cap K_{[i+1,j]} \cap D_j$$
and note that for each $x \in D(L)$ there exist $T/2 - L \le  i < T/2 -1$ and $T/2 + L \ge j > T/2 +1$ so that $x_i \in K$, $x_j \in K$ and that
\begin{equation}
    \label{non-vanishing-L}
    \liminf\limits_{T \rightarrow \infty} \bbPh_{\delta,\alpha,T}(D(L)) > 0,
\end{equation}
assuming that an $L$ for which (\ref{half_time_goal}) holds can be found. Let us therefore study the quantity $\bbPh_{\delta,\alpha,T}(D_i, K_{[i+1,j]}, D_j)$ for $i \ge 1$, $j \le T-3$ and $j-i > 4$. 
By conditioning on $\{x_{i+1} \in K, x_{j+3} \in K\}$, writing everything out and applying Proposition \ref{approx_markov} we find 
\begin{equation}
    \nn
    \begin{split}
    &\bbPh_{\delta,\alpha,T}(D_i, K_{[i+1,j]}, D_j) \\
    &= \frac{ \bbE \lt[  \e{I_0 ^T} \one_{D_i} \one_{K_{[i+1,j]}} \one_{D_j} \one_{x_T \in K } \rt] }{ \bbE \lt[ \e{I_0 ^T} \one_{x_T \in K } \rt] } \\
    &\le C \frac{ \bbE \lt[ \e{I_0 ^{i+1}} \one_{x_{i+1 \in K}}\rt] \bbE \lt[ \e{I_0 ^{j-i + 2}} \one_{K_{[3,j-i -1]}} \one_{x_{j-i + 2 \in K}}\rt] \bbE \lt[ \e{I_0 ^{T-j-3}} \one_{x_{T-j-3} \in K}\rt] }{ \bbE \lt[ \e{I_0 ^T} \one_{x_T \in K } \rt] } \\
    %&=  C \frac{ \bbE \lt[ \e{I_0 ^{i+1}} \one_{x_{i+1 \in K}}\rt] \bbE \lt[ \e{I_0 ^{j-i + 2}} \one_{K_{[3,j-i -1]}} \one_{x_{j-i + 2 \in K}}\rt] \bbE \lt[ \e{I_0 ^{T-j-3}} \one_{x_{T-j-3} \in K}\rt] }{ \bbE \lt[  \e{I_0 ^T} \one_{x_{i+1} \in K} \one_{x_{j+3} \in K} \one_{x_T \in K } \rt] } \frac{ \bbE \lt[  \e{I_0 ^T} \one_{x_{i+1} \in K} \one_{x_{j+3} \in K} \one_{x_T \in K } \rt] }{ \bbE \lt[ \e{I_0 ^T} \one_{x_T \in K } \rt] }\\
    &\le C \frac{ \bbE \lt[ \e{I_0 ^{i+1}} \one_{x_{i+1 \in K}}\rt] \bbE \lt[ \e{I_0 ^{j-i + 2}} \one_{K_{[3,j-i -1]}} \one_{x_{j-i + 2 \in K}}\rt] \bbE \lt[ \e{I_0 ^{T-j-3}} \one_{x_{T-j-3} \in K}\rt] }{ \bbE \lt[ \e{I_0 ^T} \one_{x_{i+1} \in K} \one_{x_{j+3} \in K} \one_{x_T \in K } \rt]} =: (*).
    \end{split}
\end{equation}
We estimate the denominator by
\begin{equation}
    \nn
    \begin{split}
        & \bbE \lt[ \e{I_0 ^T} \one_{x_{i+1} \in K} \one_{x_{j+3} \in K} \one_{x_T \in K} \rt] \\
        &\ge C_1 \bbE \lt[ \e{I_0 ^{i+1}} \one_{x_{i+1} \in K} \e{I_{i+1} ^{j+3}} \one_{x_{j+3} \in K} \e{I_{j+3} ^T}\one_{x_T \in K} \rt] \\
        &\ge C_1 ^2 \overline{c}^2 \bbE \lt[ \e{I_0 ^{i+1}} \one_{x_{i+1} \in K} \rt] \: \bbE \lt[ \e{I_0 ^{j-i +2}} \one_{x_{j-i +2} \in K} \rt] \: \bbE \lt[ \e{I_0 ^{T-j-3}} \one_{x_{T-j-3} \in K} \rt].
    \end{split}
\end{equation}
The last inequality follows from Proposition \ref{zero_domination} and the Markov property of Brownian motion. Now most terms cancel, which gives
$$(*) \le C_3 \frac{ \bbE \lt[  \e{I_0 ^{j-i + 2}} \one_{K_{[3,j-i -1]}} \one_{x_{j-i + 2 \in K}} \rt]}{\bbE \lt[ \e{I_0 ^{j-i +2}} \one_{x_{j-i +2} \in K} \rt] } = C_3 \bbPh_{\delta,\alpha,j-i +2} (K_{[3,j-i -1]}) \le \e{-c(j-i+2)}$$
where the last inequality follows from (\ref{p_recurrence_property}) for $i+j$ large enough. Let us also not forget the cases $i=0$ and $j \ge T-3$. If $i =0$ we can estimate
$$\bbPh_{\delta,\alpha,T}(D_0, K_{[1,j]},D_j) \le \bbPh_{\delta,\alpha,T}( K_{[1,j]},D_j).$$
It is now possible to perform the same steps as above, with the difference that one just has a condition on $x_{j}$, which yields the same result. The case $j \ge T-3$ is treated analogously.

We have thus shown that
\[
\sum\limits_{T/2 - i \ge L, j-T/2  \ge L} \bbPh_{\delta,\alpha,T}(D_i, K_{[i+1,j]}, D_j) \le C \sum\limits_{T/2 - i \ge L, j-T/2  \ge L} \bbPh_{\delta,\alpha,j-i +2} (K_{[3,j-i -1]}),
\]
which by the above estimate can be made as small as necessary (the term on the right is independent of $T$, so the bound holds uniformly in $T$). Hence it is possible to find an $L$ such that (\ref{non-vanishing-L}) is satisfied.
We abbreviate $D = D(L)$ and see that
\[
\liminf\limits_{T \rightarrow \infty}\bbPh_{\delta,\alpha,T}(x_{T/2} \! \in \! K) \! \geq \!\liminf\limits_{T \rightarrow \infty} \bbPh_{\delta,\alpha,T} (D) \bbPh_{\delta,\alpha,T}(x_{T/2}\! \in  \!K | D) \! \geq \! C \liminf\limits_{T \rightarrow \infty} \bbPh_{\delta,\alpha,T}(x_{T/2} \! \in \! K \! | D).
\]
Using the abbreviations 
\[
D_L := \bigcup\limits_{i=T/2 - L} ^{T/2 -2} D_i;
\quad\quad
D_R :=  \bigcup\limits_{i=T/2 + 1} ^{T/2 + L} D_i 
\]
we can conclude

\begin{equation}
    \nn
    \begin{split}
        &\bbPh_{\delta,\alpha,T}(x_{T/2} \in K | D) = \frac{ \bbE \lt[ \e{I_0 ^T} \one_{D_L} \one_{x_{T/2} \in K} \one_{D_R} \one_{x_T \in K} \rt] }{ \bbE \lt[  \e{I_0 ^T} \one_{D_L} \one_{D_R} \one_{x_T \in K} \rt]} \\
        &\ge C_1 \frac{ \bbE \lt[ \e{I_0 ^{T/2 -1}} \one_{x_{T/2-1 \in K}}  \bbE^{x_{T/2 - 1}} \lt[  \e{I_0 ^{L+2}} \one_{\Tilde{x}_{1} \in K} \one_{\Tilde{x}_{L+2} \in K} \bbE^{\Tilde{x}_{L+2}} \lt[ \e{I_0 ^{T/2 - L-1}}  \one_{\overline{x}_{T/2 - L-1} \in K} \rt] \rt] \rt]  }{ \bbE \lt[  \e{I_0 ^T} \one_{D_L} \one_{D_R} \one_{x_T \in K} \rt]} \\
        &\ge C_1 \overline{c}^2 \frac{ \bbE \lt[ \e{I_0 ^{T/2 -1}} \one_{x_{T/2-1 \in K}}\rt]  \bbE \lt[  \e{I_0 ^{L+2}} \one_{\Tilde{x}_{1} \in K} \one_{\Tilde{x}_{L+2} \in K} \rt] \bbE \lt[ \e{I_0 ^{T/2 - L -1}}  \one_{\overline{x}_{T/2 - L-1} \in K}   \rt]  }{ \bbE \lt[  \e{I_0 ^T} \one_{D_L} \one_{D_R} \one_{x_T \in K} \rt] } \\
        &\ge C_2 \overline{c}^2 \frac{ \bbE \lt[ \e{I_0 ^{T/2 -1}} \one_{x_{T/2-1 \in K}}\rt]  \bbE \lt[  \e{I_0 ^{L+2}} \one_{\Tilde{x}_{1} \in K} \one_{\Tilde{x}_{L+2} \in K} \rt] \bbE \lt[ \e{I_0 ^{T/2 - L -1}}  \one_{\overline{x}_{T/2 - L-1} \in K}   \rt]  }{ \bbE \lt[ \e{I_0 ^{T/2 -1}} \one_{x_{T/2-1 \in K}}\rt]  \bbE \lt[  \e{I_0 ^{L+2}} \one_{\Tilde{x}_{L+2} \in K} \rt] \bbE \lt[ \e{I_0 ^{T/2 - L -1}}  \one_{\overline{x}_{T/2 - L-1} \in K}   \rt] } \\
        &\ge C_2 \overline{c} \bbP \lt( x_1 \in K \rt).
    \end{split}
\end{equation}
The first inequality follows from modifying the interaction terms and replacing $\one_{D_L} \ge \one_{x_{T/2-1} \in K}$ (similarly for $D_R$).The next inequality follows from Proposition \ref{zero_domination} (modify the proof to also allow $t=1$). The last estimate is (GCI), since the entire density is now symmetric and quasi-concave. To obtain the estimate for the denominator we calculate 
\begin{equation}
    \nn
    \begin{split}
        &\bbE \lt[  \e{I_0 ^T} \one_{D_L} \one_{D_R} \one_{x_T \in K} \rt] \\
        &\le C_1 \bbE \lt[  \e{I_0 ^{T/2 - 1}} \one_{D_L} \bbE^{x_{T/2 -1}} \lt[  \e{I_0 ^{L+2}}\one_{\Tilde{D}_R} \bbE^{\Tilde{x}_{L+2}} \lt[ \e{I_0 ^{T/2 - L -1}} \one_{\overline{x}_{T/2 -L -1} \in K} \rt] \rt] \rt] \\
        &\le C_1 \bbE \lt[  \e{I_0 ^{T/2 - 1}} \one_{D_L} \bbE^{x_{T/2 -1}} \lt[  \e{I_0 ^{L+2}}\one_{\Tilde{D}_R}  \rt] \rt] \bbE \lt[ \e{I_0 ^{T/2 - L -1}} \one_{\overline{x}_{T/2 -L -1} \in K} \rt]   \\
        &\le C_2 \bbE \lt[  \e{I_0 ^{T/2 - 1}} \one_{D_L} \rt]\bbE \lt[  \e{I_0 ^{L+2}}\one_{\Tilde{x}_{L+2} \in K}  \rt]  \bbE \lt[ \e{I_0 ^{T/2 - L -1}} \one_{\overline{x}_{T/2 -L -1} \in K} \rt] \\
        &\le C_3 \bbE \lt[  \e{I_0 ^{T/2 - 1}} \one_{x_{T/2 -1} \in K} \rt]\bbE \lt[  \e{I_0 ^{L+2}}\one_{\Tilde{x}_{L+2} \in K}  \rt]  \bbE \lt[ \e{I_0 ^{T/2 - L -1}} \one_{\overline{x}_{T/2 -L -1} \in K} \rt].
    \end{split}
\end{equation}
Here, the first inequality again follows from modifying the interaction terms. The second estimate follows from Proposition \ref{prop_zero_max}, and the third  
  from a calculation analogous to (\ref{existence_vs_endpoint_pin}) \footnote{Note that $L$ is now a fixed number, so it is possible to introduce/remove all interaction terms in the second term.}. The same goes for the last inequality, which then shows the claim.
\end{proof}

\subsection{Proof of Theorem \ref{main_thm} and Proposition \ref{simple_gs}}
\label{main_thm_proofs}
We now provide the proof for the main Theorem of this article. The proof for Proposition \ref{simple_gs} is completely analogous.
\begin{proof}[Proof of Theorem \ref{main_thm}]
Let $\psi$ be the indicator of the symmetric, convex and compact set $K$ that is given by Theorem \ref{recurrence_half_time_theorem}.
We will show
\begin{equation}
\label{eq:liminf-rayleigh-non-zero}
\liminf\limits_{T \rightarrow \infty} \frac{\la \psi, \e{-TH} \psi \ra}{ \Vert \e{-TH} \psi \Vert_2} > 0.
\end{equation}
It is standard that this implies the desired claim; see Theorem \ref{ground_state_options} for details.

We first prove that
\begin{equation}
    \label{final_proof_first_bound}
    \frac{  \la \e{-T H}\psi, \psi \e{-T H}\psi \ra  }{ \Vert \e{-T H}\psi \Vert_2 ^2} = \frac{  \la \e{-T H}\psi, \psi \e{-T H}\psi \ra  }{ \la \psi, \e{-2T H}\psi \ra  } \geq \Tilde{C} > 0.
\end{equation}
We recall that we can choose w.l.o.g. this same set for the conditioning of $\bbPh_{\delta,\alpha,2T}$.
The denominator of (\ref{final_proof_first_bound}) can be related to the corresponding partition function of $\bbPh_{\delta,\alpha,2T}$ by (using $C_m = \lambda(K)$ \footnote{Here $\lambda$ denotes the Lebesgue measure.})
\begin{equation}
    \nonumber
    \begin{split}
        \la \psi, \e{-2T H}\psi \ra \leq C_m C_I \bbE \lt[ \e{I_0 ^{2T}} \psi(x_T)\rt]
    \end{split}
\end{equation}
For the nominator Proposition \ref{zero_domination} yields a constant $C_I$ so that
\begin{equation}
    \nonumber
    \begin{split}
        \la \e{-T H}\psi, \psi \e{-T H}\psi \ra &= \int \de x \psi (x) \bbE^x \lt[ \e{I_0 ^{T} }\psi(x_{T}) \bbE^{x_{T}} \lt[ \e{I_0 ^{T}} \psi(\Tilde{x}_{T}) \rt]\rt] \\
        &\geq C_m C_I \bbE \lt[ \e{I_0 ^{T} }\psi(x_{T}) \bbE^{x_{T}} \lt[ \e{I_0 ^{T}} \psi(\Tilde{x}_{T}) \rt]\rt] \\
        &\geq C_m C_I \bbE \lt[ \e{I_0 ^{2T} }\psi(x_{T}) \psi(x_{2T}) \rt].
    \end{split}
\end{equation}
But now, the expression in (\ref{final_proof_first_bound}) is proportional to 
$\bbPh_{\delta,\alpha,2T} (x_{T} \in K)$,
and so the estimate follows from Theorem \ref{recurrence_half_time_theorem}.

Rewriting (\ref{final_proof_first_bound}) yields
\begin{equation}
    \label{quadratic_non_vanish}
    \int \de x  \frac{\lt(\e{-TH}\psi\rt)(x)^2}{ \Vert \e{-TH}\psi \Vert_2 ^2} \psi(x) \geq \Tilde{C} > 0.
\end{equation}
Using starting point comparisons once again, we find that
\[
x \mapsto \frac{\lt(\e{-TH}\psi(x)\rt)^2}{ \Vert \e{-TH}\psi \Vert_2 ^2}
\]
is uniformly bounded in $T$ with a maximum at $0$. In conjunction with (\ref{quadratic_non_vanish}) this gives
\[
\liminf\limits_{T \rightarrow \infty} \int \de x \frac{ \lt( \e{-TH}\psi  \rt)(x)}{\Vert \e{-TH}\psi \Vert_2 } \psi(x) > 0.
\]
This is equivalent to \eqref{eq:liminf-rayleigh-non-zero}, so as previously mentioned Theorem \ref{ground_state_options} finishes the proof.
\end{proof}

\section{Proof of Proposition \ref{approx_markov}}
 \label{approx_markov_section}
 \setcounter{equation}{2}
\begin{proposition}
    \label{move_pinning_to_left_prop}
    There exists a constant $C>0$ such that for all $T > 2$ it holds that
    $$\bbE \lt[ \e{I_0 ^{T-1}} \one_{x_{T-1} \in K}  \rt] \ge C \bbE \lt[ \e{I_0 ^{T}} \one_{x_{T} \in K}  \rt].$$
\end{proposition}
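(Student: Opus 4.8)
The plan is to cut the last unit of time off the path and show that this changes the partition function only by a bounded factor. First I would record two pathwise facts: by \ref{potential_assumption}, $V=-U$ satisfies $0\le V\le V(0)$, and by \ref{pair_potential_assumption} together with \eqref{field_integrated_pair_potential}, $W\le 0$ everywhere (since $W(\cdot,t)$ is decreasing with $W(0,t)=\overline W(0,t)\le 0$). Splitting $[0,T]^2=[0,T-1]^2\cup\big([0,T]^2\setminus[0,T-1]^2\big)$, the additional $W$-interactions over the second region are non-positive while $\delta\int_{T-1}^T V(x_s)\,\de s\le \delta V(0)$, so $\e{I_0^T}\le \e{\delta V(0)}\e{I_0^{T-1}}$ pathwise. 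Since $\e{I_0^{T-1}}$ depends only on the path up to time $T-1$, the Markov property then gives
\[
\bbE\big[\e{I_0^T}\one_{x_T\in K}\big]\le \e{\delta V(0)}\,\bbE\big[\e{I_0^{T-1}}\one_{x_T\in K}\big]=\e{\delta V(0)}\,\bbE\big[\e{I_0^{T-1}}\,p(x_{T-1})\big],\qquad p(y):=\bbP^y(x_1\in K),
\]
so it remains to prove $\bbE[\e{I_0^{T-1}}p(x_{T-1})]\le 2\,\bbE[\e{I_0^{T-1}}\one_K(x_{T-1})]$; this yields the claim with $C=\tfrac12\e{-\delta V(0)}$.

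For this last inequality I would let $q$ be the (sub-probability) density of $x_{T-1}$ under $\e{I_0^{T-1}}\bbP$, so that the two sides are $\int_{\bbR^d} q(y)p(y)\,\de y$ and $\int_K q(y)\,\de y$, and then show that $q$ is radially symmetric and radially non-increasing. The point is that $I_0^{T-1}$ is invariant under the time reversal $s\mapsto (T-1)-s$ (both $V_0^{T-1}$ and $W_0^{T-1}$ are), so reversing the Brownian bridge makes the $\e{I_0^{T-1}}$-weighted transition density from $a$ to $b$ over time $T-1$ symmetric in $(a,b)$; in particular $q(y)$ equals the density at $0$ of the measure $A\mapsto \bbE^y[\e{I_0^{T-1}}\one_{x_{T-1}\in A}]$. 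Applying Proposition \ref{prop_zero_max} with $n=1$, $t_1=T-1$, $f_1=\one_{B(0,\varepsilon)}$ and sending $\varepsilon\downarrow 0$ shows this is non-increasing in $\Vert y\Vert$, while rotational invariance of Wiener measure and of $I_0^{T-1}$ gives radial symmetry. With this in hand I split $\int qp=\int_K qp+\int_{K^c}qp$: on $K=B(0,M)$ I use $p\le 1$ to get $\int_K qp\le\int_K q$; on $K^c$ I use $q\le q_M$, where $q_M$ is the value of $q$ on $\{\Vert y\Vert=M\}$, together with $\Vert p\Vert_{L^1(\bbR^d)}=\int_{\bbR^d}\bbP^y(x_1\in K)\,\de y=\lambda(K)$ (Fubini) to get $\int_{K^c}qp\le q_M\,\lambda(K)\le\int_K q$, the last step because $q\ge q_M$ on $K$. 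Adding, $\int qp\le 2\int_K q$, as needed.

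The main obstacle will be the monotonicity of $q$, i.e.\ that the $\e{I_0^{T-1}}\bbP$-law of the endpoint is maximally concentrated at the origin; reducing it to Proposition \ref{prop_zero_max} relies on the time-reversal identity that converts "value of $x_{T-1}$" into "starting point", which is valid precisely because $W_0^{T-1}$ and $V_0^{T-1}$ are symmetric under $s\mapsto(T-1)-s$. A minor technical point is that we use $\one_K$ quasi-concave, i.e.\ $K$ convex and symmetric; in the setting where this proposition is used $K$ is a centred ball, so this is automatic (and otherwise one enlarges $K$ to a centred ball, which only makes \eqref{p_recurrence_property} easier to satisfy).
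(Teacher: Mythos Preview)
Your argument is correct, but it is considerably more elaborate than necessary. The paper's proof cuts the \emph{first} unit of time rather than the last: from $\e{I_0^T}\le C\,\e{I_1^T}$ one conditions on $x_1$ via the Markov property to obtain
\[
\bbE\big[\e{I_0^T}\one_{x_T\in K}\big]\le C\int_{\bbR^d}\Pi(y)\,\bbE^{y}\big[\e{I_0^{T-1}}\one_{x_{T-1}\in K}\big]\,\de y,
\]
and then Proposition~\ref{prop_zero_max} (with $f_1=\one_K$ quasi-concave) immediately bounds each $\bbE^y[\cdots]$ by $\bbE^0[\cdots]$, giving the result in one line. Because the starting point is the object being averaged, the monotonicity in the starting point from Proposition~\ref{prop_zero_max} applies directly --- no time-reversal is needed.

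By cutting at the end instead, you turn the problem into a statement about the \emph{endpoint} density $q$, which forces you to invoke the time-reversal symmetry of $I_0^{T-1}$ to convert endpoint monotonicity back into starting-point monotonicity, and then to run a separate $L^1$ comparison between $p$ and $\one_K$ to handle the remaining Gaussian smoothing. All of this is valid (and the $\int p=\lambda(K)$ trick is nice), but it reproduces by hand what Proposition~\ref{prop_zero_max} gives for free when applied at the start of the path. The upshot: your route works and illustrates that the reversibility of $I_0^{T-1}$ makes the two ends interchangeable, but the paper's approach is the more economical one-liner.
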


\begin{proof}
    This is a direct application of Proposition \ref{prop_zero_max}. Indeed,
    \begin{equation}
        \nn
        \begin{split}
            \bbE \lt[ \e{I_0 ^{T}} \one_{x_{T} \in K}\rt] \le C \bbE \lt[ \e{I_1 ^{T}} \one_{x_{T} \in K}\rt] &= C\int_{\bbR^d}\de y \Pi(y) \bbE^{y} \lt[\e{I_0 ^{T-1}} \one_{x_{T-1} \in K} \rt] \\
            &\le C\int_{\bbR^d}\de y \Pi(y) \bbE \lt[\e{I_0 ^{T-1}} \one_{x_{T-1}\in K} \rt] \\
            &=C \bbE \lt[\e{I_0 ^{T-1}} \one_{x_{T-1}\in K} \rt]. \qedhere
        \end{split}
    \end{equation}
\end{proof}

\begin{proposition}
\label{pin_move_prop}
    Let $X \sim \caN(0,\text{I}_{d \times d})$.
    Then, for all $\sigma \in [1,2]$, for every $R > 0$ and $z \in \bbR^d$
    $$\bbP ( \Vert X + z \Vert \le R) \le \sqrt{2^d} \bbP \lt( \Vert \sqrt{\sigma} X + z \Vert \le R \rt).$$
\end{proposition}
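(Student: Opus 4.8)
The plan is to reduce this inequality to a pointwise comparison of Gaussian densities over a common integration region. Write $\varphi_\tau(y) = (2\pi\tau)^{-d/2}\e{-\Vert y\Vert^2/(2\tau)}$ for the Lebesgue density of $\caN(0,\tau\,\text{I}_{d\times d})$. Then $\varphi_1$ is the density of $X$, and since $\sqrt{\sigma}X \sim \caN(0,\sigma\,\text{I}_{d\times d})$, the density of $\sqrt{\sigma}X$ is $\varphi_\sigma$. Substituting $u = X+z$ on the left and $u = \sqrt{\sigma}X+z$ on the right, I would rewrite both sides as integrals over the same ball:
$$\bbP\big(\Vert X+z\Vert \le R\big) = \int_{B(0,R)} \varphi_1(u-z)\,\de u, \qquad \bbP\big(\Vert \sqrt{\sigma}X+z\Vert \le R\big) = \int_{B(0,R)} \varphi_\sigma(u-z)\,\de u.$$

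Next I would establish the pointwise bound $\varphi_1(w) \le \sqrt{2^d}\,\varphi_\sigma(w)$ for every $w \in \bbR^d$. This is immediate from computing the ratio of the two densities:
$$\frac{\varphi_1(w)}{\varphi_\sigma(w)} = \sigma^{d/2}\exp\!\Big(-\tfrac{\Vert w\Vert^2}{2}\big(1-\tfrac1\sigma\big)\Big) \le \sigma^{d/2} \le 2^{d/2} = \sqrt{2^d},$$
where the first inequality uses $\sigma \ge 1$, so that the exponent is nonpositive, and the second uses $\sigma \le 2$. Integrating this inequality with $w = u-z$ over $u \in B(0,R)$ then yields exactly the asserted estimate.

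I do not expect any genuine obstacle here, since the argument is a one-line density comparison; the proof is essentially just monotonicity in $\sigma$ of the two factors. The only point worth emphasizing is that the full range $\sigma \in [1,2]$ is used in an essential way: the lower endpoint $\sigma \ge 1$ guarantees that the Gaussian factor $\exp(-\tfrac{\Vert w\Vert^2}{2}(1-\tfrac1\sigma))$ is at most $1$, while the upper endpoint $\sigma \le 2$ guarantees that the normalization prefactor $\sigma^{d/2}$ is absorbed by $\sqrt{2^d}$.
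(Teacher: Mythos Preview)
Your proof is correct and uses the same core idea as the paper: a pointwise comparison of Gaussian densities integrated over the ball $B(0,R)$. The paper phrases it as a reduction to dimension one (arguing that the $d$-dimensional density factors into one-dimensional ones, so the bound $\varphi_1 \le \sqrt{2}\,\varphi_\sigma$ in each coordinate gives $\varphi_1 \le \sqrt{2^d}\,\varphi_\sigma$ in $\bbR^d$), whereas you compute the ratio $\varphi_1/\varphi_\sigma$ directly in $\bbR^d$; your presentation is slightly cleaner since the event $\{\Vert X+z\Vert\le R\}$ does not itself factor over coordinates, so working directly with the $d$-dimensional density avoids any ambiguity about what the ``reduction to one dimension'' means.
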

\begin{proof}
    By the convolution formula it suffices to show the claim in one dimension, i.e. we show for every $r \in \bbR$
    $$\bbP (  | Y - r | \le R ) \le C \bbP (  | \sqrt{\sigma}Y - r | \le R )$$
    with $Y \sim \caN(0,1)$. The result is then a direct calculation. Indeed,
    \begin{equation}
        \nn
        \begin{split}
            \bbP (  | Y - r | \le R ) = \bbP( Y \in [y-R,y+R]) &= \frac{1}{\sqrt{2 \pi}} \int_{r - R} ^{r+R} \de x \e{-\frac{x^2}{2}} \\
            &\leq \sqrt{2} \frac{1}{\sqrt{2 \pi \sigma}} \int_{y-R} ^{y+R} \e{- \frac{x^2}{2}} \de x \\
            &\le \sqrt{2}\frac{1}{\sqrt{2 \pi \sigma}} \int_{y-R} ^{y+R} \e{- \frac{x^2}{2 \sigma}} \de x \\
            &= \sqrt{2} \bbP( \sqrt{\sigma} Y \in [y-R,y+R]).
        \end{split}
    \end{equation}
    Here, the last inequality simply follows from
    $$- \frac{x^2}{2} \le - \frac{x^2}{2 \sigma}$$
    for every $x \in \bbR$ as $\sigma \ge 1$.
\end{proof}
 
\begin{proposition}
\label{technical_estimate_thm1_4}
    Fix a compact set $K \subseteq \bbR^d$. There exists a constant $C>0$ such that for all $y \in K$ and $m > 5$
    $$ \bbE^{y} \lt[   \e{I_0 ^{m}} \one_{K_{[2,m-2]}}  \one_{x_{m} \in K} \rt] \le C  \bbE \lt[   \e{I_0 ^{m+1}} \one_{K_{[3,m-2]}}  \one_{x_{m+1} \in K}\rt].$$
\end{proposition}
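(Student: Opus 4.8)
\noindent\emph{Proof plan.} The plan is to trade the fixed starting point $y\in K$ for the origin, at the cost of one extra unit of time and a multiplicative constant depending only on $K$ (uniformly in $m>5$ and in $y\in K$). First I would prepend a unit‑time increment on the right‑hand side: conditioning the length‑$(m+1)$ expectation on $x_1$, restricting to $\{x_1\in K\}$ (on which the Gaussian kernel satisfies $\Pi\ge c_M>0$, since $K=B(0,M)$), discarding the interaction terms that meet $[0,1]$ (these are uniformly bounded by \ref{potential_assumption}, \ref{pair_potential_assumption} and \eqref{eq:time-decay-integrable}, so that $\e{I_0^{m+1}}\ge e^{-c_0}\e{I_1^{m+1}}$), and using the Markov property at time $1$, one obtains after the time shift $s\mapsto s-1$
\[
\bbE\!\left[\e{I_0^{m+1}}\one_{K_{[3,m-2]}}\one_{x_{m+1}\in K}\right]\ \ge\ e^{-c_0}c_M\int_K\!\de y\;\bbE^{y}\!\left[\e{I_0^{m}}\one_{K_{[2,m-2]}}\one_{x_m\in K}\right],
\]
where after the shift the avoidance set $K_{[3,m-2]}$ is implied by $K_{[2,m-2]}$ (a direct comparison of the intervals involved). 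Writing $\Phi(y):=\bbE^{y}[\e{I_0^{m}}\one_{K_{[2,m-2]}}\one_{x_m\in K}]$, it then suffices to prove the Harnack‑type bound $\int_K\Phi(y)\,\de y\ge C^{-1}\Phi(y_0)$ for every $y_0\in K$, uniformly in $m>5$.

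\emph{The Harnack bound.} The functional defining $\Phi$ is rotation invariant ($V$ is radial, $W$ depends only on inter‑time distances, $K$ is a centred ball), so $\Phi$ is radial. I would peel off the first and last unit intervals of the length‑$m$ path using the two‑sided re‑rooting estimate around \eqref{BB_representation} and its time‑reversed counterpart (the argument there only splits at an endpoint unit interval, so it applies verbatim to the avoidance/end‑pinning functional, not just to products $\prod_if_i(x_{t_i})$). Since the Brownian‑bridge weights over these unit intervals lie in $[e^{-c_1},e^{c_1}]$ ($V$ is bounded and $W\le0$), this expresses $\Phi$, up to $m$‑independent constants, as the time‑$1$ heat semigroup applied to a radial function $h(u):=\bbE^{u}[\e{I_0^{m-2}}\one_{(\text{shifted avoidance})}\one_{x_{m-2}\in K}]$. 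The end‑pinning $\one_{x_{m-2}\in K}$ forces the path back into $B(0,M)$, which gives $h$ Gaussian concentration on scale $\sqrt m$: on balls of radius $O(\sqrt m)$ it is comparable, up to constants independent of $m$, to a centred Gaussian density of variance of order $m$. Propositions \ref{move_pinning_to_left_prop} and \ref{pin_move_prop} are precisely the tools that make this comparison quantitative and uniform in $m$ (the former removing pinning‑length fluctuations, the latter matching Gaussian steps of nearby variances centred at arbitrary points). Convolving such an $h$ with the unit Gaussian and comparing values at two points of $K$ then changes $\Phi$ by at most a factor $e^{O(M^{2}/m)}\le e^{O(M^{2})}$, which yields the comparison of $\Phi$ over $K$ with a constant depending only on $K$.

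\emph{Main obstacle.} The difficulty sits entirely in the last step. The avoidance event $K_{[2,m-2]}$ is the indicator of a \emph{non‑convex} subset of path space, hence not quasi‑concave, so neither Anderson's theorem nor Proposition \ref{prop_zero_max} applies to the full integrand; and Proposition \ref{zero_domination} controls the starting‑point dependence only from one side, $\bbE\le\bar c\inf_K\bbE^{\cdot}$, which is the wrong inequality here. The content of the argument is therefore to convert the end‑pinning $\one_{x_m\in K}$ into quantitative, $m$‑uniform Gaussian control of the law of the path at an order‑one time, so that the starting point may be moved freely inside the fixed compact set $K$ at the cost of only a bounded factor.
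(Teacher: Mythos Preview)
Your first step---conditioning on $x_1\in K$ to reduce the claim to a Harnack-type bound $\int_K\Phi\ge C^{-1}\Phi(y_0)$---is a natural idea, and you correctly diagnose the obstacle: the avoidance indicator $\one_{K_{[2,m-2]}}$ is not quasi-concave, so Proposition~\ref{prop_zero_max} is unavailable and Proposition~\ref{zero_domination} gives the wrong inequality. The gap lies in your proposed fix. The assertion that $h(u)=\bbE^u[\e{I_0^{m-2}}\one_{(\text{avoidance})}\one_{x_{m-2}\in K}]$ is ``comparable to a centred Gaussian of variance of order $m$'' is not established, and Propositions~\ref{move_pinning_to_left_prop} and~\ref{pin_move_prop} do not deliver it: they compare Gaussian end-pinnings of nearby variances at a \emph{fixed} centre and say nothing about the spatial profile $u\mapsto h(u)$ in the presence of the avoidance constraint and the weight $\e{I}$. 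In fact, absent structural information on $h$, the bound $(\Pi*h)(y_0)\le C\int_K(\Pi*h)(y)\,\de y$ would require $\Pi(y_0-u)\le C\int_K\Pi(y-u)\,\de y$ pointwise in $u$, and for $u$ far out on the ray through $y_0$ the left side exceeds the right by a factor polynomial in $\|u\|$. So some nontrivial, $m$-uniform decay of $h$ must be supplied, and you have not done so.

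The paper sidesteps the Harnack bound entirely by a hitting-time argument that exploits rotational symmetry directly. With $r=\|y\|$ and $\sigma_r$ the first exit time of $B(0,r)$ for the Brownian motion started at $0$, one restricts to $\{\sigma_r<1\}$, removes the interactions touching $[0,1]$ at bounded cost, and applies the strong Markov property at $\sigma_r$. The point is that $x_{\sigma_r}$ lies on the sphere of radius $r$, and since the remaining functional (the exponential of $I$, the avoidance of the centred ball $K$, and the end-pinning in $K$) is rotation invariant, one has $\bbE^{x_{\sigma_r}}[\,\cdot\,]=\bbE^{y}[\,\cdot\,]$ \emph{deterministically}. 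The only residual randomness is in the endpoint time $m-\sigma_r\in[m-1,m]$, which Proposition~\ref{pin_move_prop} converts to a fixed time at cost $\sqrt{2^d}$. The uniform constant over $y\in K$ then comes from $\bbP(\sigma_r<1)\ge\bbP(\sigma_R<1)>0$, with $R$ the radius of $K$. One passes from $\bbE^0$ to $\bbE^y$ in a single stroke, never needing to compare $\Phi$ at different points of $K$.
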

\begin{proof}
    Fix $y \in K$ and set $r := \Vert y \Vert$. We denote by $\sigma_r (x)$ the first exit time of $B(0,r)$.
    An application of the strong Markov property yields
    \begin{equation}
        \nn
        \begin{split}
            \bbE \lt[  \e{I_0 ^{m}} \one_{K_{[3,m-3]}}  \one_{x_{m} \in K} \rt]&\ge \bbE \lt[ \e{I_0 ^{m}} \one_{\sigma_r < 1} \one_{K_{[3,m-3]}}  \one_{x_{m} \in K} \rt] \\
            &\ge C_1 \bbE \lt[ \e{I_1 ^{m}} \one_{\sigma_r < 1} \one_{K_{[3,m-3]} }  \one_{x_{m} \in K}\rt] \\
            &= C_1 \bbE \lt[ \one_{\sigma_r < 1} \bbE^{x_{\sigma_r}} \lt[ \e{I_{1-\sigma_r} ^{m-\sigma_r}} \one_{K_{[3-\sigma_r,m-3-\sigma_r]}} \one_{\Tilde{x}_{m-\sigma_r} \in K}  \rt] \rt] \\
            &\ge C_2 \bbE \lt[ \one_{\sigma_r < 1} \bbE^{x_{\sigma_r}} \lt[ \e{I_0 ^{m-2}} \one_{K_{[2,m-3]}} \one_{\Tilde{x}_{m-\sigma_r} \in K}  \rt] \rt] \\
            &= C_2 \bbE \lt[ \one_{\sigma_r < 1} \bbE^{x_{\sigma_r}} \lt[ \e{I_0 ^{m-2}} \one_{K_{[2,m-3]} } \bbE^{\Tilde{x}_{m-2}} \lt[ \one_{\overline{x}_{2- \sigma_r} \in K} \rt]   \rt] \rt] =: (*).
        \end{split}
    \end{equation}
    By Proposition \ref{pin_move_prop} we can estimate, taking $Y \sim \caN(0,1)$
    \begin{equation}
        \nn
        \begin{split}
            \bbE^{\Tilde{x}_{m-2}} \lt[ \one_{\overline{x}_{2- \sigma_r} \in K} \rt] &= \bbP ( \Vert \sqrt{2-\sigma_r}Y + \Tilde{x}_{m-2} \Vert \le R) \\
            &\ge \frac{1}{\sqrt{2^d}} \bbP ( \Vert Y + \Tilde{x}_{m-2} \Vert \le R) \\
            &= \frac{1}{\sqrt{2^d}} \bbE^{\Tilde{x}_{m-2}} \lt[ \one_{\overline{x}_1 \in K} \rt].
        \end{split}
    \end{equation}
    By rotational symmetry we find
    $$\bbE^{x_{\sigma_r}} \lt[ \e{I_0 ^{m-1}} \one_{K_{[1,m-3]}} \one_{\Tilde{x}_{m-1} \in K}  \rt] = \bbE^{y} \lt[ \e{I_0 ^{m-1}} \one_{K_{[1,m-3]}} \one_{\Tilde{x}_{m-1} \in K}  \rt],$$
    which yields the final estimate
    \begin{equation}
        \nn
        \begin{split}
            (*) &\ge \frac{C_2}{\sqrt{2^d}}  \bbE \lt[ \one_{\sigma_r < 1} \bbE^{x_{\sigma_r}} \lt[ \e{I_0 ^{m-2}} \one_{K_{[2,m-3]}} \one_{\Tilde{x}_{m-1} \in K}  \rt] \rt] \\
            &\ge \frac{C_3}{\sqrt{2^d}} \bbP(\sigma_r < 1)    \bbE^{y} \lt[ \e{I_0 ^{m-1}} \one_{K_{[2,m-3]}} \one_{\Tilde{x}_{m-1} \in K}  \rt] \\
            &\ge \frac{C_3}{\sqrt{2^d}} \bbP(\sigma_R < 1)   \bbE^{y} \lt[ \e{I_0 ^{m-1}} \one_{K_{[2,m-3]}} \one_{\Tilde{x}_{m-1} \in K}  \rt].
            \qedhere
        \end{split}
    \end{equation}
\end{proof}

\begin{proof}[Proof of Proposition \ref{approx_markov}]
    Define $\sigma_L := \inf\{ t \ge i : x_t \in K\}$, $\sigma_R := \inf\{ t \ge j : x_t \in K\}$.
    By definition 
    $D_i = \{ i \le \sigma_L  \le i+1\}$ and the same for $D_j$.
    Applying the strong Markov property yields
    \begin{equation}
        \nn
        \begin{split}
            & \bbE \lt[ \e{I_0 ^T} \one_{\sigma_L \le i+1} \one_{K_{[i+1,j]}} \one_{\sigma_R \le j+1} \one_{x_T \in K} \rt] \\
            &\le C_1  \bbE \lt[ \e{I_0 ^{i-1}} \one_{\sigma_L \le i+1} 
            \e{I_{i+1} ^{j-1}}
            \one_{K_{[i+1,j]}} \one_{\sigma_R \le j+1}
            \e{I_{j+1} ^{T}}
            \one_{x_T \in K} \rt]\\
            &= C_1 \bbE \lt[ \e{I_0 ^{i-1}} \bbE^{x_{i-1}} \lt[ \one_{1 \le \sigma_L \le 2} \e{I_{2} ^{j-i}}
            \one_{K_{[2,j-i+1]}} \one_{\sigma_R \le j-i+2}
            \e{I_{j-i+2} ^{T-i+1}}
            \one_{\Tilde{x}_{T-i+1} \in K}
        \rt] \rt] \\
        &= C_1 \bbE \Big[ \e{I_0 ^{i-1}} \bbE^{x_{i-1}} \Big[ \one_{1 \le \sigma_L \le 2} \bbE^{\Tilde{x}_{\sigma_L}} \Big[  \e{I_{2-\sigma_L} ^{j-i-\sigma_L}} \times \\
        & \qquad \times 
            \one_{K_{[2-\sigma_L,j-i+1-\sigma_L]}} \one_{\sigma_R \le j-i+2-\sigma_L}
            \e{I_{j-i+2-\sigma_L} ^{T-i+1-\sigma_L}}
            \one_{\overline{x}_{T-i+1-\sigma_L} \in K} \Big] \Big] \Big] \\
            &= C_1 \bbE \Big[ \e{I_0 ^{i-1}} \bbE^{x_{i-1}} \Big[ \one_{1 \le \sigma_L \le 2} \bbE^{\Tilde{x}_{\sigma_L}} \Big[  \e{I_{2-\sigma_L} ^{j-i-\sigma_L}} \times \\
            & \qquad \times  
            \one_{K_{[2-\sigma_L,j-i+1-\sigma_L]}} \one_{\sigma_R \le j-i+2-\sigma_L} \bbE^{\overline{x}_{{j-i+1}}} \Big[
            \e{I_{1-\sigma_L} ^{T-j-\sigma_L}}
            \one_{x^\dag _{T-j-\sigma_L} \in K} \Big] \Big]\Big]\Big] \\
            &\le C_2 \bbE \Big[ \! \e{I_0 ^{i-1}} \!\! \bbE^{x_{i-1}} \Big[ \one_{1 \le \sigma_L \le 2} \bbE^{\Tilde{x}_{\sigma_L}} \Big[\!  \e{I_{2} ^{j-i}} \!\!
            \one_{K_{[2,j-i-1]}} \! \one_{\sigma_R \le j-i+2-\sigma_L} \bbE \Big[\!
            \e{I_{0} ^{T-j}} \!\!
            \one_{x^\dag _{T-j-\sigma_L} \in K} \Big] \Big]\Big]\Big] \!\! =:\!\! (*).\\
        \end{split}
    \end{equation}
    Here the last inequality follows from Proposition \ref{prop_zero_max}, re-introducing all terms in the exponential and noticing $\one_{K_{[2-\sigma_L,j-i+1-\sigma_L]}} \le \one_{K_{[2,j-i-1]}} $. Our goal is now to remove the stopping time $\sigma_L$ from the inner expectations. By doing this from right-to-left we can also replace $\one_{D_i}$ by $\one_{x_{i+1} \in K}$, which yields the desired result.
    
    In a first step we can estimate 
    $$ \bbE \lt[\e{I_{0} ^{T-j}}\one_{x^\dag _{T-j-\sigma_L} \in K} \rt]  \le C \bbE \lt[\e{I_{0} ^{T-j}}\one_{x^\dag _{T-j} \in K} \rt] $$
    using once again Proposition \ref{pin_move_prop}, which makes this term deterministic. Applying $3$ times Proposition \ref{move_pinning_to_left_prop} one can estimate
    $$ \bbE \lt[\e{I_{0} ^{T-j}}\one_{x^\dag _{T-j} \in K} \rt] \le C \bbE \lt[\e{I_{0} ^{T-j-3}}\one_{x^\dag _{T-j-3} \in K} \rt].$$
    This yields
    \begin{equation}
        \label{double_star}
        (*) \le C \bbE \Big[\!\e{I_{0} ^{T-j-3}}\one_{x^\dag _{T-j-3} \in K} \Big] \bbE \Big[\! \e{I_0 ^{i-1}} \bbE^{x_{i-1}} \Big[ \one_{1 \le \sigma_L \le 2} \bbE^{\Tilde{x}_{\sigma_L}} \Big[\!  \e{I_{2} ^{j-i}}
            \one_{K_{[2,j-i-1]}} \one_{\sigma_R \le j-i+2-\sigma_L} \Big]\Big]\Big].
    \end{equation}
    Moreover, for each $y \in \bbR^d$ and abbreviating $l = j-i +1$
    \begin{equation}
        \nn
        \begin{split}
            \bbE^{y} \lt[  \e{I_{2} ^{l-1}}
            \one_{K_{[2,l-2]}} \one_{\sigma_R \le l+1-\sigma_L}  \rt] &\le C \bbE^{y} \lt[  \e{I_{2} ^{l-2}}
            \one_{K_{[2,l-2]}} \bbE^{x_{l-2}} \lt[ \one_{2 - \sigma_L \le \sigma_R \le 3-\sigma_L}\rt] \rt] =: (**). 
        \end{split}
    \end{equation}
Continuing to estimate the second expectation, for all $z \in \bbR^d$ and $t \in [0,1]$
\begin{equation}
\label{existence_vs_endpoint_pin}
    \begin{split}
        \bbE^{z} \lt[ \one_{1 - t \le \sigma_R \le 2-t}\rt] &= \bbE^{z} \lt[ \one_{1 - t \le \sigma_R \le 2-t}  \frac{\bbP(x_{2-\sigma_R} \in K)}{\bbP(x_{2-\sigma_R} \in K)} \rt] \\
        &\le C \bbE^{z} \lt[ \one_{1 - t \le \sigma_R \le 2-t}  \bbP^{x_{\sigma_R}}(x_{2-\sigma_R } \in K) \rt] \\
        &= C \bbE^{z} \lt[ \one_{1 - t \le \sigma_R \le 2-t} \one_{x_2 \in K} \rt] \\
        &\le C \bbE^{z} \lt[\one_{x_2 \in K} \rt]
    \end{split}
\end{equation}
and so
$$(**) \le C \bbE^{y} \lt[  \e{I_{2} ^{l-2}}
            \one_{K_{[2,l-2]}} \one_{x_{l}} \in K \rt] \le C_2 \bbE^{y} \lt[  \e{I_{2} ^{l}}
            \one_{K_{[2,l-2]}} \one_{x_{l} \in K} \rt].$$
It is now possible to use Proposition \ref{technical_estimate_thm1_4} to find, for all $y \in K$
$$\bbE^{y} \lt[  \e{I_{2} ^{l}}
            \one_{K_{[2,l-2]}} \one_{x_{l} \in K} \rt] \le C_3 \bbE \lt[  \e{I_{0} ^{l+1}}
            \one_{K_{[3,l-2]}} \one_{x_{l+1} \in K} \rt].$$
Plugging this estimate into (\ref{double_star}) gives
$$(\ref{double_star}) \le C_2 \bbE \Big[\e{I_{0} ^{T-j-3}}\!\!\!\one_{x^\dag _{T-j-3} \in K} \Big] \bbE \lt[  \e{I_{0} ^{j-i+2}} \!\!\!
            \one_{K_{[3,j-i-1]}} \one_{x_{j-i+2} \in K} \rt] \bbE \lt[ \e{I_0 ^{i-1}} \bbE^{x_{i-1}} \lt[ \one_{1 \le \sigma_L \le 2}\rt]\rt]. $$
Finally, re-doing calculation (\ref{existence_vs_endpoint_pin}) yields the estimate
$$\bbE^{x_{i-1}} [ \one_{1 \le \sigma_L \le 2}] \le C \bbE^{x_{i-1}} [ \one_{x_2 \in K}].$$
Putting all this together leads to
\begin{equation}
    \nn
    \begin{split}
        &\bbE \lt[  \e{I_0 ^T} \one_{\sigma_L \le i+1} \one_{K_{[i+1,j-1]}} \one_{\sigma_R \le j+1} \one_{x_T \in K} \rt] \\
&\le C \bbE \lt[ \e{I_0 ^{i+1}} \one_{x_{i+1 \in K}}\rt] \bbE \lt[ \e{I_0 ^{j-i + 2}} \one_{K_{[3,j-i -1]}} \one_{x_{j-i + 2 \in K}}\rt] \bbE \lt[ \e{I_0 ^{T-j-3}} \one_{x_{T-j-3} \in K}\rt]. \qedhere
    \end{split}
\end{equation}
\end{proof}

% \clearpage
\begin{appendix}
\section{Two basic results from quantum theory}
    \begin{theorem}
    \label{potential_well_parameter_existence}
    Fix $d \geq 3$. Let $V$ be a potential well with radius $r>0$ and take 
    $H_p = - \frac{\Delta}{2} + \delta V.$
    There exists $\delta^* >0$ such that for all $\delta \geq \delta^*$, $H_p$ has a ground state. The ground state energy $E_0$ is smaller than $-\delta 3/4$ for $\delta$ large enough.
\end{theorem}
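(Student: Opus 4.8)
The plan is to combine Weyl's theorem on the essential spectrum with the Rayleigh--Ritz variational principle, using a single trial function supported inside the well. Note first that $d\ge 3$ is not really needed for this statement: in dimensions $1$ and $2$ every nontrivial attractive well already binds, so one could take $\delta^*$ arbitrarily small. Thus the point of the theorem is only an upper bound on the binding threshold, and I will produce one directly; the sharpness of $\delta^*$ (which is where $d\ge 3$ would matter) plays no role.

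First I would identify the essential spectrum. Since $\delta V$ is bounded and supported in $B(0,r)$, it is a relatively compact perturbation of $-\tfrac{1}{2}\Delta$; in particular $H_p$ is self-adjoint and bounded below, and by Weyl's theorem $\sigma_{\mathrm{ess}}(H_p)=\sigma_{\mathrm{ess}}(-\tfrac{1}{2}\Delta)=[0,\infty)$. Hence $\sigma(H_p)\cap(-\infty,0)$ consists only of isolated eigenvalues of finite multiplicity, so if $E_0:=\inf\mathrm{spec}(H_p)<0$ then $E_0$ is itself an eigenvalue and $H_p$ has a ground state.

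Next I would push $E_0$ below zero with a trial function. Let $\phi$ be the positive, $L^2$-normalized first Dirichlet eigenfunction of $-\Delta$ on $B(0,r)$, extended by zero to $\bbR^d$, and let $\lambda_1(r)$ be the corresponding Dirichlet eigenvalue (a fixed constant, of order $r^{-2}$, independent of $\delta$). Then $\phi\in H^1(\bbR^d)$, $\tfrac{1}{2}\|\nabla\phi\|_2^2=\tfrac{1}{2}\lambda_1(r)$, and since $\phi$ is supported in $B(0,r)$, where the attractive well contributes $-\delta$, the Rayleigh quotient is exactly $\la\phi,H_p\phi\ra=\tfrac{1}{2}\lambda_1(r)-\delta$. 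By the variational principle $E_0\le\tfrac{1}{2}\lambda_1(r)-\delta$, so for every $\delta\ge\delta^*:=3\lambda_1(r)$ one has $E_0\le\tfrac{1}{2}\lambda_1(r)-\delta<-\tfrac{3}{4}\delta<0$. Combined with the previous paragraph this gives a ground state together with the claimed energy bound $E_0<-\tfrac{3}{4}\delta$, both for all $\delta\ge\delta^*$.

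I do not anticipate any genuine difficulty; the statement is soft. The only steps deserving a line of justification are that the zero-extension of the Dirichlet eigenfunction indeed lies in the form domain $H^1(\bbR^d)$, and the standard spectral fact used above that the bottom of the spectrum, when it lies strictly below the essential spectrum, is an eigenvalue. If one prefers to avoid Bessel zeros, one may instead use the Lipschitz bump $\psi(x)=(1-\|x\|/r)_+$, for which $\tfrac{1}{2}\|\nabla\psi\|_2^2/\|\psi\|_2^2$ equals an explicit dimensional constant times $r^{-2}$, and carry out the identical computation with the corresponding value of $\delta^*$.
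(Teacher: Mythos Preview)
Your proposal is correct and follows the same variational idea the paper gestures at: the paper's entire proof reads ``Choose any trial state that does not vanish inside the potential well,'' so you have simply written out what that sentence means, with the Dirichlet eigenfunction as a concrete choice and Weyl's theorem supplying the fact that a negative variational energy yields an eigenvalue. Nothing to change.
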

\begin{proof}
    Choose any trial state that does not vanish inside the potential well.
\end{proof}

\begin{corollary}
    \label{ground_state_exponential_growth}
    Consider the Schrödinger operators from Theorem \ref{potential_well_parameter_existence} and fix $V$. There exists $c>0$ such that for $\delta \geq \delta^*$ and $T$ sufficiently large
    $$\bbE^\bbP \lt[ \exp\lt(-\delta \int_0 ^T \de s V(x_s) \rt) \rt] \geq \exp(cT).$$
    Moreover, we can choose $c = \delta/2$ for $\delta$ large enough.
\end{corollary}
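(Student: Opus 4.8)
The plan is to read off the exponential growth from the Feynman--Kac formula and the spectral theorem, using the ground state $\psi_0$ of $H_p$ provided by Theorem~\ref{potential_well_parameter_existence}. Since $V$ is a potential well we have $-\delta V\geq 0$, and since $V$ is bounded with compact support it is Kato class, so
$$
\bigl(\e{-tH_p}f\bigr)(x)=\bbE^x\Bigl[\e{-\delta\int_0^t V(x_s)\de s}f(x_t)\Bigr],\qquad f\in L^2(\bbR^d),\ t\geq 0 .
$$
The quantity to be bounded below is $\bbE^{\bbP}\bigl[\e{-\delta\int_0^T V(x_s)\de s}\bigr]=\bbE^0\bigl[\e{-\delta\int_0^T V(x_s)\de s}\bigr]$, which is finite because $-\delta V\geq 0$ has compact support. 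To stay inside $L^2$ I would fix a ball $K\subseteq\bbR^d$ (chosen below) and work with the test function $\one_K$.

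First I would peel off a unit time interval at the start: for $T>1$ the Markov property yields
$$
\bbE^0\Bigl[\e{-\delta\int_0^T V(x_s)\de s}\Bigr]=\bbE^0\Bigl[\e{-\delta\int_0^1 V(x_s)\de s}\,\bbE^{x_1}\bigl[\e{-\delta\int_0^{T-1} V(x_s)\de s}\bigr]\Bigr]\geq c_K\,\la\one_K,\e{-(T-1)H_p}\one_K\ra ,
$$
where the inequality uses $\e{-\delta\int_0^1 V}\geq 1\geq\one_K(x_1)$ together with non-negativity of the inner expectation, that the Brownian transition density at time $1$ is bounded below by a constant $c_K>0$ on $K$, and $\one_K(x_{T-1})\leq 1$. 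For $\delta\geq\delta^*$, Theorem~\ref{potential_well_parameter_existence} provides a ground state $\psi_0$ with $E_0=\inf{\rm spec}\,H_p<0$; since $\psi_0\not\equiv 0$ we may fix $K$ so that $\la\one_K,\psi_0\ra\neq 0$. Keeping only the eigenvalue $E_0$ in the spectral representation of $\e{-(T-1)H_p}$ gives $\la\one_K,\e{-(T-1)H_p}\one_K\ra\geq\e{-(T-1)E_0}\,|\la\one_K,\psi_0\ra|^2$, so that
$$
\bbE^{\bbP}\Bigl[\e{-\delta\int_0^T V(x_s)\de s}\Bigr]\geq c_0\,\e{|E_0|\,T},\qquad c_0:=c_K\,|\la\one_K,\psi_0\ra|^2\,\e{E_0}>0 .
$$

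For any $c<|E_0|$ this gives $c_0\e{|E_0|T}\geq\e{cT}$ once $T$ is large, which proves the first assertion (for instance with $c=|E_0|/2$). For the refinement, Theorem~\ref{potential_well_parameter_existence} also yields $E_0<-\tfrac34\delta$ for $\delta$ large, hence $|E_0|>\tfrac12\delta$ and $c_0\e{|E_0|T}\geq\e{\delta T/2}$ for $T$ large enough; thus $c=\delta/2$ works once $\delta$ is large. I expect no serious obstacle: the argument is bookkeeping around Feynman--Kac and the spectral theorem, the only points needing (standard) care being the validity of Feynman--Kac and of the spectral decomposition for $H_p$ with bounded compactly supported $V$.
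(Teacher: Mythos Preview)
Your argument is correct. Both your proof and the paper's run the same engine (Feynman--Kac plus the spectral theorem, extracting the ground-state contribution $\e{-TE_0}|\la\psi,\psi_0\ra|^2$ with $E_0<0$), but they build the bridge between the pointwise quantity $\bbE^0[\e{-\delta\int_0^T V}]$ and an $L^2$ inner product $\la\psi,\e{-tH_p}\psi\ra$ in opposite directions. The paper bounds the inner product \emph{from above} by the expectation started at $0$, invoking Proposition~\ref{prop_zero_max} (Anderson's inequality, using the radial quasi-concave structure of $V$). You instead bound the expectation \emph{from below} by $c_K\,\la\one_K,\e{-(T-1)H_p}\one_K\ra$ via the Markov property and the elementary fact that the unit-time Gaussian kernel is bounded below on a compact set. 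Your route is slightly more self-contained, since it avoids importing Proposition~\ref{prop_zero_max}; the paper's route has the advantage that the comparison $\bbE^x[\cdots]\le\bbE^0[\cdots]$ is reused elsewhere. One small point: your sentence ``since $\psi_0\not\equiv 0$ we may fix $K$ so that $\la\one_K,\psi_0\ra\neq 0$'' is safest justified by noting (as the paper does) that $\psi_0$ can be taken strictly positive, so any ball works.
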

\begin{proof}
    By Theorem \ref{potential_well_parameter_existence} it is possible to take $\delta > 0$ such that there exists a ground state with ground state energy $E_0 < 0$ for the Schrödinger operator
    $$H = - \frac{\Delta}{2} + \delta V,$$
    with $V(x) = - \one_{[0,r]}(\Vert x \Vert)$. We denote the corresponding normalized eigenvector by $\xi_0$ and choose it w.l.o.g. strictly positive. Take $\psi \in L^2(\bbR^d)$ such that 
    $$\la \psi, \e{- tH + tE_0} \psi \ra \rightarrow \la \psi, \xi_0 \ra ^2.$$
    By picking a $\psi \le 1$, quasi-concave and non-orthogonal to $\xi_0$ we apply Proposition \ref{prop_zero_max} and find
    \begin{equation}
        \nonumber
        \begin{split}
            \la \psi, \e{- tH + tE_0} \psi \ra &= \e{tE_0} \int \de x_0 \psi(x_0) \bbE^{x_0} \lt[ \e{\delta V_0 ^t (x)} \psi(x_t) \rt] \\
            &\leq \e{tE_0} c \bbE \lt[ \e{\delta V_0 ^t (x)} \psi(x_t) \rt] \\
            &\leq \e{tE_0} c \bbE \lt[ \e{\delta V_0 ^t (x)}\rt],
        \end{split}
    \end{equation}
    which shows the claim since the L.H.S. in the above display converges for $t \rightarrow \infty$ to something greater $0$.
\end{proof}

\section{Criteria for the existence of ground states}
Let us recall the following fact from measure theory:
%\mscomment{Should the denominator have an exponent of $2t$?} ts: Yes!
\begin{proposition}
\label{atom}
    Let $\mu$ be a finite regular Borel measure on $\bbR$, with $\inf\text{supp} \: \mu = 0$. Then,
    \begin{equation}
         \label{inf supp atom criterion}
         \liminf_{t \rightarrow \infty} \frac{\int \e{-tx} \mu(\de x)}{\lt( \int \e{-2tx} \mu(\de x) \rt)^{1/2}} > 0 \: \:\:\:\:\: \iff \:\:\:\:\:\: \mu (\{ 0 \}) > 0.
    \end{equation}
\end{proposition}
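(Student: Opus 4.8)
The plan is to set $f(t):=\int e^{-tx}\,\mu(\de x)$ and $R(t):=f(t)/\sqrt{f(2t)}$. Since $\inf\operatorname{supp}\mu=0$ forces $\mu\neq 0$ and $\operatorname{supp}\mu\subseteq[0,\infty)$, for every $t\ge 0$ the integrand satisfies $0\le e^{-tx}\le 1$ on $\operatorname{supp}\mu$, so $0<f(t)\le\mu(\bbR)<\infty$; moreover $e^{-tx}\to\one_{\{0\}}(x)$ pointwise as $t\to\infty$, bounded by $1$, so dominated convergence gives $f(t)\to\mu(\{0\})$. The direction ``$\Leftarrow$'' is then immediate: if $\mu(\{0\})>0$, then $f(t)\ge\mu(\{0\})$ for all $t$ while $f(2t)\le\mu(\bbR)$, hence $R(t)\ge\mu(\{0\})/\sqrt{\mu(\bbR)}>0$ uniformly in $t$, and the $\liminf$ is positive.

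For ``$\Rightarrow$'' I argue by contraposition. Assume $\mu(\{0\})=0$, so $f(t)\to 0$, and suppose for contradiction that $\liminf_{t\to\infty}R(t)=c>0$. Then $R(t)>c/2$ for all large $t$; evaluating at $t=2^n$ and rearranging gives $f(2^{n+1})<(4/c^2)f(2^n)^2$ for all $n\ge n_1$, i.e.\ the sequence $a_n:=\log f(2^n)$ satisfies $a_{n+1}\le 2a_n+K$ with $K:=\log(4/c^2)$. Setting $b_n:=a_n+K$ this reads $b_{n+1}\le 2b_n$. Since $a_n\to-\infty$, I may fix $n_0\ge n_1$ with $b_{n_0}<0$; induction then yields $b_n\le 2^{\,n-n_0}b_{n_0}$ for $n\ge n_0$, hence $f(2^n)\le e^{-K}\exp(-c'2^n)$ for a suitable constant $c'>0$ and all large $n$. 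Thus $f$ decays exponentially along the geometric sequence $(2^n)_n$.

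This is contradicted by a crude lower bound. Because $0\in\operatorname{supp}\mu$, the measure $\mu$ charges every interval $(-\epsilon,\epsilon)$; combined with $\mu(\{0\})=0$ and $\mu\big((-\infty,0)\big)=0$ this gives $\mu\big((0,\epsilon]\big)>0$ for every $\epsilon>0$. Taking $\epsilon:=c'/2$ we get $f(2^n)\ge e^{-2^n\epsilon}\mu\big((0,\epsilon]\big)=e^{-c'2^{\,n-1}}\mu\big((0,\epsilon]\big)$, and comparing with the upper bound forces $\mu\big((0,\epsilon]\big)\le e^{-K}\exp(-c'2^{\,n-1})\to 0$, a contradiction. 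Hence $\liminf_{t\to\infty}R(t)=0$, which finishes the proof. The argument is entirely elementary; the only points requiring a little care are the bookkeeping in the iteration --- in particular the choice of $n_0$ with $b_{n_0}<0$, which is precisely where $f(t)\to 0$ (equivalently $\mu(\{0\})=0$) is used --- and the observation, from $\inf\operatorname{supp}\mu=0$ together with the absence of an atom at $0$, that $\mu\big((0,\epsilon]\big)>0$.
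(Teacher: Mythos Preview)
Your proof is correct, but the route is genuinely different from the paper's. The paper argues the ``only if'' direction directly: it splits the numerator as $\int_{(-1/n,1/n)} e^{-tx}\,\mu(\de x)+e^{-t/n}\mu(\bbR)$, bounds the first term via Cauchy--Schwarz by $\mu((-1/n,1/n))^{1/2}$ times the denominator, and shows the second term is negligible against the denominator using the crude lower bound $\int e^{-2tx}\,\mu(\de x)\ge e^{-t/2n}\mu([0,1/4n])$. Letting $t\to\infty$ and then $n\to\infty$ gives the result in two lines.

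Your argument instead exploits the functional inequality $f(2t)<(4/c^2)f(t)^2$ that follows from $R(t)>c/2$, and iterates it along the dyadic sequence $t=2^n$ to force $f(2^n)$ to decay doubly exponentially, which is then contradicted by the elementary lower bound $f(2^n)\ge e^{-2^n\epsilon}\mu((0,\epsilon])$. This is a nice self-improvement trick and avoids Cauchy--Schwarz entirely; the paper's approach, on the other hand, is shorter and makes the role of the denominator more transparent (it is precisely the Cauchy--Schwarz normalisation). Both arguments use the same two basic facts---$f(t)\to\mu(\{0\})$ by dominated convergence, and $\mu((0,\epsilon])>0$ from $0\in\operatorname{supp}\mu$---so the difference is purely in how the quantitative step is organised.
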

\begin{proof}
    The if part is clear by dominated convergence. For the only if part, assume that $\mu(\{0\}) = 0$. 
Since $\mu$ is regular, $0 = \lim_{n \to \infty} \mu((-1/n,1/n))$, and 
\[
\int \e{-tx} \mu(\dd x) \leq \int_{-1/n}^{1/n} \e{-tx} \mu(\dd x) + \e{-t/n} \mu(\bbR).
\]
The first term above is bounded by 
$\left(\int_{-1/n}^{1/n} \e{-2tx} \mu(\dd x) \right)^{1/2} \mu((-1/n,1/n))^{1/2}$ due to the Cauchy-Schwarz 
inequality. Thus the quotient of this term with the denominator in \eqref{inf supp atom criterion} can be made 
arbitrarily small by taking $n$ large. The second term, divided by the denominator of \eqref{inf supp atom criterion}, 
converges to zero for each $n$, as $t \to \infty$. 
To see this, it suffices to note that $\int \e{2 t x} \mu(\dd x) \geq \e{-t/2n} \mu([0,1/4n])$, and the latter expression 
is nonzero since $\inf\supp \mu = 0$. Thus the `only if' part of the claim holds.
\end{proof}
Define 
$$\gamma_{\psi}(t) := \frac{(\psi,\e{-tH}\psi)}{(\psi,\e{-2tH}\psi)^{1/2}},$$
where $H$ is a self-adjoint operator on the Hilbert space $\caH = L^2 (X, \de m)$ in which $X$ is a measure space and $m$ is regular. We denote the inner product of $\caH$ by $(\cdot,\cdot)$.
For the next statement, recall the notion of a positivity improving operator: let $A$ be a bounded operator on  
$L^2(X, \dd \mu)$ for some measure space $(X, \mu)$. If 
$f \geq 0$ $\mu$-almost everywhere implies $Af > 0$ 
$\mu$-almost everywhere, then $A$ is called 
{\em positivity improving}. It is known that the particle-field Hamiltonian \eqref{hamiltonian} is positivity improving, see e.g.\ \cite{HiLo20}, chapter 2. Therefore the following result applies to our situation.   

\begin{theorem}
\label{ground_state_options}
Let $H$ be an operator in $\caH$, bounded below, and assume that $\e{-tH}$ is positivity improving for all large enough $t>0$. Then, the following are equivalent:
\begin{enumerate}
    \item $H$ has a unique, strictly positive ground state.
    \item There exists $\psi \in \caH$ with $\psi \geq 0$ and $\liminf\limits_{t \rightarrow \infty} \gamma_\psi (t) > 0$.
    \item For all $\psi \in \caH$ with $\psi \geq 0$ and $m(\psi > 0) >0$, $\liminf\limits_{t \rightarrow \infty} \gamma_{\psi} (t) > 0$ holds.
\end{enumerate}
\end{theorem}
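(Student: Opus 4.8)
The plan is to pass to scalar spectral measures and reduce everything to Proposition~\ref{atom}. For $\psi\in\caH$ write $\mu_\psi$ for the spectral measure of $H$ at $\psi$, so that $(\psi,f(H)\psi)=\int f\,\de\mu_\psi$ for bounded Borel $f$ and $\gamma_\psi(t)=\frac{\int\e{-tx}\,\mu_\psi(\de x)}{(\int\e{-2tx}\,\mu_\psi(\de x))^{1/2}}$. Here $\mu_\psi$ is a finite (hence regular) Borel measure on $\bbR$ with $\supp\mu_\psi\subseteq{\rm spec}\,H\subseteq[\inf{\rm spec}\,H,\infty)$, and $\gamma_\psi$ is unchanged under $H\mapsto H-c$ for any constant $c$. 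These two facts let us translate the spectrum freely and bring Proposition~\ref{atom} (stated for $\inf\supp\mu=0$) into play in every step. I would establish the cycle $(1)\Rightarrow(3)\Rightarrow(2)\Rightarrow(1)$.

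For $(1)\Rightarrow(3)$: shift so that $E_0:=\inf{\rm spec}\,H=0$ and let $\Phi_0>0$ be the normalized ground state, so by uniqueness the spectral projection onto $\ker H$ is $P_0=(\Phi_0,\cdot\,)\Phi_0$. For $\psi\geq0$ with $m(\psi>0)>0$, strict positivity of $\Phi_0$ gives $(\Phi_0,\psi)=\int\Phi_0\psi\,\de m>0$, whence $\mu_\psi(\{0\})=\|P_0\psi\|^2=|(\Phi_0,\psi)|^2>0$; together with $\supp\mu_\psi\subseteq[0,\infty)$ this forces $\inf\supp\mu_\psi=0$, and Proposition~\ref{atom} gives $\liminf_t\gamma_\psi(t)>0$. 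The implication $(3)\Rightarrow(2)$ is immediate, since $\psi=|g|$ for any $g\in\caH\setminus\{0\}$ is an admissible test vector.

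The substantial direction is $(2)\Rightarrow(1)$, and this is where the positivity-improving hypothesis is genuinely used; I expect it to be the only real obstacle. Given $\psi\geq0$, $\psi\neq0$, with $\liminf_t\gamma_\psi(t)>0$, I would set $a:=\inf\supp\mu_\psi$, translate $H$ by $-a$, and apply Proposition~\ref{atom} to get $\mu_\psi(\{a\})>0$; equivalently $\Phi:=P_{\{a\}}\psi\neq0$ is an eigenvector with $H\Phi=a\Phi$. The next step upgrades $\Phi$ to a strictly positive vector: by the spectral theorem and dominated convergence (using $\supp\mu_\psi\subseteq[a,\infty)$) one has $\e{ta}\e{-tH}\psi\to P_{\{a\}}\psi=\Phi$ in $\caH$ as $t\to\infty$, and since $\e{-tH}$ is positivity preserving and $\psi\geq0$ each $\e{ta}\e{-tH}\psi\geq0$ a.e., so $\Phi\geq0$ a.e. (pass to an a.e.\ convergent subsequence); then $\e{-tH}\Phi=\e{-ta}\Phi$ combined with positivity improvement forces $\Phi>0$ a.e. Finally, applying the Perron--Frobenius theorem for positivity improving operators to $A=\e{-t_0H}$ for a fixed large $t_0$ — a bounded, self-adjoint, positive operator, positivity improving, possessing the strictly positive eigenvector $\Phi$ — identifies the eigenvalue $\e{-t_0a}$ with $\|A\|=\e{-t_0\inf{\rm spec}\,H}$, so $a=\inf{\rm spec}\,H$; it is simple; and $\Phi$ (normalized) is the unique strictly positive ground state of $H$. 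This is exactly statement $(1)$.

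The only ingredient I would cite rather than reprove is this last Perron--Frobenius step (if a positivity improving operator has a nonzero nonnegative eigenvector, then the corresponding eigenvalue equals the operator norm, is simple, and the eigenvector is strictly positive); in the self-adjoint $L^2$ setting this is classical, see e.g.\ \cite{HiLo20} and references therein. Everything else is bookkeeping with spectral measures on top of the already-established Proposition~\ref{atom}; the one point requiring mild care is the repeated use of translation-invariance of $\gamma_\psi$ so that Proposition~\ref{atom} applies with the support normalization it was stated under.
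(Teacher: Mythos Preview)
Your proof is correct, and the implications $(1)\Rightarrow(3)\Rightarrow(2)$ are handled essentially as in the paper. The direction $(2)\Rightarrow(1)$ is organized differently. Having found the eigenvalue $a=\inf\supp\mu_\psi$ and the eigenvector $\Phi=P_{\{a\}}\psi$, you first prove $\Phi>0$ a.e.\ (via the strong limit $\e{ta}\e{-tH}\psi\to\Phi$ together with positivity improvement), and then cite a Perron--Frobenius statement for $A=\e{-t_0H}$ to conclude that $\e{-t_0a}=\|A\|$, i.e.\ $a=\inf\mathrm{spec}\,H$, and that the eigenvalue is simple. The paper instead establishes $a=\inf\mathrm{spec}\,H$ directly: after replacing $\psi$ by $\e{-\delta H}\psi>0$, it approximates an arbitrary $\phi$ by bounded functions supported on $\{\psi\geq 1/n\}$, which are pointwise dominated by multiples of $\psi$, and compares the supports of the corresponding spectral measures; only afterwards does it invoke Perron--Frobenius for simplicity and strict positivity.

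The underlying content is the same: the fact that a strictly positive eigenvector forces the eigenvalue to be extremal is proved precisely by the density of the order ideal $\{g:|g|\le C\Phi\text{ for some }C\}$, which is the paper's explicit argument. Your route is more modular and arguably cleaner, but note that the Perron--Frobenius form you need (strictly positive eigenvector $\Rightarrow$ eigenvalue equals the operator norm) is less often stated than its converse; in a self-contained write-up you may want to point to a precise reference (e.g.\ Schaefer's \emph{Banach Lattices and Positive Operators}, Ch.~V, or Faris's lecture notes) or include the two-line density argument that justifies it.
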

\begin{proof}
    $(1) \Rightarrow (3):$ By the spectral theorem, $\lim_{t \to \infty} (\psi, \e{-t H} \psi) = (\psi,\psi_0)$,
where $\psi_0$ is the ground state for $H$. As $\psi_0$ is strictly positive by the Perron-Frobenius theorem, 
(3) holds.\\
$(3) \Rightarrow (2)$ is trivial.\\
$(2) \Rightarrow (1)$: For $\psi$ as in (2), let $\mu_\psi$ be the spectral measure associated with $\psi$. 
Then $(\psi, \e{-t H} \psi) = \int \e{-t x} \mu_\psi(\dd x)$. $\mu_\psi$ is a regular Borel measure with 
$\mu_\psi(\bbR) = \| \psi \|^2$, and $E_\psi := \inf \rm{supp} \: \mu_\psi \ge \inf\rm{spec} \: H = 0$. 
Now (2) and Proposition \ref{atom}
imply that $0 < \mu_{\psi} ( \{E_\psi\}) = \| P_{E_\psi} \psi \|^2$, where $P_{E_\psi}$ is the projection
onto the spectral subspace of $H$ corresponding to $\{E_\psi\}$. Consequently, $E_\psi$ is an eigenvalue,
and $P_{E_\psi} \psi$ is an eigenfunction. It remains to show that $E_\psi = 0$. For showing this, first note 
that we may  assume $\psi > 0$. Namely, recall that for any $\phi \in \caH$, 
\[
\inf \rm{supp} \: \mu_\phi = - \lim_{t \to \infty} \frac{1}{t} \ln \int \e{-tx} \mu_\phi (\dd x),
\]
where $\mu_\phi$ is again the spectral measure associated to $\phi$. 
Thus clearly $\inf \rm{supp} \: \mu_\psi = \inf \rm{supp} \: \mu_{\e{- \delta H} \psi}$ for each $\delta > 0$, and 
since $\e{-tH}$ is positivity improving, we have found a strictly positive function such that the 
infimum of its support is equal to that of $\mu_\psi$. Now we show that for any strictly positive function $\psi$,
$E_\psi = \inf \rm{supp} \: \mu_\psi = \inf \rm{spec} \: H = 0$. By spectral theory, 
$E_0 = \inf \{ \inf\rm{supp} \: \mu_\phi: \phi \in A \}$, where $A$ is a dense subset of $\caH$. We choose
$A = \bigcup_{M > 0} \bigcup_{n > 0} A_{M,n}$, with 
\[
A_{M,n} = \{ \phi \in L^2: \sup_x |\phi(x)| \leq M, \phi(x) = 0 \text{ if } \psi(x) < 1/n \}
\]
(note that $\psi > 0$ and the regularity of $m$ imply that $A$ is dense). Now for any $\eps > 0$ choose 
$\phi \in A_{M,n}$ 
with $\inf\supp \mu_\phi < \eps$. Since $\e{-tH}$ improves positivity, we have 
\[
\int \e{-tx} \mu_{|\phi|}(\dd x) \geq \int \e{-tx} \mu_{\phi}(\dd x),
\]
and thus $\inf\rm{supp} \: \mu_{|\phi|} \le	\inf\rm{supp} \: \mu_\phi < \eps$. Now 
$0<|\phi(x)| \leq n M \psi(x)$, and thus 
\[
(|\phi|, \e{-tH}|\phi|) \leq n^2 M^2 (\psi,\e{-tH} \psi).
\] 
Thus $\inf\rm{supp} \: \mu_{\psi} < \eps$, and a ground state exists. It is unique and can be chosen strictly positive by the Perron-Frobenius theorem.
\end{proof}

\end{appendix}

\bigskip
{\bf Acknowledgments:} This research was supported by DFG grant No 535662048. V.B. wants to thank Erwin Bolthausen for many motivating discussions about the topic over the years, and Fumio Hiroshima for a helpful discussion about enhanced binding in various models. We also thank Benjamin Hinrichs for useful comments.

% for CMP:
% \input{data-availability.tex}

% @article{lieb2014equivalence,
%   title={Equivalence of two definitions of the effective mass of a polaron},
%   author={Lieb, Elliott H and Seiringer, Robert},
%   journal={Journal of Statistical Physics},
%   volume={154},
%   pages={51--57},
%   year={2014},
%   publisher={Springer}
% }

% \clearpage

\end{document}